\newtheorem{theorem}{Theorem}[section]
\newtheorem{corollary}{Corollary}[section]
\newtheorem{lemma}[theorem]{Lemma}
\newtheorem{proposition}{Proposition}[section]
\theoremstyle{definition}
\newtheorem{remark}{Remark}[section]
\newtheorem{example}{Example}[section]
\DeclareMathOperator*{\Aut}{Aut}
\numberwithin{equation}{section}
\numberwithin{figure}{section}
 \numberwithin{table}{section}
\title[Skew Constacyclic Codes over Finite Chain Rings]
      {Skew Constacyclic Codes over Finite Chain Rings}
\author[Somphong Jitman, San Ling and Patanee
Udomkavanich]{}
\subjclass{Primary:  94B15,  13B25; Secondary:  94B60.}
 \keywords{code over rings, skew constacyclic code, skew cyclic code, skew negacyclic code, finite chain ring, skew polynomial.}
 \email{pu738241@e.ntu.edu.sg}
 \email{lingsan@ntu.edu.sg}
 \email{pattanee.u@chula.ac.th}
\thanks{
The research of the first and second authors is partially supported by the Singapore Ministry of Education under
Research Grant T208B2206.  The first author is also supported by the Institute for the Promotion of Teaching
Science and Technology of Thailand. }
\begin{document}
\maketitle

\centerline{\scshape Somphong Jitman}
\medskip
{\footnotesize
\centerline{Department of Mathematics, Faculty of Science,}
\centerline{Chulalongkorn University, Bangkok  10330, Thailand} %
\centerline{and} %
\centerline{Division of Mathematical Sciences,}%
\centerline{School of Physical and Mathematical Sciences, Nanyang Technological University,}%
\centerline{21 Nanyang Link, Singapore 637371, Republic of Singapore}
} 

\medskip

\centerline{\scshape San Ling}
\medskip
{\footnotesize
 \centerline{Division of Mathematical Sciences,}%
\centerline{School of Physical and Mathematical Sciences, Nanyang Technological University,}%
\centerline{21 Nanyang Link, Singapore 637371, Republic of Singapore} }

\medskip

\centerline{\scshape Patanee Udomkavanich}
\medskip
{\footnotesize
\centerline{Department of Mathematics, Faculty of Science,}%
\centerline{Chulalongkorn University, Bangkok 10330, Thailand} }
\bigskip

 \centerline{(Communicated by the associate editor name)}

\begin{abstract}
Skew polynomial rings over finite fields (\cite{BoGe2007} and \cite{BoUl2009}) and over Galois rings
(\cite{BoSoUm2008}) have been used to study codes. In this paper, we extend this concept to finite chain rings.
Properties of skew constacyclic codes generated by monic right divisors of $x^n-\lambda$, where $\lambda$ is a
unit element, are exhibited. When $\lambda^2=1$, the generators of Euclidean and Hermitian dual  codes of such
codes are determined together with necessary and sufficient conditions for them to be Euclidean and Hermitian
self-dual. Of more interest are codes over the ring $\mathbb{F}_{p^m}+u\mathbb{F}_{p^m}$. The structure of all
skew constacyclic codes
 is completely determined. This allows us to express generators of
Euclidean and Hermitian dual codes of skew cyclic and skew negacyclic codes   in terms of the generators of the
original codes. An illustration of all skew cyclic codes of length~$2$ over $\mathbb{F}_{3}+u\mathbb{F}_{3}$ and
their Euclidean and Hermitian dual codes is also provided.
\end{abstract}

\section{Introduction}

In the early history of the art of Error Correcting Codes, codes were usually taken over finite fields. In the
last two decades,  interest has been shown in linear codes over rings.   In an important work \cite{HaKu1994},
Calderbank et al. showed that the Kerdock codes, the  Preparata codes and Delsarte-Goethals codes can be
obtained through the Gray images of linear codes over $\mathbb{Z}_4$. In general,  due to their rich algebraic
structure, constacyclic codes have been studied  over various finite chain rings (see, for example,
\cite{AmNe2008}, \cite{BoUd1999}, \cite{HQDSRLP2004}, \cite{Di2005}, \cite{Di2009}, \cite{Di2010},
\cite{QiZa2006}, \cite{NoSa2000},
   \cite{SoEs2009-arbi} and \cite{UdBo1999}). In particular, successful applications of modular lattices using  codes over a finite chain ring $\mathbb{F}_{p}+u\mathbb{F}_{p}$~\cite{Ba1997} and
constructions of good sequences from polynomial residue class rings \cite{Udsi1998} have motivated the study of
constacyclic  codes over a special family of finite chain rings of the form $\mathbb{F}_{p^m}+u\mathbb{F}_{p^m}$
(see, for example, \cite{AmNe2008}, \cite{BoUd1999}, \cite{Di2009}, \cite{Di2010}, \cite{QiZa2006} and
\cite{UdBo1999}).

 Classically, polynomial rings over finite fields or over finite rings and their ideals are
key  to determining the algebraic structures of constacyclic  codes. In~\cite{BoGe2007}, skew (non-commutative)
polynomial rings have been used to
 describe the structure of linear codes closed under  a skew
cyclic shift, namely, skew cyclic codes. Later on, in \cite{BoUl2009}, more properties  and good examples of
such  codes have been established. Recently, in \cite{BoSoUm2008}, that approach has been extended to codes over
Galois rings. Skew constacyclic codes have been studied for a particular case when codes are generated by monic
right divisors of $x^n-\lambda$, where $\lambda$ is a unit in the Galois ring fixed by a given automorphism.

Motivated by these works, we generalize the concept of skew constacyclic codes to over finite chain rings. We
study the algebraic structure and properties of these codes and their Euclidean and Hermitian dual codes. For
arbitrary finite chain rings,  skew constacyclic codes with respect to a unit $\lambda$ are studied in the case
where their generator polynomials are right divisors of $x^n-\lambda$.  Moreover,  all skew constacyclic codes
over a finite chain ring $\mathbb{F}_{p^m}+u\mathbb{F}_{p^m}$ are investigated.

This paper is organized as follows: Results concerning finite chain rings  and skew polynomials over these are
discussed in Section 2 along with some definitions and basic properties of skew constacyclic codes. In Section
3,  the algebraic structure and
 some properties of skew constacyclic codes whose generator polynomials are monic right divisors of $x^n-\lambda$ are
 established.  In many cases,   the structures of    their
 Euclidean and Hermitian
dual codes    are given. Necessary and sufficient conditions for
 them
to be  Euclidean and Hermitian self-dual
 are determined as well.
  A complete  structural classification of skew constacyclic codes over
$\mathbb{F}_{p^m}+u\mathbb{F}_{p^m}$ comes in Section~4.   Moreover, the structures of Euclidean and Hermitian
dual codes of skew cyclic and skew-negacyclic codes  are determined. Some illustration examples of   skew cyclic
codes are also provided.

\section{Preliminaries}
In this section, we recall and derive some useful results concerning finite chain rings and  skew polynomials
over such rings. The definition of a skew constacyclic code is introduced together with some basic properties.

\subsection{Finite Chain Rings}
 A finite commutative ring  with identity $1\neq0$ is called a
\textit{finite chain ring} if its ideals are linearly ordered by inclusion. It is known  that every ideal of a
finite chain ring is principal and its maximal ideal is unique (cf. \cite{Mc1974}). Throughout, let
$\mathcal{R}$ denote a finite chain ring, $\gamma$ a generator of its maximal ideal and $\mathcal{K}$ the
residue field $\mathcal{R}/\langle  \gamma\rangle$.   With these notations, the ideals of $\mathcal{R}$ form
the following chain
\[ \mathcal{R}= \langle  1\rangle \supsetneq \langle  \gamma\rangle \supsetneq \langle  \gamma^2 \rangle\supsetneq \dots \supsetneq \langle  \gamma^{e-1} \rangle \supsetneq \langle  \gamma^e\rangle=\langle  0\rangle .\]
The integer $e$ is called the \textit{nilpotency index} of $\mathcal{R}$. If $\mathcal{K}$ has $q$ elements,
then   $|\mathcal{R}|=q^e$.

Typical examples of finite chain rings which are not fields are the integer residue ring $\mathbb{Z}_{p^e}$, the
Galois ring ${\rm GR}(p^e,m)$ and the ring $\mathbb {F}_{p^m}+u \mathbb{F}_{p^m}+\dots+u^{e-1}\mathbb{F}_{p^m}$,
where $p$ is a prime number and $m,e$ are positive integers such that $e\geq 2$. Note that $\mathbb {F}_{p^m}+u
\mathbb{F}_{p^m}+\dots+u^{e-1}\mathbb{F}_{p^m}:= \{\sum_{i=0}^{e-1}a_iu^i\mid a_i\in \mathbb{F}_{p^m}\}$ is a
ring under the usual addition and multiplication of polynomials in indeterminate $u$ together with the rule
$u^e=0$. This ring is isomorphic to $\mathbb{F}_{p^m}[u]/\langle  u^e\rangle$, the only finite chain ring of
characteristic $p$, nilpotency index $e$, and residue field~$\mathbb{F}_{p^{m}}$ (cf. {\cite[Lemma
1]{ClLi1973}}). The reader can find   further details concerning finite chain rings in \cite{BiFl2002},
\cite{ClDr1973}, \cite{ClLi1973},
 \cite{Mc1974} and~\cite{Wa2003}.

In  \cite{Al-1991} and \cite{Al-1990}, the structure of the automorphism group $\Aut(\mathcal{R})$ of every
finite chain ring $\mathcal{R}$ has been characterized.  Many classes of finite chain rings have nontrivial
automorphism groups. For examples, $\Aut({\rm GR}(p^e,m))$ is non-trivial if and only if $m\geq 2$ (cf.
\cite{BoSoUm2008} and \cite{Wa2003}) and $\Aut(\mathbb {F}_{p^m}+u
\mathbb{F}_{p^m}+\dots+u^{e-1}\mathbb{F}_{p^m})$ is non-trivial if and only if $m\geq 2$ or $p$ is odd or $e\geq
3$ (cf. \cite[Proposition 1]{Al-1991}). Here, the automorphism group of $\mathbb{F}_{p^m}+u \mathbb{F}_{p^m}$ is
given as a corollary of \cite[Proposition 1]{Al-1991}, the complete characterization of the automorphism group
of $\mathbb {F}_{p^m}+u \mathbb{F}_{p^m}+\dots+u^{e-1}\mathbb{F}_{p^m}$.

\begin{corollary}[\cite{Al-1991}]  \label{cor-Auto}
For $\theta\in \Aut(\mathbb{F}_{p^m})$   and   $ \beta\in \mathbb{F}_{p^m}^*$, let \[\Theta_{\theta, \beta}:
\mathbb{F}_{p^m}+u \mathbb{F}_{p^m}\rightarrow \mathbb{F}_{p^m}+u \mathbb{F}_{p^m}\] be  defined by
\[\label{eq-auto2} \Theta_{\theta,\beta}(a+bu)= \theta(a)+\beta\theta(b)
u.
\]
Then  $\Aut(\mathbb{F}_{p^m}+u \mathbb{F}_{p^m})=\{\Theta_{\theta,\beta} \mid \theta\in \Aut(\mathbb{F}_{p^m})
\text{ and  } \beta\in \mathbb{F}_{p^m}^*\}$.
\end{corollary}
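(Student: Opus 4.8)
The plan is to prove the two set inclusions separately, writing $R:=\mathbb{F}_{p^m}+u\mathbb{F}_{p^m}\cong\mathbb{F}_{p^m}[u]/\langle u^2\rangle$. For the easy inclusion $\{\Theta_{\theta,\beta}\}\subseteq\Aut(R)$, I would verify directly that each $\Theta_{\theta,\beta}$ is a ring automorphism. Additivity is immediate from the definition, and since $(a+bu)(c+du)=ac+(ad+bc)u$ in $R$, a one-line computation comparing $\Theta_{\theta,\beta}\big((a+bu)(c+du)\big)$ with $\Theta_{\theta,\beta}(a+bu)\,\Theta_{\theta,\beta}(c+du)$ shows multiplicativity, using that $\theta$ is a field homomorphism and that the factor $\beta$ attaches only to the $u$-component. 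It fixes $1$, and it is injective because $\theta$ is injective and $\beta\neq0$; bijectivity then follows from finiteness of $R$.

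For the reverse inclusion, let $\Phi\in\Aut(R)$. First I would exploit the chain-ring structure: the maximal ideal $\langle u\rangle$ is exactly the set of nilpotent elements of $R$, since $u^2=0$ forces every element of $\langle u\rangle$ to square to zero while $R/\langle u\rangle\cong\mathbb{F}_{p^m}$ has no nonzero nilpotents. Any ring automorphism preserves nilpotency, so $\Phi(\langle u\rangle)=\langle u\rangle$. In particular $\Phi(u)=\beta u$ for some $\beta\in\mathbb{F}_{p^m}$, and $\beta\neq0$ because $\Phi$ is injective and $u\neq0$; thus $\beta\in\mathbb{F}_{p^m}^*$.

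The crucial step is to pin down $\Phi$ on the coefficient field. Because $R$ has characteristic $p$ and $u^p=0$, the Frobenius gives $(a+bu)^{p}=a^{p}$, hence $(a+bu)^{p^m}=a^{p^m}$ for all $a,b$; for $a\in\mathbb{F}_{p^m}$ this equals $a$. Consequently the solution set of $x^{p^m}=x$ in $R$ is precisely the canonical copy $\mathbb{F}_{p^m}\cdot1$ (if $a+bu$ satisfies it then $a+bu=a$, forcing $b=0$). Since the equation $x^{p^m}=x$ is preserved by any ring homomorphism, $\Phi$ maps $\mathbb{F}_{p^m}$ into itself and restricts to a field automorphism $\theta\in\Aut(\mathbb{F}_{p^m})$, so $\Phi(a)=\theta(a)$ for all $a\in\mathbb{F}_{p^m}$. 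Combining this with the previous step, $\Phi(a+bu)=\Phi(a)+\Phi(b)\Phi(u)=\theta(a)+\beta\theta(b)u=\Theta_{\theta,\beta}(a+bu)$, which is exactly the asserted form.

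I expect the main obstacle to be isolating the coefficient field intrinsically, i.e. proving that the $u$-component of $\Phi(a)$ must vanish for $a\in\mathbb{F}_{p^m}$, rather than merely recovering $\theta$ as the induced map on the residue field. The characterization of $\mathbb{F}_{p^m}\cdot1$ as the fixed set of the $p^m$-power map is the clean device that settles this; without it one would have to control the $u$-part by hand through the induced action modulo $\langle u\rangle$. An alternative route is simply to specialize the full classification of $\Aut(\mathbb{F}_{p^m}+u\mathbb{F}_{p^m}+\dots+u^{e-1}\mathbb{F}_{p^m})$ from \cite[Proposition 1]{Al-1991} to the case $e=2$, which is presumably how the corollary is meant to be derived.
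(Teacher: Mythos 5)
Your proof is correct, but it takes a genuinely different route from the paper: the paper offers no argument at all beyond citing \cite[Proposition 1]{Al-1991}, i.e.\ it obtains the statement by specializing Alkhamees's classification of $\Aut(\mathbb{F}_{p^m}+u\mathbb{F}_{p^m}+\dots+u^{e-1}\mathbb{F}_{p^m})$ to $e=2$ --- exactly the ``alternative route'' you mention in your closing sentence. Your argument, by contrast, is self-contained and elementary, and its two structural ingredients are sound: the maximal ideal $\langle u\rangle$ is intrinsically the nilradical, so any $\Phi\in\Aut(R)$ satisfies $\Phi(u)=\beta u$ with $\beta\in\mathbb{F}_{p^m}^*$; and the canonical copy of $\mathbb{F}_{p^m}$ is intrinsically the solution set of $x^{p^m}=x$ (since $(a+bu)^{p}=a^{p}$ in characteristic $p$ with $u^{p}=0$), so $\Phi$ restricts to some $\theta\in\Aut(\mathbb{F}_{p^m})$, giving $\Phi=\Theta_{\theta,\beta}$. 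This second point is precisely the step a careless argument would fumble --- one only gets the induced map on the residue field for free, not that $\Phi$ preserves the chosen coefficient field inside $R$ --- and your fixed-point characterization handles it cleanly. What each approach buys: the paper's citation is shorter and consistent with its general framing (the same reference covers all $e$, where the automorphism groups are more involved), while your proof makes the corollary independent of the reference and explains \emph{why} every automorphism has this split form, which is instructive given how heavily $\Theta_{\theta,\beta}$ is used in Section 4.
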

Note that $\mathbb{F}_{p^m}+u \mathbb{F}_{p^m}$ and its automorphisms play  an important role in later examples
and in Section 4.

\subsection{Skew Polynomial Rings over Finite Chain Rings}
In \cite{BoGe2007}, \cite{BoSoUm2008}, \cite{BoUl2009} and
 \cite{Mc1974}, results concerning skew polynomial rings over finite fields and
over Galois rings have been  studied. Applying the ideas   in these references, the following results over
finite chain rings  are given.

For a given automorphism $\Theta$  of $\mathcal{R}$, the set
  $\mathcal{R}[x;\Theta]=\{a_0+a_1x+\dots+a_nx^n\mid a_i\in \mathcal{R}\text{
and } n\in \mathbb{N}_0\}$ of formal polynomials  forms a ring under the usual addition of polynomials and where
the multiplication is defined using the rule
  $xa=\Theta(a)x$. The multiplication is extended to all elements
  in
$\mathcal{R}[x;\Theta]$
 by associativity  and distributivity.    The ring $\mathcal{R}[x;\Theta]$
 is  called a \textit{skew polynomial ring} over $\mathcal{R}$ and an element in
 $\mathcal{R}[x;\Theta]$ is called a \textit{skew polynomial}.
 It is easily   seen that the ring~$\mathcal{R}[x;\Theta]$ is
 non-commutative unless $\Theta$ is the identity automorphism on $\mathcal{R}$.

 Based on  the canonical reduction modulo $\gamma$, $\bar{ } : \mathcal{R}\rightarrow \mathcal{K}$,
 the automorphism $\bar{\Theta}$ of $\mathcal{K}$ is induced from
 $\Theta$ by \begin{align*} 
\bar{\Theta}(\bar{r}) =\overline{\Theta(r)} \text{ for all } r\in
 \mathcal{R}.
 \end{align*}
Then there is a natural   ring epimorphism extension $\bar{ } : \mathcal{R}[x;\Theta]\rightarrow
\mathcal{K}[x;\bar{\Theta}]$ defined by
\[r_0+r_1x+\dots+r_nx^n\mapsto \bar{r_0}+\bar{r_1}x+\dots+\bar{r_n}x^n.\]
In other words, for each $f(x)\in \mathcal{R}[x;\Theta]$, $\overline{f(x)}$ denotes the componentwise reduction
modulo $\gamma$ of $f(x)$.

The ring $\mathcal{R}[x; \Theta]$ does not need to be a unique factorization ring. Moreover, the degrees of the
irreducible factors are not unique up to permutation.
\begin{example}\label{ex2.2}
Consider  the automorphism $\Theta_{{\rm id},2}$ of $\mathbb{F}_{3}+u\mathbb{F}_{3}$, where $\Theta_{{\rm
id},2}(a+ bu) = a +2bu$. Here are two irreducible factorizations of $x^6-1$ in
$(\mathbb{F}_{3}+u\mathbb{F}_{3})[x; \Theta_{{\rm id},2} ]$
\begin{align*}
x^6-1&=(x+1)^3(x+2)^3\\
     &=(x^2+ux+2)^3.
\end{align*}
\end{example}

The skew polynomial ring $\mathcal{R}[x;\Theta]$ is neither left nor right Euclidean. However,   left and right
divisions can be defined for some suitable elements.  Let $f(x)=a_0+a_1x+\dots+a_rx^r$ and
$g(x)=b_0+b_1x+\dots+b_sx^s$, where $b_s$ is a unit in $ \mathcal{R}$.  The \textit{right division} of $f(x)$ by
$g(x)$ is defined as follows:

 If $r< s$, then $f(x)=0g(x)+f(x).$ Suppose that $r\geq s$. First, note that  the degree of
\[f(x)-a_r\Theta^{r-s}(b_s^{-1})x^{r-s}g(x)\]
is less than the degree of $f(x)$. Then iterating the above procedure  by subtracting further left multiples of
$g(x)$ from the result until the degree is less than the degree of $g(x)$, we obtain skew polynomials $q(x)$ and
$r(x)$ such that
 \[f(x)=q(x)g(x)+r(x)
\text{ with } \deg(r(x))<\deg(g(x)) \text{ or } r(x)=0 .\] Note~that~$q(x)$~and $r(x)$ are unique and called the
\textit{right~uotient}~and~\textit{right remainder}, respectively. This algorithm is called the \textit{Right
Division Algorithm}~in~$\mathcal{R}[x;\Theta]$.

If $r(x)=0$, we say that $g(x)$ is a \textit{right divisor} of $f(x)$. In this case,  denote by
$\displaystyle\frac{f(x)}{g(x)}$ the right quotient $q(x)$ of $f(x)$ by $g(x)$.   This implies
 \begin{align}
 f(x)=\displaystyle\frac{f(x)}{g(x)}g(x).\label{eq:frac-skewR}
 \end{align}

Similarly, the \textit{Left Division Algorithm} in $\mathcal{R}[x;\Theta]$   can be defined using the fact that
the degree~of
\[f(x)-g(x)\Theta^{-s}(a_rb^{-1}_s)x^{r-s}\] is less than the degree of $f(x)$.

For a skew polynomial $f(x)$ in $\mathcal{R}[x;\Theta]$, let $\langle f(x) \rangle$ denote the left ideal of
$\mathcal{R}[x;\Theta]$  generated by $f(x)$. Note that $\langle f(x) \rangle$ does not need to  be two-sided. A
sufficient condition for $\langle  f(x) \rangle$ to be two-sided is given as follows:

\begin{proposition}\label{Thm2.3}
 If $f(x)=x^tg(x)$ such that $g(x)$ is central and $t\in\mathbb{N}_0$, then   $\langle  f(x) \rangle$ is a principal two-sided
ideal in $\mathcal{R}[x;\Theta]$.
\end{proposition}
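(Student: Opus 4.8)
The plan is to verify directly that the left ideal $\langle f(x)\rangle$ is also closed under right multiplication; together with its definition as a left ideal this makes it two-sided, and principality is then immediate since $f(x)$ is a single generator. So the whole task reduces to one containment, handled by a short commutation computation.

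First I would record the two facts I will use. By definition $\langle f(x)\rangle=\{a(x)f(x)\mid a(x)\in\mathcal{R}[x;\Theta]\}$, and since every element already has the form $a(x)f(x)$, it suffices to show $f(x)p(x)\in\langle f(x)\rangle$ for each $p(x)\in\mathcal{R}[x;\Theta]$: indeed, if $f(x)p(x)=b(x)f(x)$, then $(a(x)f(x))p(x)=a(x)(f(x)p(x))=a(x)b(x)f(x)\in\langle f(x)\rangle$, so closure under right multiplication by arbitrary $p(x)$ follows. Second, iterating the defining rule $xa=\Theta(a)x$ gives the commutation identity $x^t p(x)=\Theta^t(p(x))\,x^t$, where $\Theta^t(p(x))$ denotes the skew polynomial obtained by applying $\Theta^t$ to each coefficient of $p(x)$; this is a genuine element of $\mathcal{R}[x;\Theta]$ precisely because $\Theta$ is an automorphism of $\mathcal{R}$, so $\Theta^t$ maps $\mathcal{R}$ into $\mathcal{R}$.

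The core step is then the computation itself. Writing $f(x)=x^tg(x)$ and using that $g(x)$ is central to slide it past $p(x)$, I obtain
\[
f(x)p(x)=x^t g(x)p(x)=x^t p(x)g(x)=\Theta^t(p(x))\,x^t g(x)=\Theta^t(p(x))\,f(x),
\]
which exhibits $f(x)p(x)$ as a left multiple of $f(x)$, hence an element of $\langle f(x)\rangle$. Thus $\langle f(x)\rangle$ is a right ideal as well, so it is two-sided, and it is principal by construction. I do not expect a serious obstacle; the only point demanding care is that centrality is invoked in exactly one place, to commute $g(x)$ with $p(x)$, while the noncommutativity of $x$ with coefficients is absorbed separately by the identity $x^t p(x)=\Theta^t(p(x))x^t$. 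Keeping these two commutation steps distinct is essentially the entire content of the argument.
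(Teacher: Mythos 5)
Your proof is correct and follows essentially the same route as the paper: a single commutation computation that uses the centrality of $g(x)$ once and the skew rule $x^t p(x)=\Theta^t(p(x))x^t$ once. The only (immaterial) difference is direction — you convert right multiples $f(x)p(x)$ into left multiples via $\Theta^t$, whereas the paper converts left multiples $p(x)f(x)$ into right multiples via $\Theta^{-t}$; the two identities are equivalent under the substitution $p(x)\mapsto\Theta^{\pm t}(p(x))$.
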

\begin{proof}
 Since $g(x)$ is
  central, for each skew polynomial $
\sum_{i=0}^n a_ix^i$ in $\mathcal{R}[x;\Theta]$, we have $
 \left(\sum_{i=0}^n a_ix^i\right)(x^tg(x))
                 =x^t\sum_{i=0}^n\Theta^{-t}(a_i)x^i g(x)=(x^tg(x))\sum_{i=0}^n\Theta^{-t}(a_i)x^i.
$
\end{proof}

\begin{corollary}\label{cor2.4}
If $f(x)$ is a monic central skew polynomial  of degree $n$, then the  skew polynomials of degree less than $n$
are canonical representatives of the elements in $\mathcal{R}[x,\Theta]/\langle f(x)\rangle$.
\end{corollary}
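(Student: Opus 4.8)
The plan is to combine the Right Division Algorithm with a degree-comparison argument, using the centrality of $f(x)$ to pin down the shape of the ideal $\langle f(x)\rangle$. First I would observe that this is genuinely a corollary of Proposition~\ref{Thm2.3}: taking $t=0$ there, the central monic polynomial $f(x)$ generates a two-sided ideal $\langle f(x)\rangle$, so the quotient $\mathcal{R}[x;\Theta]/\langle f(x)\rangle$ is a ring and its cosets behave well under multiplication. For the argument that follows, the only structural fact I need is that, as a left ideal, $\langle f(x)\rangle$ consists exactly of the left multiples $a(x)f(x)$ with $a(x)\in\mathcal{R}[x;\Theta]$.

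For existence of a representative of degree less than $n$, I would appeal directly to the Right Division Algorithm. Since $f(x)$ is monic, its leading coefficient is $1$, which is a unit in $\mathcal{R}$, so the hypothesis required to run right division by $f(x)$ is met. Hence any $h(x)\in\mathcal{R}[x;\Theta]$ can be written as $h(x)=q(x)f(x)+r(x)$ with $r(x)=0$ or $\deg(r(x))<n$. Because $q(x)f(x)\in\langle f(x)\rangle$, this gives $h(x)+\langle f(x)\rangle = r(x)+\langle f(x)\rangle$, so every coset contains a skew polynomial of degree less than $n$.

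For uniqueness I would suppose $r_1(x)$ and $r_2(x)$ both have degree less than $n$ and represent the same coset, so that $r_1(x)-r_2(x)=a(x)f(x)$ for some $a(x)\in\mathcal{R}[x;\Theta]$. If $a(x)\neq 0$, then since $f(x)$ is monic the leading term of $a(x)f(x)$ is obtained from the product of the two leading terms without any cancellation, whence $\deg\bigl(a(x)f(x)\bigr)=\deg(a(x))+n\geq n$; this contradicts $\deg\bigl(r_1(x)-r_2(x)\bigr)<n$. Therefore $a(x)=0$ and $r_1(x)=r_2(x)$. The one point deserving care — and the only place where monicity is essential rather than cosmetic — is precisely this degree additivity: over a chain ring $\mathcal{R}$ with zero divisors the degree of a product can drop, but when one factor has unit (here, identity) leading coefficient no such collapse occurs, which is exactly what makes both the division step and the uniqueness step go through.
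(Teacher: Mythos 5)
Your proof is correct and follows essentially the same route as the paper: invoke Proposition~\ref{Thm2.3} (centrality gives a two-sided ideal, so the quotient is meaningful) and then apply the Right Division Algorithm by the monic polynomial $f(x)$. The only difference is that you make explicit the uniqueness step via the degree-additivity argument, which the paper leaves implicit in the stated uniqueness of the right quotient and remainder.
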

 \begin{proof}
 By Proposition \ref{Thm2.3}, $\langle  f(x) \rangle$ is  two-sided
  and hence the quotient $\mathcal{R}[x,\Theta]/\langle
 f(x)\rangle$ is meaningful. The  desired result follows from the Right
 Division Algorithm.
 \end{proof}

\begin{proposition}\label{prob2.5} Let $n$ be a positive integer and
$\lambda$ a unit in $\mathcal{R}$.  Then the following statements are equivalent:
\begin{enumerate}[$i)$]
\item   $x^n-\lambda$ is central in $\mathcal{R}[x,\Theta]$.%
\item   $\langle x^n-\lambda\rangle$ is
two-sided.%
\item   $n$ is a multiple of the order of $\Theta$ and    $\lambda$ is   fixed by $\Theta$.%
\end{enumerate}
\end{proposition}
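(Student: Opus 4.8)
The plan is to establish the three implications cyclically, $(iii)\Rightarrow(i)\Rightarrow(ii)\Rightarrow(iii)$. The implication $(i)\Rightarrow(ii)$ is immediate: if $x^n-\lambda$ is central, then writing $x^n-\lambda=x^0(x^n-\lambda)$ and applying Proposition~\ref{Thm2.3} with $t=0$ and $g(x)=x^n-\lambda$ shows that $\langle x^n-\lambda\rangle$ is two-sided. The remaining two implications rest on the single non-commutative identity $x^na=\Theta^n(a)x^n$, obtained by iterating $xa=\Theta(a)x$, together with the commutativity of $\mathcal{R}$.

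For $(iii)\Rightarrow(i)$, I would use the fact that $\mathcal{R}[x;\Theta]$ is generated as a ring by $\mathcal{R}$ and $x$, so that the centralizer of a fixed element is a subring and it suffices to check that $x^n-\lambda$ commutes with each $a\in\mathcal{R}$ and with $x$. On one hand, $(x^n-\lambda)a=\Theta^n(a)x^n-\lambda a$ while $a(x^n-\lambda)=ax^n-a\lambda$; since $\mathcal{R}$ is commutative these agree exactly when $\Theta^n(a)=a$, which holds for every $a$ because the order of $\Theta$ divides $n$. On the other hand, $(x^n-\lambda)x=x^{n+1}-\lambda x$ while $x(x^n-\lambda)=x^{n+1}-\Theta(\lambda)x$, and these agree precisely when $\Theta(\lambda)=\lambda$, the second hypothesis in $(iii)$.

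For $(ii)\Rightarrow(iii)$, I would exploit that a two-sided ideal is in particular a right ideal, so $(x^n-\lambda)a\in\langle x^n-\lambda\rangle$ for every $a\in\mathcal{R}$ and $(x^n-\lambda)x\in\langle x^n-\lambda\rangle$. Since $x^n-\lambda$ is monic, the Right Division Algorithm makes membership in the left ideal $\langle x^n-\lambda\rangle$ equivalent to having zero right remainder upon division by $x^n-\lambda$. Dividing gives
\[
(x^n-\lambda)a=\Theta^n(a)(x^n-\lambda)+\bigl(\Theta^n(a)\lambda-\lambda a\bigr),
\]
so the remainder is the constant $\Theta^n(a)\lambda-\lambda a$; forcing it to vanish and cancelling the unit $\lambda$ yields $\Theta^n(a)=a$ for all $a$, that is $\Theta^n=\mathrm{id}$, so the order of $\Theta$ divides $n$. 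Likewise, dividing $(x^n-\lambda)x=x^{n+1}-\lambda x$ by $x^n-\lambda$ leaves the remainder $(\Theta(\lambda)-\lambda)x$ (for $n\ge 2$), whose vanishing forces $\Theta(\lambda)=\lambda$.

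I expect the only real care to lie in the bookkeeping of the non-commutative products and in the observation that commuting with the generators $\mathcal{R}\cup\{x\}$ already forces centrality; the degree comparisons are routine, the one mild subtlety being small values of $n$ (notably $n=1$), where the remainder of $(x^n-\lambda)x$ must be read off after one extra division step but leads to the same condition $\Theta(\lambda)=\lambda$.
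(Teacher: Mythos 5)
Your proof is correct and is essentially the paper's own argument: $i)\Rightarrow ii)$ by Proposition~\ref{Thm2.3}, $iii)\Rightarrow i)$ by checking that $x^n-\lambda$ commutes with the generators $\mathcal{R}\cup\{x\}$, and $ii)\Rightarrow iii)$ by turning two-sidedness into coefficient conditions that the unit $\lambda$ and the commutativity of $\mathcal{R}$ let you cancel. The only (cosmetic) difference lies in $ii)\Rightarrow iii)$: you force the right remainders of $(x^n-\lambda)a$ and $(x^n-\lambda)x$ upon division by $x^n-\lambda$ to vanish, whereas the paper equates $r(x^n-\lambda)$ and $x(x^n-\lambda)$ with right multiples $(x^n-\lambda)s$ and $(x^n-\lambda)(ax+b)$ and compares coefficients -- the same computation read in mirror image, and if anything your direction (right-multiplying the generator and invoking uniqueness in the Right Division Algorithm) is the one that follows most immediately from the definition of a two-sided left ideal.
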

\begin{proof}
  $i)\Rightarrow ii)$ follows directly from  Proposition \ref{Thm2.3}.

  Next, we prove $ii) \Rightarrow iii)$. Assume that $\langle x^n-\lambda\rangle$ is two-sided. Let $r\in \mathcal{R}$.
 Then $rx^n-r\lambda=r(x^n-\lambda)=(x^n-\lambda)s=\Theta^n(s)x^n-s\lambda$ for some $s\in \mathcal{R}$.
  Comparing the coefficients, we have $r\lambda=s\lambda$. As
 $\lambda$ is a unit, it follows that  $r=s$, and hence $ rx^n-r\lambda= \Theta^n(r)x^n- r\lambda.$ Thus, $n$ is a multiple of the
order of $\Theta$.  Next, we observe that
$x^{n+1}-\Theta(\lambda)x=x(x^n-\lambda)=(x^n-\lambda)(ax+b)=\Theta^n(a)x^{n+1}+\Theta^n(b)x^n-a\lambda
x-b\lambda$,
 for some   $a$ and $b$ in $\mathcal{R}$. Then $\Theta^n(a)=1$ and $\Theta^n(b)=0$. As $\Theta$ is an
 automorphism, we have $a=1$ and $b=0$, and hence $x^{n+1}-\Theta(\lambda)x = x^{n+1} - \lambda x  .$ Therefore,
  $\lambda$ is fixed by $\Theta$.

Finally, we prove $iii)\Rightarrow i)$. Assume that $n$ is a multiple of the order of  $\Theta$ and $\lambda$ is
fixed by $\Theta$.  Then $x(x^n-\lambda) = x^{n+1}- \Theta(\lambda)x=
 x^{n+1}- \lambda x=  (x^n-\lambda)x$ and  $(x^n-\lambda)t=\Theta^n(t)x^n-t\lambda= t x^n-t\lambda= t(x^n-\lambda)$, for all
$t\in \mathcal{R}$. Consequently, $x^n-\lambda$ commutes with any skew polynomial in $\mathcal{R}[x;\Theta]$.
\end{proof}

\begin{proposition}\label{prop2.6} Let $h(x),g(x)\in
\mathcal{R}[x;\Theta]$. If $h(x)g(x)$ is a monic central skew polynomial, then $h(x)g(x)=g(x)h(x)$. In
particular, if $g(x)$ is a right divisor of a monic central skew polynomial $f(x)$, then $g(x)$ and the right
quotient $\displaystyle\frac{f(x)}{g(x)}$ commute, i.e.
\begin{align}
g(x)\displaystyle\frac{f(x)}{g(x)}=f(x)=\displaystyle\frac{f(x)}{g(x)}g(x).\label{eq:frac-skew}
 \end{align}
\end{proposition}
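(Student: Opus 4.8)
The plan is to reduce everything to the first assertion, since the ``in particular'' part is then immediate: if $g(x)$ is a right divisor of a monic central $f(x)$, then by the definition of right division in \eqref{eq:frac-skewR} we may write $f(x)=h(x)g(x)$ with $h(x)=\frac{f(x)}{g(x)}$, and applying the first assertion to this factorization yields $g(x)h(x)=h(x)g(x)=f(x)$, which is exactly \eqref{eq:frac-skew}. So I would concentrate on showing that, whenever $h(x)g(x)$ is monic and central, we have $h(x)g(x)=g(x)h(x)$.

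First I would extract a consequence of monicity that is the real engine of the proof. Writing $r=\deg h(x)$ and $s=\deg g(x)$ with respective leading coefficients $h_r$ and $g_s$, the relation $x^r a=\Theta^r(a)x^r$ shows that the leading coefficient of the product $h(x)g(x)$ is $h_r\Theta^r(g_s)$. Since this product is monic, $h_r\Theta^r(g_s)=1$; as $\mathcal{R}$ is commutative, $\Theta^r(g_s)$ is a unit, and since $\Theta^r$ is an automorphism, $g_s$ itself is a unit. Thus the leading coefficient of $g(x)$ is a unit, which is precisely the condition under which right division by $g(x)$ is available.

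Next I would use centrality. Put $f(x)=h(x)g(x)$. Because $f(x)$ is central it commutes with $g(x)$, so $f(x)g(x)=g(x)f(x)$; substituting $f(x)=h(x)g(x)$ into both sides gives $h(x)g(x)g(x)=g(x)h(x)g(x)$, and rearranging yields
\[
\bigl(h(x)g(x)-g(x)h(x)\bigr)g(x)=0 .
\]
It then remains to conclude $h(x)g(x)-g(x)h(x)=0$ from this identity.

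This last deduction is the main obstacle, and the only delicate point: $\mathcal{R}[x;\Theta]$ is \emph{not} a domain (the chain ring $\mathcal{R}$ has nonzero nilpotents as soon as $e\geq 2$), so right-multiplication by $g(x)$ need not be injective and we cannot simply cancel $g(x)$. The fix is to invoke the unit leading coefficient $g_s$ from the second paragraph. If $p(x)\neq 0$ has degree $d$ and leading coefficient $p_d\neq 0$, then the coefficient of $x^{d+s}$ in $p(x)g(x)$ is $p_d\Theta^d(g_s)$, a unit multiple of $p_d$ and hence nonzero; therefore $p(x)g(x)\neq 0$, so right-multiplication by $g(x)$ is injective after all. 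Applying this with $p(x)=h(x)g(x)-g(x)h(x)$ forces $h(x)g(x)=g(x)h(x)$, completing the argument. The whole proof thus hinges on converting ``the product is monic'' into ``$g(x)$ has a unit leading coefficient,'' which restores the needed cancellation property despite the presence of zero divisors.
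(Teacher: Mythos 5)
Your proposal is correct and takes essentially the same approach as the paper: use centrality of the product $f(x)=h(x)g(x)$ to obtain an identity of the form $(\text{commutator})\cdot(\text{factor})=0$, then cancel the factor because monicity of the product forces unit leading coefficients, hence no zero-divisor issues. The only difference is the mirror image of which factor is cancelled --- you commute $f(x)$ past $g(x)$ and right-cancel $g(x)$, while the paper commutes $f(x)$ past $h(x)$ and left-cancels $h(x)$ --- and you additionally spell out the unit-leading-coefficient and cancellation details (and the ``in particular'' reduction) that the paper leaves implicit.
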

\begin{proof} Assume that $h(x)g(x)$ is monic and  central. Then the leading
coefficient of $g(x)$ and $h(x)$ are units. Since $h(x)g(x)$ is central, we have
$$h(x)(h(x)g(x))=(h(x)g(x))h(x)=h(x)(g(x)h(x)).$$ Thus
$h(x)(h(x)g(x)- g(x)h(x))=0$. As the leading coefficient of $h(x) $ is a unit, $h(x)$ is not a zero divisor.
Hence $h(x)g(x)= g(x)h(x)$ as desired.
\end{proof}

The following discussion guarantees the existence of the right localization of $\mathcal{R}[x;\Theta]$ which
plays a vital role in the study of dualities of codes. In the light of \cite[Theorem 2]{Ri1970},  necessary and
sufficient conditions for $\mathcal{R}[x;\Theta]$ to have the right localization are given as follows.
\begin{theorem}[{\cite{Ri1970}}]\label{thm:localization}
Let $S=\{x^i\mid i\in\mathbb{N}\}$. Then $\mathcal{R}[x;\Theta]$ has the right localization at $S$ if and only
if both the following conditions hold:
\begin{enumerate}[$i)$] \item
For all $x^{i}\in S$ and $a(x)\in \mathcal{R}[x;\Theta]$, there exist $x^{j}\in S$ and $b(x)\in
\mathcal{R}[x;\Theta]$ such
that $a(x) x^{i} = x^{j} b(x)$.  %
\item Given $a(x)\in \mathcal{R}[x;\Theta]$ and $x^{i}\in S$, if         $x^{i}a(x) = 0$, then there exists
$x^{j}\in S$ such that $a(x) x^{j }= 0.$
\end{enumerate}
\end{theorem}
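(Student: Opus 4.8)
The plan is to read this as the classical existence criterion for a right ring of fractions (right Ore localization), specialized to the multiplicatively closed set $S=\{x^{i}\mid i\in\mathbb{N}\}$. Recall that to say $\mathcal{R}[x;\Theta]$ \emph{has the right localization at} $S$ is to ask for a ring $Q$ and a homomorphism $\phi\colon\mathcal{R}[x;\Theta]\to Q$ for which every $\phi(x^{i})$ is a unit, every element of $Q$ is a right fraction $\phi(a(x))\phi(x^{i})^{-1}$, and the kernel of $\phi$ is exactly $\{a(x)\mid a(x)x^{i}=0\text{ for some }i\}$. In this language condition $(ii)$ is the \emph{right reversibility} of $S$ and condition $(i)$ is the \emph{right Ore} (permutability) condition written for the powers of the single element $x$; the theorem asserts that these two requirements are jointly necessary and sufficient for $Q$ to exist, which is precisely the content of \cite[Theorem 2]{Ri1970}.

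I would treat necessity first, as it is the shorter half. If $Q$ exists, then $(ii)$ follows from the description of the kernel: if $x^{i}a(x)=0$ then $\phi(a(x))=\phi(x^{i})^{-1}\phi(x^{i})\phi(a(x))=\phi(x^{i})^{-1}\cdot 0=0$, so $a(x)$ lies in the kernel and hence $a(x)x^{j}=0$ for some $j$. Likewise $(i)$ is forced by the requirement that every element of $Q$ be a right fraction: rewriting the element $\phi(x^{i})^{-1}\phi(a(x))$ of $Q$ in the form $\phi(b(x))\phi(x^{j})^{-1}$, clearing denominators inside $Q$, and pulling the resulting identity back through $\phi$ (using the kernel description once more, possibly after multiplying by a further power of $x$) yields an equation of the shape demanded by $(i)$.

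The substantive half is sufficiency, and this is where I expect the real work to lie. Assuming $(i)$ and $(ii)$, I would build $Q$ by hand as the set of equivalence classes of pairs $(a(x),x^{i})$, interpreted as $a(x)x^{-i}$: condition $(i)$ is used to move a denominator $x^{-i}$ past a numerator and so to place any two fractions over a common right denominator, which is what makes addition and multiplication definable, while condition $(ii)$ is what guarantees that the fraction equivalence is transitive and that these operations are well defined. One then checks the ring axioms, the invertibility of each $\phi(x^{i})$, and the universal property, with $\phi$ the map sending $a(x)$ to the class of $(a(x),1)$. The main obstacle is exactly this verification of well-definedness and associativity: every manipulation rests on rewriting $x^{-i}a(x)$ as a right fraction via $(i)$, and reconciling two such rewritings repeatedly invokes $(ii)$, so the difficulty is bookkeeping rather than any single clever step. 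I would finish by observing that for $\mathcal{R}[x;\Theta]$ both conditions in fact hold unconditionally, since $\Theta$ is an automorphism: the identity $a(x)x^{i}=x^{i}\Theta^{-i}(a(x))$ realizes $(i)$ with $j=i$, and $x$ is a non-zero-divisor (as $\Theta$ is injective on coefficients), so the hypothesis $x^{i}a(x)=0$ of $(ii)$ forces $a(x)=0$ and $(ii)$ holds trivially. Hence the localization exists in every case of interest here.
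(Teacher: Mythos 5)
The paper itself offers no proof of this statement: it is imported verbatim as Theorem~2 of \cite{Ri1970}, and the only argument the paper supplies is the paragraph that follows the theorem, verifying that conditions $i)$ and $ii)$ hold in $\mathcal{R}[x;\Theta]$ so that the localization $\mathcal{R}[x;\Theta]S^{-1}$ exists. Your closing paragraph reproduces that verification correctly: $a(x)x^{i}=x^{i}\Theta^{-i}(a(x))$ (with $\Theta^{-i}$ applied coefficientwise) realizes $i)$ with $j=i$, and since $\Theta$ is an automorphism, $x^{i}$ is never a left zero divisor, so the hypothesis of $ii)$ forces $a(x)=0$ and $ii)$ holds vacuously. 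On this --- the only part of the statement the paper actually argues --- you and the paper agree exactly.

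The remainder of your proposal, an attempted proof of Ribenboim's criterion itself, goes beyond anything in the paper, and there I would flag two concrete problems. First, your necessity argument for $i)$ does not produce the stated condition. Starting from $\phi(x^{i})^{-1}\phi(a(x))=\phi(b(x))\phi(x^{j})^{-1}$, clearing denominators and absorbing a kernel factor $x^{k}$ yields $a(x)x^{j+k}=x^{i}\bigl(b(x)x^{k}\bigr)$: the given power $x^{i}$ ends up on the right-hand side and the constructed power sits next to $a(x)$. Condition $i)$ demands the opposite shape, $a(x)x^{i}=x^{j}b(x)$, with the given power adjacent to $a(x)$ and the found power on the other side. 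These are genuinely different conditions in an arbitrary ring --- what you derive is the textbook right Ore condition $a(x)S\cap x^{i}\mathcal{R}[x;\Theta]\neq\emptyset$, whereas $i)$ asserts $\mathcal{R}[x;\Theta]\,S\subseteq S\,\mathcal{R}[x;\Theta]$ --- and your phrase ``yields an equation of the shape demanded by $(i)$'' glosses over the discrepancy; it is a symptom of identifying $i)$ with the standard Ore condition, which the quantifier placement in the statement does not support. Second, the sufficiency half, which is the entire substance of Ribenboim's theorem, is only announced: the fraction construction, the well-definedness of the operations, the ring axioms and the universal property are all deferred to ``bookkeeping,'' so the proposal as written is an outline of the classical Ore-type argument rather than a proof. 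Neither issue affects how the theorem is used in the paper, since for $\mathcal{R}[x;\Theta]$ with $\Theta$ an automorphism all of these permutability conditions, in either quantifier order, hold trivially --- as both you and the paper observe.
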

\noindent Condition $i)$ holds because the multiplication rule allows the shifting of powers of $x$ from left to
right by changing the coefficients.  Note that, for each $x^{i}\in S$,  it is never a left zero divisor. If
$a(x)\in \mathcal{R}[x;\Theta]$ such that $x^{i}a(x) = 0$, then $a(x)$ must be zero and hence $a(x) x^{j }= 0,$
for all $x^j\in S$. This obviously implies $ii)$. Then, by Theorem~\ref{thm:localization}, the right
localization $\mathcal{R}[x;\Theta]S^{-1}$ of $\mathcal{R}[x;\Theta]$ exists. We have $ax^{-1} =
x^{-1}\Theta(a)$ where $x^{-1}$ is the inverse of $x$ in this right localization.

 The following map is  key to
determining the structure of dual codes.

\begin{proposition}\label{prop2.7}
Let $\varphi: \mathcal{R}[x;\Theta] \rightarrow \mathcal{R}[x;\Theta]S^{-1}$ be   defined by
\begin{align*}\label{eq-InProp2.7}
 \varphi(\sum_{ i=0}^t a_ix^{i} )= \sum_{
i=0}^t x^{-i}a_i.
\end{align*}  Then $\varphi$ is a ring anti-monomorphism.
\end{proposition}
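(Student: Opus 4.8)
The plan is to check the three requirements of a ring anti-monomorphism separately: additivity (together with $\varphi(1)=1$), the reversal law $\varphi(fg)=\varphi(g)\varphi(f)$, and injectivity. Additivity is immediate since $\varphi$ is defined coefficientwise, and $\varphi(1)=x^{0}\cdot 1 = 1$ is trivial, so these cost nothing.

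The substantive step is the anti-multiplicative law. I would take arbitrary $f(x)=\sum_{i=0}^{s}a_ix^{i}$ and $g(x)=\sum_{j=0}^{t}b_jx^{j}$ and compute both sides explicitly. Using the defining rule $x^{i}b_j=\Theta^{i}(b_j)x^{i}$ of the skew polynomial ring, one has $f(x)g(x)=\sum_{i,j}a_i\Theta^{i}(b_j)x^{i+j}$, whence $\varphi(fg)=\sum_{i,j}x^{-(i+j)}a_i\Theta^{i}(b_j)$. For the other side I would expand $\varphi(g)\varphi(f)=\sum_{i,j}x^{-j}b_jx^{-i}a_i$ and move $b_j$ past $x^{-i}$ using the localization identity $b_jx^{-i}=x^{-i}\Theta^{i}(b_j)$, which is obtained by iterating the relation $ax^{-1}=x^{-1}\Theta(a)$ recorded just before the statement; this gives $\varphi(g)\varphi(f)=\sum_{i,j}x^{-(i+j)}\Theta^{i}(b_j)a_i$. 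The two expressions coincide precisely because $\mathcal{R}$ is commutative, so $\Theta^{i}(b_j)a_i=a_i\Theta^{i}(b_j)$. I expect this to be the main point of the proof and the only place where commutativity of the chain ring is genuinely used; without it $\varphi$ would fail to be an anti-homomorphism.

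Finally I would establish injectivity. Suppose $\varphi(f)=\sum_{i=0}^{s}x^{-i}a_i=0$ in $\mathcal{R}[x;\Theta]S^{-1}$. Since $x$ is invertible there, left multiplication by $x^{s}$ is a bijection, so clearing denominators gives $\sum_{i=0}^{s}x^{s-i}a_i=0$, an element coming from $\mathcal{R}[x;\Theta]$ itself. Because each $x^{i}$ is not a zero divisor (as observed in the paragraph following Theorem~\ref{thm:localization}), the canonical map $\mathcal{R}[x;\Theta]\to\mathcal{R}[x;\Theta]S^{-1}$ is injective, so this relation already holds in $\mathcal{R}[x;\Theta]$. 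Rewriting $x^{s-i}a_i=\Theta^{s-i}(a_i)x^{s-i}$ displays a polynomial whose nonzero terms occupy distinct degrees $s-i$; comparing coefficients forces $\Theta^{s-i}(a_i)=0$, and since $\Theta$ is an automorphism this yields $a_i=0$ for all $i$, i.e. $f(x)=0$. The only delicate point here is the injectivity of the localization map, which reduces to the non-zero-divisor remark already made.
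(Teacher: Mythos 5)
Your proof is correct, and it is essentially the argument the paper has in mind: the paper gives no details itself, deferring to the proof of Theorem 4.4 in \cite{BoSoUm2008}, and your verification---anti-multiplicativity from $b_jx^{-i}=x^{-i}\Theta^{i}(b_j)$ together with commutativity of $\mathcal{R}$, then injectivity by clearing denominators and comparing coefficients---is exactly that standard computation written out. One pedantic point: for injectivity of the canonical map into the \emph{right} localization, the fact needed is that $a(x)x^{j}=0$ forces $a(x)=0$ (no $x^{j}$ annihilates a nonzero element on the right), rather than the left-zero-divisor remark following Theorem~\ref{thm:localization} that you cite; both statements are equally immediate, so nothing breaks.
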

\begin{proof}
The proof is similar to a part of the argument used in the proof of \cite[Theorem~4.4]{BoSoUm2008}.
\end{proof}

\subsection{Definitions and Basic Properties of Skew Constacyclic Codes over
Finite Chain Rings}

  A \textit{code  of
length} $n$ over   $ {\mathcal{R}}$ is a nonempty subset of ${\mathcal{R}}^n$. A code ${C}$ is said to be
\textit{linear} if it is a submodule of the ${\mathcal{R}}$-module ${\mathcal{R}}^n$. In this paper, all codes
are assumed to be linear unless otherwise stated.

Given an automorphism $\Theta$ of  $\mathcal{R}$ and  a unit $\lambda$ in ${\mathcal{R}} $, a
  code ${C}$ is said to be \textit{skew constacyclic}, or specifically,
$\Theta$-$\lambda$-\textit{constacyclic} if
$C$ is closed under the $\Theta$-$\lambda$-\textit{constacyclic shift}
$\rho_{\Theta,\lambda}:\mathcal{R}^n\rightarrow \mathcal{R}^n$ defined by
$$\rho_{\Theta,\lambda}((a_0,a_1,\dots,a_{n-1}) )= (\Theta(\lambda
a_{n-1}),\Theta(a_0),\dots,\Theta(a_{n-2})).$$ In particular,
  such codes are called
\textit{skew  cyclic} and \textit{skew negacyclic codes} when $\lambda$ is $1$ and $-1$, respectively. When
$\Theta$ is the identity automorphism, they become classical constacyclic, cyclic and negacyclic codes.

Analogous to the  classical constacyclic codes, we characterize $\Theta$-$\lambda$-constacyclic codes in terms
of left ideals in $\mathcal{R}[x;\Theta]/\langle x^n-\lambda\rangle$. However, due to Proposition \ref{prob2.5},
$\mathcal{R}[x;\Theta]/\langle x^n-\lambda\rangle$ is meaningful if only if $\langle x^n-\lambda\rangle$ is
two-sided, or equivalently, $n$ is a multiple of the order of $\Theta$ and $\lambda$ is a unit fixed
by~$\Theta$.

For this purpose, throughout, we restrict our study to  the case where the length $n$ of codes is a multiple of
the order of $\Theta$ and $\lambda$ is a unit in $\mathcal{R}^\Theta$, where $\mathcal{R}^\Theta$ denotes the
subring of $\mathcal{R}$   fixed by $\Theta$.

The \textit{skew polynomial representation} of a code $C$ is defined to be $\{c_0+c_1x+\dots+c_{n-1}x^{n-1} \mid
(c_0,c_1,\dots,c_{n-1})\in C \}$. For convenience, it will be regarded as $C$   itself. The next theorem is
analogous to that for classical constacyclic codes. The proof is omitted.
  \begin{theorem}\label{thm2.8}
  A   code $C$ of length
  $n$ over $\mathcal{R}$  is $\Theta$-$\lambda$-constacyclic
  if and only if   the skew polynomial representation of $  C $ is a left
  ideal in $ \mathcal{R} [x;\Theta]/\langle  x^n-\lambda\rangle$.
\end{theorem}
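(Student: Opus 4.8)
The plan is to establish a dictionary between codewords and canonical representatives in the quotient, and then to translate each defining property of a left ideal—closure under addition, under left scalar multiplication, and under left multiplication by $x$—into the corresponding property of the code, namely being an $\mathcal{R}$-submodule closed under the shift $\rho_{\Theta,\lambda}$. Since $n$ is a multiple of the order of $\Theta$ and $\lambda\in\mathcal{R}^\Theta$, Proposition \ref{prob2.5} guarantees that $\langle x^n-\lambda\rangle$ is two-sided, so the quotient $\mathcal{R}[x;\Theta]/\langle x^n-\lambda\rangle$ is a genuine ring, and by Corollary \ref{cor2.4} the skew polynomials of degree less than $n$ serve as canonical representatives. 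Thus the skew polynomial representation identifies $\mathcal{R}^n$ with $\mathcal{R}[x;\Theta]/\langle x^n-\lambda\rangle$ as $\mathcal{R}$-modules.

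The heart of the argument is a single computation showing that left multiplication by $x$ realizes the constacyclic shift. First I would take a representative $c(x)=c_0+c_1x+\dots+c_{n-1}x^{n-1}$ and compute
\begin{align*}
x\,c(x)=\Theta(c_0)x+\Theta(c_1)x^2+\dots+\Theta(c_{n-1})x^n.
\end{align*}
Reducing modulo $x^n-\lambda$ replaces $x^n$ by $\lambda$; since $\lambda$ is central in $\mathcal{R}$ and fixed by $\Theta$, the new constant term is $\Theta(c_{n-1})\lambda=\Theta(\lambda c_{n-1})$. Hence $x\,c(x)$ has coefficient vector $(\Theta(\lambda c_{n-1}),\Theta(c_0),\dots,\Theta(c_{n-2}))$, which is precisely $\rho_{\Theta,\lambda}((c_0,\dots,c_{n-1}))$. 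I would also note that for a constant $r\in\mathcal{R}$ the product $r\,c(x)$ has coefficient vector $r(c_0,\dots,c_{n-1})$, so left scalar multiplication agrees with the $\mathcal{R}$-module action on $\mathcal{R}^n$.

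With this dictionary in hand, both implications follow formally. For the forward direction, if $C$ is linear and closed under $\rho_{\Theta,\lambda}$, then its representation is closed under addition, under left multiplication by constants, and under left multiplication by $x$; since every element of $\mathcal{R}[x;\Theta]/\langle x^n-\lambda\rangle$ is a sum of terms $r_ix^i$, closure under left multiplication by arbitrary ring elements follows, so the representation is a left ideal. For the converse, a left ideal is in particular an additive subgroup closed under left multiplication by constants, hence an $\mathcal{R}$-submodule, i.e.\ a linear code; and closure under multiplication by $x$ gives closure under $\rho_{\Theta,\lambda}$, so $C$ is $\Theta$-$\lambda$-constacyclic.

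The only delicate point—and the sole place where the standing hypotheses are genuinely used—is the reduction $x^n\equiv\lambda$ combined with the identity $\Theta(c_{n-1})\lambda=\Theta(\lambda c_{n-1})$, where both $\lambda\in\mathcal{R}^\Theta$ and the commutativity of $\mathcal{R}$ enter. If $\lambda$ were not fixed by $\Theta$, the shifted constant term would instead be $\Theta(\lambda)\Theta(c_{n-1})$ and the correspondence with $\rho_{\Theta,\lambda}$ would break. Everything else is the routine transfer of ``submodule closed under a shift'' to ``left ideal'' across an $\mathcal{R}$-module isomorphism, which is why the authors could reasonably omit the details.
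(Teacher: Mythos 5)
Your proof is correct and is exactly the standard argument the paper leaves implicit: you use Proposition \ref{prob2.5} and Corollary \ref{cor2.4} to identify $\mathcal{R}^n$ with $\mathcal{R}[x;\Theta]/\langle x^n-\lambda\rangle$, and the crux computation that left multiplication by $x$ realizes $\rho_{\Theta,\lambda}$ (via $\Theta(\lambda)=\lambda$ and commutativity of $\mathcal{R}$) is precisely what makes the dictionary work. The paper omits its proof as ``analogous to that for classical constacyclic codes,'' and your write-up is that omitted argument, so there is nothing to add.
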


There are two inner products on $\mathcal{R}^n$ that  we are interested in. One is the \textit{Euclidean inner
product} defined by  $\langle  {u}, {v}\rangle=\sum_{i=0}^{n-1}u_iv_i,$ for $u=(u_0,u_1,\dots,u_{n-1})$ and
$v=(v_0,v_1,\dots,v_{n-1})$ in $\mathcal{R}^n$. When the order of $\Theta$ is $2$, we can also consider the
\text{Hermitian inner product} which is defined as $\langle {u}, {v}\rangle_H=\sum_{i=0}^{n-1}u_i\Theta(v_i).$
Vectors $u$ and $v$ are said to be \textit{Euclidean orthogonal} (resp., \textit{Hermitian orthogonal}) if
$\langle u,v\rangle=0$ (resp., $\langle u,v\rangle_H=0$).

 The \textit{Euclidean dual code} of a code $C$ of
length $n$ over $\mathcal{R}$ is defined to be $C^\perp=\{v\in \mathcal{R}^n\mid \langle v,c\rangle =0 \text{
for all } c\in C\}.$ Similarly, the \textit{Hermitian dual code}  of   $C$   is defined as $C^{\perp_H}=\{v\in
\mathcal{R}^n\mid \langle v,c\rangle_H =0 \text{ for all } c\in C\}.$ The code $C$ is said to be
\textit{Euclidean self-dual} (resp., \textit{Hermitian self-dual}) if $C=C^\perp$ (resp., $C=C^{\perp_H}$).


\section{Skew Constacyclic Codes Generated  by   Monic  Right Divisors  of $x^n-\lambda$}

 In this section, we focus on   $\Theta$-$\lambda$-constacyclic codes  which are principal
left ideals  in  $\mathcal{R}[x;\Theta]/\langle
 x^n-\lambda\rangle$ generated by monic right divisors of
 $x^n-\lambda$. We derived some useful tools  and extend   results on constacyclic codes over Galois rings   \cite[Sections 4-5,7]{BoSoUm2008}
  to the case over an arbitrary finite chain
 ring~$\mathcal{R}$.    The main  assumptions  that $\lambda$ is a unit  in $\mathcal{R}^\Theta$
 and the length $n$ of codes is a multiple of the order of $\Theta$ are
 assumed.

\subsection{Properties of Skew Constacyclic Codes Generated  by   Monic  Right Divisors  of $x^n-\lambda$}


Given a right divisor $g(x)=\sum_{i=0}^{n-k-1}g_ix^i+x^{n-k}$ of
 $x^n-\lambda$, a generator matrix of the $\Theta$-$\lambda$-constacyclic code $C$ generated by $g(x)$ is given by
\[ G =   \left[%
\begin{array}{  ccccccc}
  g_0             & \dots                     & g_{n-k-1} &1& 0                            & \dots     & 0   \\
  0         & \Theta(g_{0})                             &      \dots         & \Theta(g_{n-k-1})&1                   & \dots      & 0  \\
  0                              & \dots              &         \dots        &    \dots                 &  \Theta^2(g_{n-k-1})   & \ddots     & 0   \\
  \vdots    &   \ddots                &         \ddots        &    \ddots                 & \ddots    & \ddots     & \vdots  \\
 0                    &   \dots                           &      0       & \Theta^{k-1}(g_0)     & \dots                     & \Theta^{k-1}(g_{n-k-1})&1   \\
\end{array}%
\right] .\] Then the rows of $G$ are linearly independent, and hence the next proposition follows.
\begin{proposition}\label{prop:free}
Let $g(x)$ be a right divisor of $x^n-\lambda$. Then  the $\Theta$-$\lambda$-constacyclic code $C$ generated by
$g(x)$ is
 a free $\mathcal{R}$-module with $|C|=|\mathcal{R}|^{n-\deg(g(x))}$.
\end{proposition}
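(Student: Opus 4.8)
The plan is to exhibit an explicit $\mathcal{R}$-module isomorphism $\mathcal{R}^{k}\xrightarrow{\sim} C$, where $k=n-\deg(g(x))$, which simultaneously yields freeness and the cardinality $|C|=|\mathcal{R}|^{k}$. Since $g(x)$ is monic of degree $n-k$, each skew polynomial $x^{i}g(x)$ with $0\le i\le k-1$ has degree $n-k+i<n$, so it is its own canonical representative modulo $x^{n}-\lambda$ by Corollary~\ref{cor2.4}; these are precisely the rows of $G$. I would consider the map $\Phi\colon\mathcal{R}^{k}\to C$ sending $(a_{0},\dots,a_{k-1})$ to $\sum_{i=0}^{k-1}a_{i}x^{i}g(x)$ and prove that it is a well-defined injective $\mathcal{R}$-linear map onto $C$; an isomorphism $C\cong\mathcal{R}^{k}$ then makes $C$ free of rank $k$ with $|C|=|\mathcal{R}|^{n-\deg(g(x))}$.

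For surjectivity I would show that the rows of $G$ span $C$ over $\mathcal{R}$. Set $h(x)=\frac{x^{n}-\lambda}{g(x)}$, which is monic of degree $k$ and satisfies $h(x)g(x)=x^{n}-\lambda=g(x)h(x)$ by Proposition~\ref{prop2.6}. Given $c\in C$, write $c\equiv a(x)g(x)\pmod{x^{n}-\lambda}$ and apply the Right Division Algorithm to divide $a(x)$ on the right by the monic $h(x)$, obtaining $a(x)=q(x)h(x)+r(x)$ with $\deg(r(x))<k$ or $r(x)=0$. Then $a(x)g(x)=q(x)\bigl(h(x)g(x)\bigr)+r(x)g(x)=q(x)(x^{n}-\lambda)+r(x)g(x)$, and since $x^{n}-\lambda$ is central (Proposition~\ref{prob2.5}) the first term lies in $\langle x^{n}-\lambda\rangle$. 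Hence $c\equiv r(x)g(x)\pmod{x^{n}-\lambda}$, and because $\deg(r(x)g(x))<n$ this congruence is an equality of canonical representatives, so writing $r(x)=\sum_{i=0}^{k-1}r_{i}x^{i}$ exhibits $c=\Phi(r_{0},\dots,r_{k-1})$.

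For injectivity I would use the staircase shape of $G$. In $x^{i}g(x)$ the leading term is $x^{n-k+i}$ with coefficient $1$ (as $g$ is monic and $\Theta^{i}(1)=1$), so among the last $k$ coordinates, namely the coefficients of $x^{n-k},\dots,x^{n-1}$, row $i$ has a $1$ in position $i$ and zeros in all later positions. The corresponding $k\times k$ submatrix of $G$ is therefore lower triangular with $1$'s on the diagonal, so its determinant is the unit $1$. Consequently, if $\sum_{i}a_{i}x^{i}g(x)=0$ then reading off these $k$ coordinates forces $(a_{0},\dots,a_{k-1})=0$, proving that $\Phi$ is injective (and, equivalently, that the rows of $G$ are $\mathcal{R}$-linearly independent as asserted before the proposition).

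I expect the spanning (surjectivity) step to be the main obstacle. Because $\mathcal{R}[x;\Theta]$ is noncommutative and, since $\mathcal{R}$ is not a field, has zero divisors, one cannot argue by a rank/dimension count as over a field: the reduction must rely on the monicity of $h(x)$ to legitimize the right division and on the centrality of $x^{n}-\lambda$ to absorb $q(x)(x^{n}-\lambda)$ into the ideal. Once both surjectivity and injectivity are established, the passage from $C\cong\mathcal{R}^{k}$ to freeness is immediate, and it is the unit-determinant submatrix, rather than a mere rank computation, that secures freeness over the chain ring $\mathcal{R}$.
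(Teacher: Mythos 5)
Your proof is correct and takes essentially the same approach as the paper: the paper's argument exhibits the matrix $G$ whose rows are the coefficient vectors of $g(x), xg(x), \dots, x^{k-1}g(x)$ and concludes freeness from their linear independence, which is exactly your staircase/unit-determinant observation. The only difference is that you make the spanning step explicit (right division of $a(x)$ by the monic cofactor $h(x) = \frac{x^n-\lambda}{g(x)}$, using Proposition~\ref{prop2.6} and the centrality of $x^n-\lambda$), a detail the paper leaves implicit by simply declaring $G$ to be a generator matrix.
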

When  $\Theta$ is the identity automorphism,  a $\Theta$-$\lambda$-constacyclic code becomes
 $\lambda$-constacyclic. However, the converse does not need to be true.   Here, necessary and sufficient conditions for a $\Theta$-$\lambda$-constacyclic code generated by
a right divisor of $x^n-\lambda$  to be $\lambda$-constacyclic are given.
\begin{proposition}\label{lem3.1}  Let
$g(x)$ be a monic right divisor of $x^n-\lambda$ in $\mathcal{R}[x;\Theta]$. The
 $\Theta$-$\lambda$-constacyclic code generated by $g(x)$ is
$\lambda$-constacyclic if and only if $g(x)\in \mathcal{R}^\Theta[x;\Theta]$.
\end{proposition}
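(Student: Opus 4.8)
The plan is to show that the classical $\lambda$-constacyclic shift differs from the skew shift only by a coordinatewise twist by $\Theta$, and then to pin down $g(x)$ using the fact that a code generated by a monic right divisor of $x^n-\lambda$ determines that divisor uniquely.

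First I would compare the two shifts directly. Write $\sigma:=\rho_{\mathrm{id},\lambda}$ for the classical $\lambda$-constacyclic shift, $\tau:=\rho_{\Theta,\lambda}$ for the skew shift of the hypothesis, and $\Theta^{(n)}$ for the coordinatewise application of $\Theta$ to a vector of $\mathcal{R}^n$. Because $\lambda\in\mathcal{R}^\Theta$, a one-line computation on coordinates gives $\tau=\Theta^{(n)}\circ\sigma=\sigma\circ\Theta^{(n)}$. By hypothesis $C$ is $\Theta$-$\lambda$-constacyclic, so $\tau(C)=C$, and all three maps are bijections of the finite set $\mathcal{R}^n$, hence of finite order. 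Substituting $\sigma=(\Theta^{(n)})^{-1}\tau$ and $\Theta^{(n)}=\tau\sigma^{-1}$ shows that $C$ is $\lambda$-constacyclic, i.e.\ $\sigma(C)=C$, if and only if $C$ is invariant under the coordinatewise twist, i.e.\ $\Theta^{(n)}(C)=C$. This converts the whole statement into a question about $\Theta^{(n)}$-invariance of $C$.

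Next I would move to the skew-polynomial side. Under the skew polynomial representation, $\Theta^{(n)}$ becomes the map $\Theta_*$ sending $\sum_i c_ix^i$ to $\sum_i\Theta(c_i)x^i$. Checking $\Theta_*(xa)=\Theta_*(x)\Theta_*(a)=x\Theta(a)$ (both equal $\Theta^2(a)x$) shows $\Theta_*$ is a ring automorphism of $\mathcal{R}[x;\Theta]$ fixing $x$; since $\Theta_*(x^n-\lambda)=x^n-\lambda$, it descends to an automorphism of $\mathcal{R}[x;\Theta]/\langle x^n-\lambda\rangle$. A ring automorphism carries left ideals to left ideals, so $\Theta_*(C)=\Theta_*(\langle g(x)\rangle)=\langle g^\Theta(x)\rangle$, where $g^\Theta(x):=\sum_{i=0}^{n-k-1}\Theta(g_i)x^i+x^{n-k}$. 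Applying $\Theta_*$ to a factorization $x^n-\lambda=h(x)g(x)$ shows $g^\Theta(x)$ is again a monic right divisor of $x^n-\lambda$ of the same degree. Hence $\Theta^{(n)}(C)=C$ if and only if $\langle g^\Theta(x)\rangle=\langle g(x)\rangle$.

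The last and, I expect, most delicate step is a uniqueness statement: a code generated by a monic right divisor of $x^n-\lambda$ has a unique monic right divisor generating it. From the generator matrix preceding Proposition~\ref{prop:free} one has $C=\{a(x)g(x):\deg a(x)<k\}$, with no reduction modulo $x^n-\lambda$ needed since each such product has degree at most $n-1$. For nonzero $a(x)$ with top term $a_{j^*}x^{j^*}$, the product $a(x)g(x)$ has leading term $a_{j^*}x^{(n-k)+j^*}$, because each $x^jg(x)$ is monic of degree $n-k+j$ and no cancellation in top degree can occur ($\Theta^j(1)=1$). Thus $C$ contains no nonzero element of degree below $n-k=\deg g$, and the only monic element of $C$ of degree $n-k$ is $g(x)$ itself. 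Applying this to both $\langle g(x)\rangle$ and $\langle g^\Theta(x)\rangle$ gives $\langle g^\Theta(x)\rangle=\langle g(x)\rangle$ if and only if $g^\Theta(x)=g(x)$, i.e.\ $\Theta(g_i)=g_i$ for every $i$, that is, $g(x)\in\mathcal{R}^\Theta[x;\Theta]$. Chaining the equivalences of the three steps yields the proposition. The crux is exactly this uniqueness of the monic generator; everything else is formal once the shift identity $\tau=\Theta^{(n)}\sigma$ is in hand.
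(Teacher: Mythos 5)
Your proof is correct, but it takes a genuinely different route from the paper's. The paper argues directly with elements: for the forward direction it observes that both $xg(x)$ and $g(x)x$ lie in $C$, hence so does $xg(x)-g(x)x=\sum_i(\Theta(g_i)-g_i)x^{i+1}$; writing this element as $p(x)g(x)$, a degree count forces $p(x)$ to be a constant $p_0$ with $p_0g_0=0$, and since $g(x)$ right-divides $x^n-\lambda$ with $\lambda$ a unit, $g_0$ is not a zero divisor, so the commutator vanishes and every $g_i$ is fixed by $\Theta$; the converse is immediate because a coefficientwise-fixed $g(x)$ commutes with $x$. You instead factor the shift operators, $\rho_{\Theta,\lambda}=\Theta^{(n)}\circ\rho_{\mathrm{id},\lambda}=\rho_{\mathrm{id},\lambda}\circ\Theta^{(n)}$ (valid precisely because $\lambda\in\mathcal{R}^\Theta$), reduce the statement to $\Theta^{(n)}$-invariance of $C$, transport this through the coefficientwise ring automorphism --- which is exactly the map $\phi$ of (\ref{eq-lem-phi}) that the paper only introduces later, for Hermitian duality, and which fixes $x^n-\lambda$ and so permutes the left ideals of the quotient --- and finish with a uniqueness lemma: $C$ contains no nonzero element of degree less than $\deg g(x)$, and $g(x)$ is its only monic element of that degree. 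That uniqueness fact is sound (your no-cancellation argument via monicity of $g(x)$ is right, granted the description $C=\{a(x)g(x):\deg a(x)<k\}$ coming from the generator matrix), and it is in fact asserted and reused by the paper itself in the proof of Theorem~\ref{thm3.5}. What each approach buys: the paper's proof is shorter and needs nothing beyond the Right Division Algorithm inside $C$; yours is more structural, isolates two reusable facts ($\phi$ descends to $\mathcal{R}[x;\Theta]/\langle x^n-\lambda\rangle$ and sends left ideals to left ideals; monic generators of minimal degree are unique), and notably avoids the non-zero-divisor argument for $g_0$ altogether, since monicity alone carries your argument. Your appeal to finiteness correctly covers the only delicate point, namely upgrading the inclusions $\rho(C)\subseteq C$ to equalities before inverting the operators.
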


\begin{proof}
Suppose $g(x) =\sum_{i=0}^{n-k-1}g_ix^i+x^{n-k}$  and $C$ is the $\Theta$-$\lambda$-constacyclic code generated
by $g(x)$.

Assume that $C$ is $\lambda$-constacyclic. Then   $xg(x), g(x)x\in C$. By the linearity of $C$, $xg(x)-g(x)x\in
C$ and hence
\[(\Theta(g_0)-g_0)x+(\Theta(g_1)-g_1)x^2+\dots+(\Theta(g_{n-k-1})-g_{n-k-1})x^{n-k}=p(x)g(x),\]
for some $p(x)\in \mathcal{R}[x;\Theta]$ such that $\deg(p(x))<k$. Thus $\deg(p(x)g(x))<n$ which implies that
$p(x)$ is constant such that $p(x)g_0=0$. Since $g(x)$ is a right divisor of $x^n-\lambda$ and $\lambda$ is a
unit, $g_0$ is not a zero divisor. Thus $p(x)$ is zero and hence $g_i$ is fixed by $\Theta$ for all $i$.

Conversely, assume that $g(x)\in \mathcal{R}^\Theta[x;\Theta]$. Then $g_ix=xg_i$ for all $i=0,1,\dots,{n-k}.$
Thus $g(x)x=xg(x)\in C$, therefore, the result follows.
\end{proof}

A parity-check matrix for $C$ is determined in the next proposition.

\begin{proposition}\label{prop3.2}  Let $C$ be the
$\Theta$-$\lambda$-constacyclic code  generated by a monic right divisor
 $g(x)$ of $x^n-\lambda$ and $h(x):=\displaystyle\frac{x^n-\lambda}{g(x)}$.
 Then the following statements hold:
\begin{enumerate}[$i)$]
\item For $c(x)\in\mathcal{R}[x;\Theta]$,  $c(x)\in C$ if and only if $c(x)h(x)=0$ in
$\mathcal{R}[x;\Theta]/\langle x^n-\lambda\rangle$.%
\item   If $h(x)=\sum_{i=0}^{k-1}h_ix^i+x^k$, then the following matrix
\[ H= \left[%
\begin{array}{ ccccccc}
  1         & \Theta(h_{k-1})      & \dots                     & \Theta^k(h_0) & 0                              & \dots     & 0   \\
  0         & 1                       & \Theta^2(h_{k-1})         &      \dots         & \Theta^{k+1}(h_0)                  & \dots      & 0  \\
  0    & 0                       & \dots              &         \dots        &    \dots                   & \ddots     & 0   \\
  \vdots    & \vdots                    & \ddots                &         \ddots        &    \ddots                   & \ddots      & \vdots  \\
 0         & 0            & \dots                           & 1             & \Theta^{n-k}(h_{k-1})                & \dots     & \Theta^{n-1}(h_0)   \\
\end{array}%
\right]\] is a parity-check matrix for $C$.%
\end{enumerate}
\end{proposition}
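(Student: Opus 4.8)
The plan is to establish $i)$ first and then read off $ii)$ from it. Throughout I will use that, since $n$ is a multiple of the order of $\Theta$ and $\lambda\in\mathcal{R}^\Theta$, Proposition~\ref{prob2.5} makes $x^n-\lambda$ central, and since $g(x)$ is a monic right divisor with $h(x)=\frac{x^n-\lambda}{g(x)}$, Proposition~\ref{prop2.6} yields the commuting factorization $g(x)h(x)=h(x)g(x)=x^n-\lambda$. For the forward direction of $i)$, if $c(x)\in C$ then $c(x)=a(x)g(x)$ in $\mathcal{R}[x;\Theta]/\langle x^n-\lambda\rangle$ for some $a(x)$, whence $c(x)h(x)=a(x)\,g(x)h(x)=a(x)(x^n-\lambda)=0$ in the quotient.

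For the converse of $i)$, I would run the Right Division Algorithm to write $c(x)=q(x)g(x)+r(x)$ with $r(x)=0$ or $\deg r(x)<\deg g(x)=n-k$. Then $c(x)h(x)=q(x)(x^n-\lambda)+r(x)h(x)$, so the hypothesis $c(x)h(x)\equiv 0$ forces $r(x)h(x)\equiv 0 \pmod{x^n-\lambda}$. Because $h(x)$ is monic, degrees add under multiplication, so if $r(x)\neq 0$ then $\deg\big(r(x)h(x)\big)=\deg r(x)+k<n$; since every nonzero element of $\langle x^n-\lambda\rangle$ has degree at least $n$, this forces $r(x)h(x)=0$ as a genuine skew polynomial, and monicity of $h(x)$ then gives $r(x)=0$. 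Hence $c(x)=q(x)g(x)\in C$.

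For $ii)$, I would combine $i)$ with a coefficient computation. Writing $c(x)=\sum_{l=0}^{n-1}c_lx^l$ and $h(x)=\sum_{t=0}^{k}h_tx^t$ with $h_k=1$, the multiplication rule gives
\[ c(x)h(x)=\sum_{m}\Big(\sum_{l}c_l\,\Theta^l(h_{m-l})\Big)x^m. \]
Since $x^n=\lambda$ with $\lambda$ central, reduction modulo $x^n-\lambda$ only folds back the terms of degree $\geq n$, which affect solely the coefficients of $x^0,\dots,x^{k-1}$; thus for each $m$ with $k\leq m\leq n-1$ no folding occurs and the coefficient of $x^m$ is exactly $\sum_{l=m-k}^{m}\Theta^l(h_{m-l})c_l$. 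Putting $m=k+j$ for $j=0,\dots,n-k-1$ and reindexing, the vanishing of these $n-k$ coefficients reads $\sum_{i=0}^{k}\Theta^{j+i}(h_{k-i})c_{j+i}=0$, which is precisely $\langle \mathbf{h}_j,c\rangle=0$ for the $j$-th row $\mathbf{h}_j$ of $H$. Hence $i)$ shows that every codeword satisfies $Hc^{T}=0$.

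Finally I would argue that these checks are exactly enough. The first $n-k$ columns of $H$ form an upper triangular matrix with $1$'s on the diagonal, hence an invertible matrix, so the $\mathcal{R}$-linear map $c\mapsto Hc^{T}$ is surjective onto $\mathcal{R}^{n-k}$ and its kernel has cardinality $|\mathcal{R}|^{n}/|\mathcal{R}|^{n-k}=|\mathcal{R}|^{k}$. By Proposition~\ref{prop:free}, $C$ is free of rank $n-\deg g(x)=k$, so $|C|=|\mathcal{R}|^{k}$. Since the previous step places $C$ inside this kernel and the two sets have equal cardinality, they coincide, i.e. $c\in C\Leftrightarrow Hc^{T}=0$, which is the assertion that $H$ is a parity-check matrix. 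I expect the only delicate point to be the indexing that matches the top $n-k$ coefficients of $c(x)h(x)$ with the rows of $H$ and verifies that no wrap-around, and hence no stray factor of $\lambda$, enters there; the remaining steps are formal.
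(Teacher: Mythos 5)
Your proof is correct and follows essentially the same route as the paper: both rely on the commuting factorization $g(x)h(x)=h(x)g(x)=x^n-\lambda$ from Proposition~\ref{prop2.6}, identify the entries of $cH^{T}$ with the coefficients of $x^{k},\dots,x^{n-1}$ in $c(x)h(x)$ (noting that wrap-around modulo $x^n-\lambda$ only touches lower-degree coefficients), and finish by comparing sizes. The only cosmetic differences are that you prove the converse of $i)$ by right division by $g(x)$ plus a degree bound where the paper cancels the monic $h(x)$ directly, and that you spell out, via the invertible triangular submatrix and Proposition~\ref{prop:free}, the cardinality count that the paper compresses into the remark that the rank of $H$ is $n-k$.
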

\begin{proof} Since $n$ is a multiple of the order of $\Theta$ and $\lambda\in \mathcal{R}^\Theta$,
$x^n-\lambda$ is central  and    it follows from Proposition~\ref{prop2.6} that $x^n-\lambda=h(x)g(x)=g(x)h(x)$.

First, we prove $i)$.
  Assume that $c(x)=p(x)g(x)$ for some $p(x)$ in $\mathcal{R}[x;\Theta]$.
  Then
$ c(x)h(x) =(p(x)g(x))h(x)  =p(x) (x^n-\lambda) =0 $ in $ \mathcal{R}[x;\Theta]/\langle  x^n-\lambda\rangle.$

Conversely, assume that $c(x)h(x)=0$ in $\mathcal{R}[x;\Theta]/\langle  x^n-\lambda\rangle $. Then there exists
$p(x)\in \mathcal{R}[x;\Theta]$ such that $ c(x)h(x) =p(x) (x^n-\lambda) =p(x)g(x)h(x) . $ As the leading
coefficient of $h(x)$ is a unit, we then have $c(x)=p(x)g(x)\in C$.

To prove $ii)$, let $c(x)=c_0+c_1x+\dots+c_{n-1}x^{n-1}\in C$ and let
$[\,s_k~s_{k+1}~\cdots~s_{n-1}\,]=[\,c_0~c_1~\cdots~c_{n-1}\,]H^T.$ Then, for $l\in\{k,k+1,\dots, n-1\}$,
\[s_l=c_{l-k}+\sum_{j=0}^{k-1}c_{l-j}\Theta^{l-j}(h_j)\]
which equals  the coefficient of $x^l$ in $c(x)h(x)$.

 Since $c(x)\in C$, it follows from
   $i)$ that    $c(x)h(x)=0$ in
$\mathcal{R}[x;\Theta]/\langle  x^n-\lambda\rangle$.
 Then there exists
$q(x)\in \mathcal{R}[x;\Theta]$ such that $q(x)(x^n-\lambda)=c(x)h(x)$ having degree less than $n+k$. Therefore,
the coefficients of the monomials $x^k,x^{k+1}, \dots , x^{n-1}$ in this product must be zero, i.e.,
$[\,s_k~s_{k+1}~\cdots~s_{n-1}\,]$ is the zero matrix.

  Since the rank of $H$ is $n-k$, the result
follows.
\end{proof}


\subsection{Euclidean Dual Codes}
We  study Euclidean dual codes of $\Theta$-$\lambda$-constacyclic codes over $\mathcal{R}$. Their
characterization  is given. When $\lambda^2=1$, a generator   of the Euclidean dual code of a
 $\Theta$-$\lambda$-constacyclic code  is determined. Necessary and sufficient conditions for
 such a code
to be  Euclidean self-dual
 are given as well.

\begin{lemma}\label{lem2.pre9}  Let
$C$ be a code of length $n$ over $\mathcal{R}$.  Then $C$ is $\Theta$-$\lambda$-constacyclic  if and only if
$C^\perp$ is $\Theta$-$\lambda^{-1}$-constacyclic. In particular, if $\lambda^2=1$, then   $C$ is
$\Theta$-$\lambda$-constacyclic if and only if $C^\perp$ is $\Theta$-$\lambda$-constacyclic.
\end{lemma}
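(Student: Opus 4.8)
The plan is to reduce the whole statement to a single adjunction-type identity relating the constacyclic shift to the Euclidean pairing, and then to exploit the bijectivity of the shift together with the fact that taking Euclidean duals twice recovers the original code over a finite chain ring.

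First I would record two preliminary observations. Since $\lambda\in\mathcal{R}^\Theta$, both $\lambda$ and $\lambda^{-1}$ are units fixed by $\Theta$, and the shift $\rho_{\Theta,\lambda}$ is a bijection of $\mathcal{R}^n$ (it is the composition of the componentwise automorphism $\Theta$, a cyclic coordinate permutation, and multiplication of one coordinate by the unit $\lambda$). Hence, as $C$ is finite, ``$C$ is closed under $\rho_{\Theta,\lambda}$'' is equivalent to $\rho_{\Theta,\lambda}(C)=C$, and also to $\rho_{\Theta,\lambda}^{-1}(C)=C$; here $\rho_{\Theta,\lambda}^{-1}(w)=(\Theta^{-1}(w_1),\dots,\Theta^{-1}(w_{n-1}),\lambda^{-1}\Theta^{-1}(w_0))$.

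The core step is to prove, by direct computation, the identity
\[
\langle \rho_{\Theta,\lambda^{-1}}(u),\,w\rangle=\Theta\!\left(\langle u,\,\rho_{\Theta,\lambda}^{-1}(w)\rangle\right)
\qquad\text{for all } u,w\in\mathcal{R}^n.
\]
Expanding both sides and reindexing, the ``bulk'' contributions $\sum_{i=0}^{n-2}\Theta(u_i)w_{i+1}$ agree term by term, and the two remaining boundary terms coincide because $\mathcal{R}$ is commutative and $\Theta(\lambda^{-1})=\lambda^{-1}$. I expect this bookkeeping --- tracking exactly where $\Theta$, $\Theta^{-1}$, $\lambda$ and $\lambda^{-1}$ land --- to be the only genuinely delicate point, since everything else is formal.

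Granting the identity, the forward implication is immediate: if $C$ is $\Theta$-$\lambda$-constacyclic then $\rho_{\Theta,\lambda}^{-1}(c)\in C$ for every $c\in C$, so for $u\in C^\perp$ the right-hand side vanishes, whence $\langle \rho_{\Theta,\lambda^{-1}}(u),c\rangle=\Theta(0)=0$ for all $c\in C$; thus $\rho_{\Theta,\lambda^{-1}}(u)\in C^\perp$, and $C^\perp$ is $\Theta$-$\lambda^{-1}$-constacyclic. For the converse I would apply this forward implication to the code $C^\perp$ with the unit $\lambda^{-1}$ (again a unit of $\mathcal{R}^\Theta$), obtaining that $(C^\perp)^\perp$ is $\Theta$-$(\lambda^{-1})^{-1}=\Theta$-$\lambda$-constacyclic, and then invoke $(C^\perp)^\perp=C$, which holds for linear codes over a finite chain ring via $|C|\,|C^\perp|=|\mathcal{R}|^n$. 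The concluding ``in particular'' assertion is then just the substitution $\lambda^{-1}=\lambda$, valid precisely when $\lambda^2=1$.
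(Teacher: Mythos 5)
Your proposal is correct and follows essentially the same route as the paper: the paper proves the same adjunction-type identity by computing $\langle \rho_{\Theta,\lambda}^{n-1}(u),v\rangle$ for $u\in C$, $v\in C^\perp$ and applying $\Theta$ to obtain $\lambda\langle \rho_{\Theta,\lambda^{-1}}(v),u\rangle=0$, then gets the converse from $(C^\perp)^\perp=C$ and the case $\lambda^2=1$ from $\lambda=\lambda^{-1}$, exactly as you do. The only cosmetic difference is that you use $\rho_{\Theta,\lambda}^{-1}(c)\in C$ (justified by bijectivity and finiteness), whereas the paper stays inside $C$ by using the $(n-1)$-st power of the shift, which is $\lambda\rho_{\Theta,\lambda}^{-1}$; the underlying computation is the same.
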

\begin{proof} Note that,  for each unit  $\alpha$ in $\mathcal{R}$, $\alpha\in \mathcal{R}^\Theta$
if and only if $\alpha^{-1}\in\mathcal{R}^\Theta$.   Since $\lambda\in  \mathcal{R}^\Theta$, so is
$\lambda^{-1}$. Let $u=(u_0,u_1,\dots,u_{n-1})\in C$ and $v=(v_0,v_1,\dots,v_{n-1})\in C^\perp$. Since
$(\Theta^{n-1}(\lambda u_1), \Theta^{n-1}(\lambda u_2),\dots,\Theta^{n-1}(\lambda u_{n-1}),\Theta^{n-1}( u_0)
)=\rho_{\Theta,\lambda}^{n-1}(u)\in C$, we have
\begin{align*}
  0&=\langle \rho_{\Theta,\lambda}^{n-1}(u),v\rangle\\
   &=\langle (\Theta^{n-1}(\lambda u_1), \Theta^{n-1}(\lambda
u_2),\dots,\Theta^{n-1}(\lambda u_{n-1}),\Theta^{n-1}( u_0) ),(v_0,v_1,\dots,v_{n-1})
  \rangle\\
  &=\lambda\langle (\Theta^{n-1}( u_1), \Theta^{n-1}(
u_2),\dots,\Theta^{n-1}(  u_{n-1}),\Theta^{n-1}( \lambda^{-1}u_0) ),(v_0,v_1,\dots,v_{n-1})
  \rangle\\
  &=\lambda (\Theta^{n-1}( \lambda^{-1}u_0)v_{n-1}+ \sum_{i=1}^{n-1} \Theta^{n-1}( u_i)v_{i-1}
  ).
\end{align*} As $n$ is a multiple of the order of $\Theta$ and $\lambda^{-1}$ is fixed by $\Theta$, it
follows that
\begin{align*}
0=\Theta(0)&=\Theta(\lambda (\Theta^{n-1}(
                \lambda^{-1}u_0)v_{n-1}+ \sum_{i=1}^{n-1} \Theta^{n-1}(
                u_i)v_{i-1}  ))\\
            &= \lambda (
                u_0\Theta (\lambda^{-1}v_{n-1})+ \sum_{i=1}^{n-1}
                u_i\Theta (v_{i-1})  ) \\
            &=\lambda\langle \rho_{\Theta,{\lambda}^{-1}} (v),u\rangle.
\end{align*}
Therefore, $\rho_{\Theta,\lambda^{-1}} (v)\in C^\perp$.

The converse follows from the fact that  $(C^\perp)^\perp=C$.

In addition, assume that $\lambda^2=1$. Then $\lambda=\lambda^{-1}$ and hence the last statement follows
 immediately from the main result.
\end{proof}

If $\lambda^2=1$, it follows from the previous lemma that the Euclidean dual  $C^\perp$ of  a
$\Theta$-$\lambda$-constacyclic code $C$ is again $\Theta$-$\lambda$-constacyclic. In this case, a generator of
$C^\perp$  is  given through  the ring anti-monomorphism $\varphi$ defined in Proposition \ref{prop2.7}, where $
 \varphi(\sum_{ i=0}^t a_ix^{i} )= \sum_{
i=0}^t x^{-i}a_i$. The next lemma is key to obtaining this result.

\begin{lemma}\label{lem2.9} Assume that $\lambda^2=1$. Let $a(x)=a_0+a_1x+\dots+a_{n-1}x^{n-1}$ and
$b(x)=b_0+b_1x+\dots+b_{n-1}x^{n-1}$ be in $\mathcal{R}[x;\Theta]$. Then the following statements are
equivalent:
\begin{enumerate}[$i)$]
\item The coefficient vector of $a(x)$ is Euclidean orthogonal to the coefficient vector of $x^i
(x^{n-1}\varphi(b(x)))$ for all $i\in\{0,1,\dots,n-1\}$.%
\item $(a_0,a_1,\dots,a_{n-1})$ is Euclidean orthogonal to $(b_{n-1},\Theta(b_{n-2}),\dots, \Theta^{n-1}(b_0))$
and all its
$\Theta$-$\lambda$-constacyclic shifts.%
\item $a(x)b(x)=0$ in $\mathcal{R}[x;\Theta]/\langle x^n-\lambda\rangle$.%
\end{enumerate}
\end{lemma}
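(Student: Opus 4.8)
The plan is to funnel all three conditions through the single polynomial $w(x):=x^{n-1}\varphi(b(x))$ and the identification of the constacyclic shift with multiplication by $x$. First I would compute $w(x)$ explicitly. By Proposition~\ref{prop2.7}, $\varphi(b(x))=\sum_{i=0}^{n-1}x^{-i}b_i$, so $w(x)=\sum_{i=0}^{n-1}x^{\,n-1-i}b_i$; pushing each power of $x$ past its coefficient by means of $x^{k}r=\Theta^{k}(r)x^{k}$ and reindexing $k=n-1-i$ gives
\[ w(x)=\sum_{k=0}^{n-1}\Theta^{k}(b_{n-1-k})\,x^{k}. \]
Hence the coefficient vector of $w(x)$ is exactly $w:=(b_{n-1},\Theta(b_{n-2}),\dots,\Theta^{n-1}(b_0))$, the vector named in $ii)$.

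Next I would settle $i)\Leftrightarrow ii)$. Straight from the definition of $\rho_{\Theta,\lambda}$, together with $x^{n}=\lambda$ and $\lambda\in\mathcal{R}^{\Theta}$, one checks that multiplication by $x$ in $\mathcal{R}[x;\Theta]/\langle x^{n}-\lambda\rangle$ carries any vector to its $\Theta$-$\lambda$-constacyclic shift (the leading term $\Theta(c_{n-1})x^{n}$ wraps to $\lambda\Theta(c_{n-1})=\Theta(\lambda c_{n-1})$). Therefore the coefficient vector of $x^{i}w(x)$, reduced modulo $x^{n}-\lambda$, is precisely $\rho_{\Theta,\lambda}^{i}(w)$, and $i)$ and $ii)$ are merely two descriptions of the same family of orthogonality relations indexed by $i=m=0,1,\dots,n-1$. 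Note that this step uses only $\lambda\in\mathcal{R}^{\Theta}$ and that $n$ is a multiple of $\mathrm{ord}\,\Theta$, not $\lambda^{2}=1$.

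The crux is $ii)\Leftrightarrow iii)$. Here I would first record the bilinear identity that, for any $f,g$ of degree at most $n-1$, the coefficient of $x^{n-1}$ in $f(x)g(x)$ equals $\sum_{\ell=0}^{n-1}f_{\ell}\Theta^{\ell}(g_{n-1-\ell})$, which is nothing but the Euclidean inner product of the coefficient vector of $f$ with that of $x^{n-1}\varphi(g)$; with $f=a$ and $g=b$ this already identifies the top coefficient of $a(x)b(x)$ with $\langle a,w\rangle=\langle a,\rho_{\Theta,\lambda}^{0}(w)\rangle$. To reach the remaining coefficients I would expand $a(x)b(x)=\sum_{i,j}a_{i}\Theta^{i}(b_{j})x^{i+j}$, reduce modulo $x^{n}-\lambda$ so that the block $i+j\ge n$ acquires a factor $\lambda$, and expand $\langle a,\rho_{\Theta,\lambda}^{m}(w)\rangle$ from the formula for $w(x)$, using $n\equiv0\pmod{\mathrm{ord}\,\Theta}$ to replace $\Theta^{\ell+n}$ by $\Theta^{\ell}$. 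Matching the two expansions term by term should give
\[ \text{coeff. of }x^{\ell}\text{ in }a(x)b(x)=\lambda\,\langle a,\rho_{\Theta,\lambda}^{\ell+1}(w)\rangle\qquad(\ell=0,\dots,n-1), \]
with the wrapped and unwrapped blocks swapping roles between the two sides; since $\lambda$ is a unit and $\ell\mapsto\ell+1$ is a bijection modulo $n$, every coefficient of $a(x)b(x)$ vanishes if and only if every inner product in $ii)$ does. I expect the wraparound bookkeeping to be the only real difficulty, and this is exactly where $\lambda^{2}=1$ is indispensable: the factor $\lambda$ attached to the block $i+j\ge n$ in $a(x)b(x)$ sits on the complementary block in $\langle a,\rho_{\Theta,\lambda}^{m}(w)\rangle$, and reconciling the two forces one to multiply through by $\lambda$ and invoke $\lambda^{2}=1$ (equivalently $\rho_{\Theta,\lambda}^{n}(w)=\lambda w$) to absorb it; without this hypothesis the conditions $ii)$ and $iii)$ would define genuinely different linear systems.
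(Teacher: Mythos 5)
Your proposal is correct and takes essentially the same route as the paper: the paper also disposes of $i)\Leftrightarrow ii)$ directly from the definition of $\varphi$ (your multiplication-by-$x$/constacyclic-shift identification), and its proof of $ii)\Leftrightarrow iii)$ is precisely your computation, splitting the coefficient $c_k$ of $x^k$ in $a(x)b(x)$ modulo $x^n-\lambda$ into the unwrapped block $i+j=k$ and the $\lambda$-weighted wrapped block $i+j=k+n$, and then using $\lambda^2=1$ and $\Theta^n=\mathrm{id}$ to rewrite $c_k=\lambda\langle (a_0,\dots,a_{n-1}),\rho_{\Theta,\lambda}^{k+1}(w)\rangle$ with $w=(b_{n-1},\Theta(b_{n-2}),\dots,\Theta^{n-1}(b_0))$, exactly your claimed identity. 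Since $\lambda$ is a unit and $k\mapsto k+1$ permutes the shifts, the conclusion follows just as you argue.
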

\begin{proof}
$i)$ if and only if $ii)$ follows directly from the definition of $\varphi$. We prove $ii)$ if and only if
$iii)$.  Let $a(x)b(x)=c_0+c_1x+\dots +c_{n-1}x^{n-1}\in \mathcal{R}[x;\Theta]/\langle x^n-\lambda\rangle$.
Since $\lambda\in \mathcal{R}^\Theta$ such that $\lambda^2=1$ and $n$ is a multiple of the order of $\Theta$, it
follows that,  for each $k\in\{0,1,\dots,n-1\}$,
\begin{align*}
c_k&=\sum_{\substack{i+j=k\\ 0\leq i\leq n-1\\ 0\leq j\leq n-1}} a_i\Theta^i(b_j)+\sum_{\substack{i+j=k+n\\
0\leq i\leq n-1\\ 0\leq
j\leq n-1}} \lambda a_i\Theta^i(b_j)\\
&=\lambda\left(\sum_{\substack{i+j=k\\ 0\leq i\leq n-1\\ 0\leq j\leq n-1}}
a_i\Theta^{k-j}(\lambda b_j)+\sum_{\substack{i+j=k+n\\ 0\leq i\leq n-1\\
0\leq
j\leq n-1}}  a_i\Theta^{n+k-j}(b_j)\right)\\
&=\lambda\langle (a_0,a_1,\dots,a_{n-1}), \\
&\text{ }\text{ }\text{ }\text{ }\text{ }\text{ }\text{ }  (\lambda  b_{k } ,\Theta
(\lambda b_{k-1}),\dots, \Theta^{k}(\lambda b_{0}),\Theta^{k+1}(b_{n-1}) ,\dots ,\Theta^{n-1}(b_{k+1}))\rangle\\
&=\lambda \langle (a_0,a_1,\dots,a_{n-1}),(\Theta^{(n-k)+k}(\lambda b_{k}),\Theta^{(n-k+1)+k}(\lambda b_{k-1}), \dots,\\
&\text{ }\text{ }\text{ }\text{ }\text{ }\text{ }\text{ }  \Theta^{k}(\lambda b_{0}),\Theta^{1+k}(b_{n-1})
,\dots ,\Theta^{(n-k-1)+k}(b_{k+1}))\rangle.
\end{align*}
Hence, $a(x)b(x)=0$ if and only if $c_k=0$ for all $k\in\{0,1,\dots,n-1\}$, which is true if and only if
$(a_0,a_1,\dots,a_{n-1})$ is Euclidean orthogonal to $(b_{n-1},\Theta(b_{n-2}),\dots, \Theta^{n-1}(b_0))$ and
all its $\Theta$-$\lambda$-constacyclic shifts.
\end{proof}

\begin{theorem}\label{thm3.4}  Assume that
$\lambda^2=1$. Let $g(x)$ be a right divisor of  $x^n-\lambda$ and
$h(x):=\displaystyle\frac{x^n-\lambda}{g(x)}$.  Let $C$ be  the $\Theta$-$\lambda$-constacyclic code generated
by $g(x)$. Then the following statements hold:
\begin{enumerate}[$i)$]
 \item The skew polynomial $x^{\deg(h(x))}\varphi(h(x))$ is a right divisor of $x^n-\lambda$.%
 \item The Euclidean dual $C^\perp$ is  a
$\Theta$-$\lambda$-constacyclic code generated by $x^{\deg(h(x))}\varphi(h(x))$.
\end{enumerate}
\end{theorem}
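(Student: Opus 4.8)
The plan is to handle the two parts in order, with the anti-monomorphism $\varphi$ of Proposition \ref{prop2.7} as the main engine, together with the fact from Proposition \ref{prop2.6} that $x^n-\lambda=g(x)h(x)=h(x)g(x)$ since $x^n-\lambda$ is central. Write $k=\deg(h(x))$ and set $h^\dagger(x):=x^{k}\varphi(h(x))$. Expanding $\varphi(h(x))=\sum_{j=0}^{k}x^{-j}h_j$ and pushing the leading $x^k$ through the multiplication rule gives $h^\dagger(x)=\sum_{j=0}^{k}\Theta^{k-j}(h_j)\,x^{k-j}$, a polynomial of degree $k$ whose constant term is $h_k=1$ and whose leading coefficient is $\Theta^{k}(h_0)$. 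Comparing constant terms in $g(x)h(x)=x^n-\lambda$ shows $g_0h_0=-\lambda$, so $h_0$ is a unit; hence $\Theta^{k}(h_0)$ is a unit and right division by $h^\dagger(x)$ is well defined, so that ``right divisor'' is meaningful.

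For part $i)$ I would transport the right-divisibility of $x^n-\lambda$ by $h(x)$ across $\varphi$. Applying $\varphi$ to $h(x)g(x)=x^n-\lambda$ and using that $\varphi$ reverses products yields $\varphi(g(x))\varphi(h(x))=\varphi(x^n-\lambda)=x^{-n}-\lambda$. Left-multiplying by $x^{n}$ clears the negative powers: on one side $x^{n}(x^{-n}-\lambda)=1-\lambda x^n=-\lambda(x^n-\lambda)$, where $\lambda^2=1$ and $\Theta^n=\mathrm{id}$ are used, while on the other side one rewrites $x^{n}\varphi(g(x))\varphi(h(x))=x^{n}\varphi(g(x))x^{-k}\cdot x^{k}\varphi(h(x))=q(x)\,h^\dagger(x)$, where $q(x)$ is a unit scalar multiple of the reciprocal of $g(x)$ with its coefficients twisted by a power of $\Theta$. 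This produces an identity $q(x)h^\dagger(x)=x^n-\lambda$ in $\mathcal{R}[x;\Theta]$, exhibiting $h^\dagger(x)$ as a right divisor of $x^n-\lambda$.

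For part $ii)$ I would pin down $C^\perp$ through Lemma \ref{lem2.9}. By Lemma \ref{lem2.pre9} the dual $C^\perp$ is again $\Theta$-$\lambda$-constacyclic. Apply Lemma \ref{lem2.9} with $b(x)=h(x)$: its equivalence $ii)\Leftrightarrow iii)$ states that a word $a$ satisfies $a(x)h(x)=0$ in $\mathcal{R}[x;\Theta]/\langle x^n-\lambda\rangle$ if and only if $a$ is Euclidean orthogonal to $(h_{n-1},\Theta(h_{n-2}),\dots,\Theta^{n-1}(h_0))$ and to all of its $\Theta$-$\lambda$-constacyclic shifts. The key observation is that this vector is precisely the coefficient vector of $x^{n-1}\varphi(h(x))$: reindexing by $j=n-1-i$ gives $x^{n-1}\varphi(h(x))=\sum_{j=0}^{k}\Theta^{n-1-j}(h_j)\,x^{n-1-j}$, whose $i$-th coordinate is $\Theta^{i}(h_{n-1-i})$. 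Since Proposition \ref{prop3.2}$i)$ identifies $a(x)h(x)=0$ with $a\in C$, we conclude that $C$ is exactly the Euclidean dual of the $\Theta$-$\lambda$-constacyclic code $D$ generated by $x^{n-1}\varphi(h(x))$; taking duals and using $(C^\perp)^\perp=C$ gives $C^\perp=D$.

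It remains to replace the generator $x^{n-1}\varphi(h(x))$ by $x^{k}\varphi(h(x))=h^\dagger(x)$. In the quotient $x$ is a unit, with inverse $\lambda^{-1}x^{n-1}$, so $\langle x\,p\rangle=\langle p\rangle$ for every $p$; since $x^{n-1}\varphi(h(x))=x^{n-1-k}h^\dagger(x)$, the principal left ideals generated by the two polynomials coincide, whence $C^\perp=\langle x^{\deg(h(x))}\varphi(h(x))\rangle$, proving $ii)$. The main obstacle I anticipate is the $\Theta$-twist bookkeeping in part $i)$, namely keeping track of which power of $\Theta$ decorates each coefficient when powers of $x$ are shuttled past $\varphi(g)$ and $\varphi(h)$; this is exactly where the hypotheses that $n$ is a multiple of the order of $\Theta$ and that $\lambda\in\mathcal{R}^\Theta$ with $\lambda^2=1$ are indispensable. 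Everything else reduces to a direct application of Propositions \ref{prop2.6}, \ref{prop2.7}, \ref{prop3.2} and Lemmas \ref{lem2.pre9}, \ref{lem2.9}.
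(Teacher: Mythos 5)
Your proof is correct, and while part $i)$ coincides with the paper's argument, part $ii)$ takes a genuinely different route. For $i)$, the paper verifies directly that $\left(\varphi(g(x))(-\lambda)x^{\,n-\deg(h(x))}\right)\left(x^{\deg(h(x))}\varphi(h(x))\right)=x^n-\lambda$; this is exactly your ``apply $\varphi$ to $h(x)g(x)=x^n-\lambda$, then clear negative powers with $x^n$'' computation in a different order (note only that your displayed identity $q(x)h^\dagger(x)=x^n-\lambda$ holds after absorbing the factor $-\lambda$ into $q(x)$, as your ``unit scalar multiple'' phrasing implicitly does). For $ii)$, the paper proves just the containment $\langle x^{\deg(h(x))}\varphi(h(x))\rangle\subseteq C^\perp$ via Lemma \ref{lem2.9} and then forces equality by counting: part $i)$ together with Proposition \ref{prop:free} gives $|\langle x^{\deg(h(x))}\varphi(h(x))\rangle|=|\mathcal{R}|^{n-\deg(h(x))}$, which equals $|C^\perp|$ since $|C|\,|C^\perp|=|\mathcal{R}|^n$ over a finite chain ring. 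You instead combine Lemma \ref{lem2.9} with Proposition \ref{prop3.2}~$i)$ to identify $C$ \emph{exactly} as the dual of $D=\langle x^{n-1}\varphi(h(x))\rangle$, dualize to get $C^\perp=(D^\perp)^\perp=D$, and then trade the generator $x^{n-1}\varphi(h(x))$ for $x^{\deg(h(x))}\varphi(h(x))$ using that $x$ is a unit in $\mathcal{R}[x;\Theta]/\langle x^n-\lambda\rangle$. The trade-off: your route makes $ii)$ independent of $i)$ and of the freeness/cardinality bookkeeping, but it leans on the double-dual identity $(D^\perp)^\perp=D$ for an arbitrary linear code over $\mathcal{R}$, whereas the paper leans on the cardinality identity $|C|\,|C^\perp|=|\mathcal{R}|^n$. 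Both facts have the same source (finite chain rings are Frobenius rings), and the paper itself invokes $(C^\perp)^\perp=C$ in the proof of Lemma \ref{lem2.pre9}, so your reliance on it is consistent with the paper's toolkit.
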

\begin{proof}
First, we prove $i)$.   Using the assumptions that $ n$   is a multiple of the order of~$\Theta$  and
$\lambda\in\mathcal{R}^\Theta $, we observe that
\begin{align*}
  \left(\varphi(g(x))(-\lambda) x^{n-{\deg(h(x))}}\right)\left(x^{\deg(h(x))}\varphi(h(x))\right)
  &=\varphi(g(x))(-\lambda)  x^{n }\varphi(h(x))\\
  &=-\lambda x^{n } \varphi(g(x))\varphi(h(x)) \\
  &=-\lambda x^{n }\varphi(h(x)g(x)), \\
  &(\text{since  } \varphi \text{ is a ring anti-monomorphism}) \\
  &=-\lambda x^{n }\varphi( x^n-\lambda) \\
  &=-\lambda x^{n } ( x^{-n}-\lambda) \\
  &=x^n- {\lambda}.
\end{align*}
As  $\varphi(g(x))(-\lambda) x^{n-{\deg(h(x))}}$ and $x^{\deg(h(x))}\varphi(h(x))$ belong to
$\mathcal{R}[x;\Theta]$, $x^{\deg(h(x))}\varphi(h(x))$ is a right divisor of $x^n-\lambda$ in
$\mathcal{R}[x;\Theta]$.


Next, we prove $ii)$. Since $g(x)h(x)=x^n-\lambda=0$ in $ \mathcal{R}[x;\Theta]/ \langle x^n-\lambda \rangle $,
by Lemma~\ref{lem2.9}, $\langle x^{\deg(h(x))}\varphi(h(x)) \rangle\subseteq C^\perp$. As
$x^{\deg(h(x))}\varphi(h(x))$ is a right divisor of $x^n-\lambda$, by Proposition \ref{prop:free},
 $|\langle x^{\deg(h(x))}\varphi(h(x))
\rangle|=|\mathcal{R}|^{n-\deg(h(x))}=|C^\perp|$. Therefore, $\langle x^{\deg(h(x))}\varphi(h(x)) \rangle=
C^\perp$.
\end{proof}

Necessary and sufficient conditions for a $\Theta$-$\lambda$-constacyclic code to be Euclidean self-dual
 are given in the next theorem.

\begin{theorem}\label{thm3.5} Assume that $\lambda^2=1$ and $n$ is even, denoted by $n=2k$.
Let $g(x)= \sum_{i=0}^{k-1}g_ix^i+x^{k}$ be  a right divisor of $x^n-\lambda$. Then the
$\Theta$-$\lambda$-constacyclic code generated by $g(x)$ is Euclidean self-dual if and only if
\begin{align} \label{exx}(\sum_{i=0}^{k-1}g_ix^i+x^{k})(\Theta^{-k}(g_0^{-1})+\sum_{i=1}^{k-1}\Theta^{i-k}(g_0^{-1}g_{k-i})x^i+x^{k})=x^{n}-\lambda.
\end{align}
\end{theorem}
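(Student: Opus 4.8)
The plan is to reduce Euclidean self-duality to an equality of monic generators and then read off that equality coefficient by coefficient. First I would record that, since $g(x)$ is monic of degree $k$, the right quotient $h(x):=\frac{x^n-\lambda}{g(x)}$ is monic of degree $n-k=k$, say $h(x)=\sum_{i=0}^{k-1}h_ix^i+x^k$, and that comparing constant terms in $g(x)h(x)=x^n-\lambda$ gives $g_0h_0=-\lambda$, so that $g_0$ and $h_0$ are units. By Proposition~\ref{prop:free} the code $C=\langle g(x)\rangle$ is free of rank $k$, and by Lemma~\ref{lem2.pre9} together with Theorem~\ref{thm3.4} its dual $C^\perp$ is again a $\Theta$-$\lambda$-constacyclic code generated by a right divisor of $x^n-\lambda$, hence also free of rank $k$. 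The structural fact I would isolate is that such a code determines its monic right-divisor generator uniquely: if two monic right divisors $d(x),d'(x)$ of $x^n-\lambda$ of equal degree satisfy $\langle d(x)\rangle=\langle d'(x)\rangle$, then writing $x^n-\lambda=A(x)d'(x)$ and applying the Right Division Algorithm shows $d'(x)$ right-divides $d(x)$, and equal degree plus monicity forces $d=d'$. Consequently $C=C^\perp$ holds if and only if $C$ and $C^\perp$ have the same monic right-divisor generator.

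Next I would compute that generator for $C^\perp$ explicitly. By Theorem~\ref{thm3.4}, $C^\perp=\langle x^{k}\varphi(h(x))\rangle$. Using $\varphi(h(x))=\sum_{i=0}^{k}x^{-i}h_i$ (with $h_k=1$) and the commutation rule $x^{j}a=\Theta^{j}(a)x^{j}$, I would rewrite
\[x^{k}\varphi(h(x))=\sum_{j=0}^{k}\Theta^{j}(h_{k-j})\,x^{j},\]
which is a right divisor of $x^n-\lambda$ with leading coefficient $\Theta^{k}(h_0)$ (a unit) and constant term $1$. Left-multiplying by the unit $(\Theta^{k}(h_0))^{-1}$ leaves the left ideal unchanged, keeps it a right divisor of $x^n-\lambda$ (the normalizing factor is a unit applied on the left), and produces the \emph{monic} generator
\[g^{**}(x)=\sum_{j=0}^{k}(\Theta^{k}(h_0))^{-1}\Theta^{j}(h_{k-j})\,x^{j}.\]

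Finally I would impose $g(x)=g^{**}(x)$ and match coefficients. Equality of constant terms gives $g_0=(\Theta^{k}(h_0))^{-1}$, that is $h_0=\Theta^{-k}(g_0^{-1})$; substituting $(\Theta^{k}(h_0))^{-1}=g_0$ into the remaining coefficient equations yields $h_{k-j}=\Theta^{-j}(g_0^{-1}g_j)$ for $1\le j\le k-1$, i.e. $h_i=\Theta^{i-k}(g_0^{-1}g_{k-i})$ for $1\le i\le k-1$. These are exactly the coefficients of the second factor
\[\tilde h(x):=\Theta^{-k}(g_0^{-1})+\sum_{i=1}^{k-1}\Theta^{i-k}(g_0^{-1}g_{k-i})x^i+x^k\]
appearing in~\eqref{exx}, so $C=C^\perp$ is equivalent to $h(x)=\tilde h(x)$. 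Since the right quotient of $x^n-\lambda$ by $g(x)$ is unique and equals $h(x)$, the identity $g(x)\tilde h(x)=x^n-\lambda$ holds if and only if $\tilde h(x)=h(x)$; chaining these equivalences gives the theorem. The main obstacle I anticipate is purely the noncommutative bookkeeping: transporting the $h_i$ past the powers of $x$ through $\Theta$ when forming $x^{k}\varphi(h(x))$, and keeping the left/right normalization straight so that the monic generator is genuinely obtained by a unit acting on the left.
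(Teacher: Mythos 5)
Your proposal is correct and takes essentially the same route as the paper's own proof: both invoke Theorem~\ref{thm3.4} to write $C^\perp=\langle x^{k}\varphi(h(x))\rangle$, compare that generator with $g(x)$ up to a unit left factor (your monic normalization $g^{**}$ versus the paper's $g^\perp(x)=\Theta^k(h_0)g(x)$), and read off the coefficient identities $h_0=\Theta^{-k}(g_0^{-1})$, $h_i=\Theta^{i-k}(g_0^{-1}g_{k-i})$ that are equivalent to~\eqref{exx}. Your explicit uniqueness-of-the-monic-generator argument simply spells out what the paper dismisses as ``easily seen,'' so there is nothing further to fix.
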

\begin{proof} Let $C$ be the $\Theta$-$\lambda$-constacyclic code
generated by $g(x)$ and let $g^\perp(x)$ be the generator polynomial of the Euclidean dual code $C^\perp$.
 Denote by
$h(x):=\sum_{i=0}^{k-1}h_ix^i+x^k$ the right quotient $\displaystyle\frac{x^n-\lambda}{g(x)}$.
 It
follows from Theorem~\ref{thm3.4} that
\begin{align}\label{eq:dual1}
g^\perp(x)=x^k\varphi(h(x))=\Theta^k(h_0)x^k+\dots+ \Theta(h_{k-1})x + 1 .
\end{align}

First, assume that $C$ is Euclidean self-dual.  It is easily seen
 that $g(x)$ is the unique monic generator of minimal degree
in $C$. Then $g^\perp(x)$ is a scalar multiple of $g(x)$ of the form
\begin{align}\label{eq:dual2}
g^\perp(x)=\Theta^k(h_0)g(x)=\Theta^k(h_0)(\sum_{i=0}^{k-1}g_ix^i+x^{k}) .
\end{align}
Comparing the coefficients in (\ref{eq:dual1}) and (\ref{eq:dual2}), we obtain
 $\Theta^k(h_0)g_0=1$  and $\Theta^k(h_0)g_i=\Theta^i(h_{k-i})$, for all
$i=1,2,\dots,k-1$.   Consequently,   $h_0=\Theta^{-k}(g_0^{-1})$ and
 $h_i=\Theta^i(h_0)\Theta^{i-k}(g_{k-i})= \Theta^{i-k}(g_0^{-1})\Theta^{i-k}(g_{k-i})=\Theta^{i-k}(g_0^{-1}g_{k-i}) $, for all $i=1,2,\dots,$ $k-1.$ and
$h(x)=\Theta^{-k}(g_0^{-1})+\sum_{i=1}^{k-1} \Theta^{i-k}(g_0^{-1}g_{k-i})x^i+x^k.$ Therefore,
  (\ref{exx}) holds.

Conversely, assume that (\ref{exx}) holds. Then $$h(x)=\Theta^{-k}(g_0^{-1})+
\sum_{i=1}^{k-1}\Theta^{i-k}(g_0^{-1}g_{k-i})x^i+x^{k}.$$ Hence, by Theorem~\ref{thm3.4},
\begin{align*}
g^\perp(x)=x^k\varphi(h(x))&= \sum_{i=1}^{k } (g_0^{-1}g_{ i})x^i+1= g_0^{-1}g(x).
\end{align*} This completes the proof.
\end{proof}
\begin{remark} From Theorem \ref{thm3.5}, we observe that  if there is a Euclidean self-dual $\Theta$-$\lambda$-constacyclic  code,
then $-\lambda=g_0\Theta^{-k}(g_0^{-1})=\Theta^{k}(g_0)g_0^{-1}$. Thus, if the order of $\Theta$ divides $k$ and
$\lambda\neq -1$, then there are no Euclidean self-dual $\Theta$-$\lambda$-constacyclic  codes of length $2k$.
In particular, if $\Theta$ is the identity automorphism and $\lambda\neq -1$, then there are no Euclidean
self-dual $\Theta$-$\lambda$-constacyclic codes of any length.
\end{remark}

\subsection{Hermitian Dual   Codes}
Due to the constraint in the definition of the Hermitian inner product, the Hermitian dual codes of skew
constacyclic codes are studied only when the order of $\Theta$ is $2$.  Using arguments similar to those in the
previous proofs,
  the following results concerning the Hermitian duality   are
obtained.
\begin{lemma}\label{lem2.pre10}
 Let  $C$ be a code of even length $n$ over
$\mathcal{R}$. Assume that the order  of $\Theta$ is  $2$. Then $C$ is $\Theta$-$\lambda$-constacyclic if and
only if $C^{\perp_H}$ is   $\Theta$-$\lambda^{-1}$-constacyclic. In particular, if $\lambda^2=1$, then  $C$ is
$\Theta$-$\lambda$-constacyclic if and only if $C^{\perp_H}$ is $\Theta$-$\lambda$-constacyclic.
\end{lemma}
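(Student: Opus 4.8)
The plan is to mimic the structure of the proof of Lemma~\ref{lem2.pre9}, replacing the Euclidean inner product by the Hermitian one and carefully tracking the extra application of $\Theta$ that the Hermitian form introduces. Concretely, I would fix $u=(u_0,\dots,u_{n-1})\in C$ and $v=(v_0,\dots,v_{n-1})\in C^{\perp_H}$ and, exactly as in the Euclidean case, start from the observation that $\rho_{\Theta,\lambda}^{\,n-1}(u)\in C$, so that $\langle \rho_{\Theta,\lambda}^{\,n-1}(u),v\rangle_H=0$. The goal is to manipulate this identity until it reads $\langle \rho_{\Theta,\lambda^{-1}}(v),u\rangle_H=0$ (up to a unit factor), which would show $\rho_{\Theta,\lambda^{-1}}(v)\in C^{\perp_H}$ and hence that $C^{\perp_H}$ is $\Theta$-$\lambda^{-1}$-constacyclic.

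The key computational steps are as follows. First I would write out $\rho_{\Theta,\lambda}^{\,n-1}(u)=(\Theta^{n-1}(\lambda u_1),\dots,\Theta^{n-1}(\lambda u_{n-1}),\Theta^{n-1}(u_0))$, paste it into the Hermitian form, and pull the unit $\lambda$ (which lies in $\mathcal{R}^\Theta$) out front just as before. The essential new feature is that $\langle\cdot,\cdot\rangle_H$ applies $\Theta$ to the entries of $v$, so after expanding I would apply $\Theta$ to the whole equation to shift the $\Theta$-powers; since the order of $\Theta$ is $2$ and $n$ is even (hence a multiple of the order of $\Theta$, as demanded by the running hypotheses), the exponent $n-1$ becomes effectively $\Theta^{-1}=\Theta$, and the composition $\Theta\circ\Theta$ collapses. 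Tracking these exponents modulo $2$ is what lets the $u_i$ factors come out untouched while the $v_{i-1}$ factors pick up exactly one $\Theta$, reassembling into $\langle\rho_{\Theta,\lambda^{-1}}(v),u\rangle_H$.

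For the converse I would simply invoke $(C^{\perp_H})^{\perp_H}=C$, which holds because the Hermitian form on $\mathcal{R}^n$ is nondegenerate (the code is a module over the finite chain ring $\mathcal{R}$, so double-dualing returns the original code). Finally, under the extra hypothesis $\lambda^2=1$ we have $\lambda=\lambda^{-1}$, so $\Theta$-$\lambda^{-1}$-constacyclic is the same as $\Theta$-$\lambda$-constacyclic, and the last sentence of the statement drops out immediately.

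The main obstacle I anticipate is the bookkeeping of the $\Theta$-exponents after applying $\Theta$ to the equation: because the Hermitian inner product already twists the second argument by $\Theta$, one must verify that combining this twist with the single extra $\Theta$ applied to the whole identity produces precisely the pattern $u_i\,\Theta(v_{i-1})$ with the correct $\lambda^{-1}$ on the wrap-around term, rather than $\Theta^2(v_{i-1})$ or a stray $\lambda$. This is exactly where the hypotheses that $\Theta$ has order $2$ and that $\lambda\in\mathcal{R}^\Theta$ are used, and getting the indices to close up correctly (so that the wrap-around coordinate carries $\lambda^{-1}$ and everything else is unscaled) is the only delicate part; the remainder is a routine transcription of the Euclidean argument.
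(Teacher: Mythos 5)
Your proposal is correct and matches the paper's intent exactly: the paper gives no separate proof of this lemma, stating only that it follows ``using arguments similar to those in the previous proofs,'' i.e.\ precisely the transcription of the proof of Lemma~\ref{lem2.pre9} that you carry out, and your bookkeeping is sound (since $\Theta$ has order $2$ and $n$ is even, $\Theta^{n-1}=\Theta$, every term of $\langle\rho_{\Theta,\lambda}^{\,n-1}(u),v\rangle_H$ is a product of images under $\Theta$, so $\Theta$ factors out of the whole sum and injectivity of $\Theta$ yields exactly the relation $\sum_{i=1}^{n-1}u_iv_{i-1}+\lambda^{-1}u_0v_{n-1}=0$ needed to conclude $\langle\rho_{\Theta,\lambda^{-1}}(v),u\rangle_H=0$). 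Your converse via $(C^{\perp_H})^{\perp_H}=C$ is also the same device the paper uses in the Euclidean case.
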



When $\lambda^2=1$, a generator of the Hermitian dual code of a $\Theta$-$\lambda$-constacyclic code
  is determined through the ring anti-monomorphism
$\varphi$ defined in Proposition \ref{prop2.7} and a ring automorphism $\phi$ on $\mathcal{R}[x;\Theta]$ defined
by
\begin{align}\label{eq-lem-phi}
\phi(\sum_{ i=0}^t a_ix^{i} ) =\sum_{ i=0}^t \Theta(a_i)x^{ i}.
\end{align}

\begin{lemma}\label{lem2.10}  Assume that   the
order of $\Theta$ is $2$ and $\lambda^2=1$. Let $a(x)=a_0+a_1x+\dots+a_{n-1}x^{n-1}$ and
$b(x)=b_0+b_1x+\dots+b_{n-1}x^{n-1}$ be in $\mathcal{R}[x;\Theta]$.  Then the following statements are
equivalent:
\begin{enumerate}[$i)$]
\item The coefficient vector of $a(x)$ is Hermitian orthogonal to the coefficient vector of  $x^i
\phi(x^{n-1} \varphi(b(x)))$  for all $i\in\{0,1,\dots,n-1\}$.%
\item $(a_0,a_1,\dots,a_{n-1})$ is Hermitian orthogonal to $(\Theta^{-1}(b_{n-1}),  b_{n-2} ,\dots,
\Theta^{n-2}(b_0))$ and all its
$\Theta$-$\lambda$-constacyclic shifts.%
\item $a(x)b(x)=0$ in $ \mathcal{R}[x;\Theta]/\langle x^n-\lambda\rangle$.%
\end{enumerate}
\end{lemma}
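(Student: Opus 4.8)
The plan is to mirror the proof of the Euclidean analogue, Lemma~\ref{lem2.9}, adapting each step to the Hermitian form $\langle u,v\rangle_H=\sum_i u_i\Theta(v_i)$ and carrying along the extra automorphism $\phi$ from~\eqref{eq-lem-phi}. First I would establish $i)\Leftrightarrow ii)$ by a direct computation of the coefficient vector of $\phi(x^{n-1}\varphi(b(x)))$. Using $\varphi(b(x))=\sum_{j}x^{-j}b_j$ together with $x^{-j}b_j=\Theta^{-j}(b_j)x^{-j}$, one finds $x^{n-1}\varphi(b(x))=\sum_{l=0}^{n-1}\Theta^{l}(b_{n-1-l})x^{l}$, and applying $\phi$ multiplies each coefficient by one further $\Theta$, giving $\sum_{l}\Theta^{l+1}(b_{n-1-l})x^{l}$. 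Since the order of $\Theta$ is $2$, we have $\Theta^{l+1}=\Theta^{l-1}$, so this coefficient vector is exactly $(\Theta^{-1}(b_{n-1}),b_{n-2},\dots,\Theta^{n-2}(b_0))$. Because multiplication by $x^{i}$ modulo $x^{n}-\lambda$ realizes the $\Theta$-$\lambda$-constacyclic shift $\rho_{\Theta,\lambda}$, statements $i)$ and $ii)$ are merely two phrasings of the same family of Hermitian orthogonality conditions.

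The substance lies in $ii)\Leftrightarrow iii)$, which I would handle exactly as in Lemma~\ref{lem2.9}. Writing $a(x)b(x)=\sum_k c_k x^k$ in $\mathcal{R}[x;\Theta]/\langle x^n-\lambda\rangle$ and reducing $x^{i+j}$ via $x^n=\lambda$, the same split into the cases $i+j=k$ and $i+j=k+n$ as in the Euclidean computation yields
\[
c_k=\sum_{i=0}^{k}a_i\Theta^{i}(b_{k-i})+\lambda\sum_{i=k+1}^{n-1}a_i\Theta^{i}(b_{k+n-i}).
\]
Using $\lambda^2=1$, $\lambda\in\mathcal{R}^\Theta$, and---crucially for the Hermitian case---the identity $\Theta^{i}=\Theta\circ\Theta^{i-1}$, I would factor out a single $\lambda$ and rewrite $c_k$ as $\lambda\langle(a_0,\dots,a_{n-1}),v^{(k)}\rangle_H$, each $\Theta^{i}(\cdot)$ being recast as $\Theta$ applied to a $\Theta^{i-1}$-twisted coefficient of $b$. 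A short verification---comparing $v^{(k)}$ against $\rho_{\Theta,\lambda}$ applied to the base vector and using that the order of $\Theta$ divides $n$, that $n$ is even, and that $\lambda^n=1$---shows that $v^{(k)}$ is precisely a $\Theta$-$\lambda$-constacyclic shift of $(\Theta^{-1}(b_{n-1}),b_{n-2},\dots,\Theta^{n-2}(b_0))$, and that as $k$ runs over $\{0,\dots,n-1\}$ these shifts are exhausted. Hence $a(x)b(x)=0$ if and only if every $c_k=0$, which holds if and only if $a$ is Hermitian orthogonal to the base vector and all its constacyclic shifts.

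I expect the only real obstacle to be bookkeeping: the Hermitian form contributes one conjugating $\Theta$ and the auxiliary map $\phi$ contributes another, so one must repeatedly invoke $\Theta^2=\mathrm{id}$ (equivalently $\Theta^{-1}=\Theta$) to collapse exponents, and one must confirm that these two copies of $\Theta$ combine so that the computation reduces to the Euclidean pattern of Lemma~\ref{lem2.9} rather than shifting every index by one. Once the coefficient vector of $\phi(x^{n-1}\varphi(b(x)))$ is pinned down and the $c_k$ are rewritten in Hermitian-inner-product form, the three equivalences follow as in the Euclidean case.
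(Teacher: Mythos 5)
Your proposal is correct and takes essentially the same approach as the paper: the paper omits the proof of Lemma~\ref{lem2.10} entirely, stating only that it follows ``using arguments similar to those in the previous proofs,'' i.e.\ by adapting Lemma~\ref{lem2.9}, which is precisely what you carry out. Your key computations check out --- the extra $\Theta$ contributed by $\phi$ collapses via $\Theta^2=\mathrm{id}$ to give the coefficient vector $(\Theta^{-1}(b_{n-1}),b_{n-2},\dots,\Theta^{n-2}(b_0))$, and each $c_k$ becomes $\lambda\langle (a_0,\dots,a_{n-1}),\rho_{\Theta,\lambda}^{k+1}(v)\rangle_H$ with the shifts exhausted as $k$ varies (up to the harmless unit factor $\lambda$ from $\rho_{\Theta,\lambda}^{n}(v)=\lambda v$).
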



\begin{theorem}\label{thm3.6}
Assume that the order of $\Theta$ is $2$ and $\lambda^2=1$. Let $g(x)$ be a right divisor of $x^n-\lambda$ and
$h(x):=\displaystyle\frac{x^n-\lambda}{g(x)}$. Let  $C$ be the $\Theta$-$\lambda$-constacyclic code generated by
$g(x)$. Then the following statements hold:
\begin{enumerate}[$i)$]
 \item The skew polynomial $\phi(x^{\deg(h(x))}\varphi(h(x)))$ is a right divisor of $x^n-\lambda$.%
 \item The Hermitian dual   $C^{\perp_H}$ is  a
$\Theta$-$\lambda$-constacyclic code generated by $$\phi(x^{\deg(h(x))}\varphi(h(x))).$$
\end{enumerate}
\end{theorem}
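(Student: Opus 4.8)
The plan is to mirror the proof of Theorem~\ref{thm3.4}, replacing the Euclidean inner product by the Hermitian one and inserting the ring automorphism $\phi$ at the right places. The driving tools are now Lemma~\ref{lem2.10} in place of Lemma~\ref{lem2.9}, Lemma~\ref{lem2.pre10} to guarantee that $C^{\perp_H}$ is again $\Theta$-$\lambda$-constacyclic, and the elementary observation that $\phi$ is a degree-preserving ring automorphism of $\mathcal{R}[x;\Theta]$ which fixes powers of $x$ and fixes $x^n-\lambda$.

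For part $i)$ I would avoid redoing the long computation and instead use Theorem~\ref{thm3.4}$\,i)$ as a black box. It provides some $q(x)\in\mathcal{R}[x;\Theta]$ with $q(x)\bigl(x^{\deg(h(x))}\varphi(h(x))\bigr)=x^n-\lambda$. Since $\phi$ is a ring automorphism fixing $x^n-\lambda$ — the coefficients $1$ and $-\lambda$ lie in $\mathcal{R}^\Theta$, so $\phi(x^n-\lambda)=x^n-\lambda$ — applying $\phi$ gives $\phi(q(x))\,\phi\bigl(x^{\deg(h(x))}\varphi(h(x))\bigr)=x^n-\lambda$ with $\phi(q(x))\in\mathcal{R}[x;\Theta]$. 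Thus $\phi\bigl(x^{\deg(h(x))}\varphi(h(x))\bigr)$ is a right divisor of $x^n-\lambda$. En route I would record that $\phi$ preserves degree, because $\Theta$ is injective and hence cannot annihilate a leading coefficient; this is also what makes the containment $\phi(\mathcal{R}[x;\Theta])\subseteq\mathcal{R}[x;\Theta]$ and the later counting work.

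For part $ii)$, set $D:=\langle\phi(x^{\deg(h(x))}\varphi(h(x)))\rangle$. First, using $\phi(x^m)=x^m$ and multiplicativity of $\phi$, I would note $\phi\bigl(x^{n-1}\varphi(h(x))\bigr)=x^{\,n-1-\deg(h(x))}\phi\bigl(x^{\deg(h(x))}\varphi(h(x))\bigr)$, so that $D$ coincides with the ideal generated by the vector appearing in Lemma~\ref{lem2.10}, since $x$ is invertible in $\mathcal{R}[x;\Theta]/\langle x^n-\lambda\rangle$. Then for every $c(x)\in C$ we have $c(x)=p(x)g(x)$, whence $c(x)h(x)=p(x)(x^n-\lambda)=0$ in the quotient; by Lemma~\ref{lem2.10} (with $a(x)=c(x)$, $b(x)=h(x)$) this forces $c$ to be Hermitian orthogonal to the generating vector of $D$ and all its $\Theta$-$\lambda$-constacyclic shifts, i.e. $\langle c,d\rangle_H=0$ for all $d\in D$. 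The relation $\langle v,u\rangle_H=\Theta(\langle u,v\rangle_H)$ (valid because the order of $\Theta$ is $2$) together with injectivity of $\Theta$ lets me flip this to $D\subseteq C^{\perp_H}$. I would close by counting: $\phi$ preserves degree, so $\deg\phi(x^{\deg(h(x))}\varphi(h(x)))=\deg(h(x))$, and part $i)$ with Proposition~\ref{prop:free} give $|D|=|\mathcal{R}|^{\,n-\deg(h(x))}=|\mathcal{R}|^{\deg(g(x))}=|C^{\perp_H}|$; combined with $D\subseteq C^{\perp_H}$ this forces $D=C^{\perp_H}$, which is $\Theta$-$\lambda$-constacyclic by Lemma~\ref{lem2.pre10}.

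The step needing the most care is the passage from $\langle c,d\rangle_H=0$ for all $c\in C,\ d\in D$ to $D\subseteq C^{\perp_H}$, precisely because the Hermitian form is not symmetric: one must invoke $\langle v,u\rangle_H=\Theta(\langle u,v\rangle_H)$ and the injectivity of $\Theta$ to turn vanishing of $\langle c,d\rangle_H$ into vanishing of $\langle d,c\rangle_H$. A secondary point to verify is the identification $D=\langle\phi(x^{n-1}\varphi(h(x)))\rangle$, namely that right-shifting by the unit $x$ in the quotient leaves the generated ideal unchanged; this is what reconciles the generator named in the statement with the reversed vector supplied by Lemma~\ref{lem2.10}.
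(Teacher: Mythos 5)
Your proof is correct and takes essentially the same route as the paper's: part $i)$ by applying the ring automorphism $\phi$ to the divisibility identity coming from Theorem~\ref{thm3.4}$\,i)$ (the paper reuses the explicit identity from that proof, you use the statement as a black box --- same trick), and part $ii)$ by combining Lemma~\ref{lem2.10} with the cardinality count via Proposition~\ref{prop:free}. The extra details you supply --- the flip $\langle v,u\rangle_H=\Theta(\langle u,v\rangle_H)$ and the identification of the generator with the Lemma~\ref{lem2.10} vector up to a unit power of $x$ --- are points the paper leaves implicit, not deviations.
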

\begin{proof}  From the proof of Theorem \ref{thm3.4},   we have
$$
 \varphi(g(x))(-\lambda x^{n-{\deg(h)}})x^{\deg(h)}\varphi(h(x))  =x^n-
{\lambda}. $$
  Then $ \phi(\varphi(g(x))(-\lambda x^{n-{\deg(h)}})) \phi(x^{\deg(h(x))}\varphi(h(x))) =\phi(x^n-
{\lambda})=x^n- {\lambda}. $ Therefore, $\phi(x^{\deg(h(x))}\varphi(h(x)))$ is a right divisor of $x^n-
{\lambda}$, which yields $i)$.

Since $g(x)h(x)=x^n-\lambda=0$ in $ \mathcal{R}[x;\Theta]/ \langle x^n-\lambda \rangle $, by   Lemma
\ref{lem2.10}, $$\langle \phi(x^{\deg(h(x))}\varphi(h(x)) )\rangle\subseteq C^{\perp_H}.$$ Since
$\phi(x^{\deg(h(x))}\varphi(h(x)))$ is a right divisor of $x^n-\lambda$,   by Proposition \ref{prop:free},
$$|\langle \phi(x^{\deg(h(x))}\varphi(h(x))) \rangle|=|\mathcal{R}|^{n-\deg(h(x))}=|C^{\perp_H}|.$$ Therefore,
$\langle \phi(x^{\deg(h(x))}\varphi(h(x))) \rangle= C^{\perp_H}$. This proves $ii)$.
\end{proof}

  Necessary and sufficient
conditions for a $\Theta$-$\lambda$-constacyclic code to be Hermitian self-dual are given. The proof follows as
an application of the proof of Theorem \ref{thm3.5}.
\begin{theorem}\label{thm3.7}
Assume that the order of $\Theta$ is $2$, $\lambda^2=1$ and $n$ is even, denoted by $n=2k$. Let $g(x)=
\sum_{i=0}^{k-1}g_ix^i+x^{k}$ be a right divisor of $x^n-\lambda$. Then the $\Theta$-$\lambda$-constacyclic code
generated by $g(x)$ is Hermitian self-dual if and only if
\begin{align*} (\sum_{i=0}^{k-1}g_ix^i+x^{k})(\Theta^{-k-1}(g_0^{-1})+\sum_{i=1}^{k-1}\Theta^{i-k-1}(g_0^{-1}g_{k-i})x^i+x^{k})=x^{n}-\lambda.
\end{align*}
\end{theorem}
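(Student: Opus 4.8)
The plan is to imitate the proof of Theorem~\ref{thm3.5} line by line, the only structural change being that the generator of the dual is now supplied by Theorem~\ref{thm3.6} rather than by Theorem~\ref{thm3.4}. By Lemma~\ref{lem2.pre10} the Hermitian dual $C^{\perp_H}$ is again $\Theta$-$\lambda$-constacyclic, so throughout I write $h(x)=\sum_{i=0}^{k-1}h_ix^i+x^k$ for the right quotient $\displaystyle\frac{x^n-\lambda}{g(x)}$ and set $C=\langle g(x)\rangle$. By Theorem~\ref{thm3.6}, $C^{\perp_H}=\langle g^{\perp_H}(x)\rangle$ with $g^{\perp_H}(x):=\phi(x^{k}\varphi(h(x)))$, and part $i)$ of that theorem guarantees that $g^{\perp_H}(x)$ is a right divisor of $x^n-\lambda$ of degree $k$.

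The first concrete step is to write $g^{\perp_H}(x)$ out coefficient by coefficient. Commuting the factor $x^k$ past the coefficients of $\varphi(h(x))$ (using $x^j a=\Theta^j(a)x^j$ in the right localization) shows that the coefficient of $x^i$ in $x^k\varphi(h(x))$ is $\Theta^i(h_{k-i})$, exactly as in Theorem~\ref{thm3.5}; since $\phi$ applies one further $\Theta$ to every coefficient, the coefficient of $x^i$ in $g^{\perp_H}(x)$ is $\Theta^{i+1}(h_{k-i})$. In particular its constant term is $\Theta(h_k)=\Theta(1)=1$ and its leading coefficient is $\Theta^{k+1}(h_0)$, which is a unit because $g_0h_0=-\lambda$ forces $h_0$ to be a unit. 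For the forward direction I would argue, just as in Theorem~\ref{thm3.5}, that $g(x)$ is the unique monic generator of minimal degree of $C$; assuming $C=C^{\perp_H}$, the degree-$k$ generator $g^{\perp_H}(x)$ of $C$ must then be the scalar multiple $g^{\perp_H}(x)=\Theta^{k+1}(h_0)g(x)$. Comparing the coefficient of $x^i$ on both sides gives $\Theta^{k+1}(h_0)g_0=1$ and $\Theta^{k+1}(h_0)g_i=\Theta^{i+1}(h_{k-i})$ for $1\le i\le k-1$; solving yields $h_0=\Theta^{-k-1}(g_0^{-1})$ and $h_i=\Theta^{i-k-1}(g_0^{-1}g_{k-i})$. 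Since $g(x)h(x)=x^n-\lambda$ by Proposition~\ref{prop2.6}, substituting this $h(x)$ produces exactly the displayed identity.

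For the converse I would assume the displayed identity, read off from it that $h(x)=\displaystyle\frac{x^n-\lambda}{g(x)}$ has $h_0=\Theta^{-k-1}(g_0^{-1})$ and $h_i=\Theta^{i-k-1}(g_0^{-1}g_{k-i})$, and feed these into the coefficient formula $\Theta^{i+1}(h_{k-i})$ for $g^{\perp_H}(x)$. The $\Theta$-powers then cancel: the coefficient of $x^i$ becomes $g_0^{-1}g_i$ for $1\le i\le k$ and the constant term is $1$, so that $g^{\perp_H}(x)=g_0^{-1}g(x)$ and hence $C^{\perp_H}=\langle g_0^{-1}g(x)\rangle=\langle g(x)\rangle=C$. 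The only real work, and the place where care is needed, is the $\Theta$-exponent bookkeeping: the single extra automorphism contributed by $\phi$ is precisely what converts the Euclidean exponents $-k$ and $i-k$ of Theorem~\ref{thm3.5} into the Hermitian exponents $-k-1$ and $i-k-1$, and one must invoke that $\Theta$ fixes $1$ so that the constant term and the cancellation in the leading coefficient come out correctly.
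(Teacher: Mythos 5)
Your proposal is correct and follows essentially the same route the paper intends: the paper's proof of Theorem~\ref{thm3.7} is simply the statement that it ``follows as an application of the proof of Theorem~\ref{thm3.5}'', with Theorem~\ref{thm3.6} supplying the dual generator $\phi(x^{k}\varphi(h(x)))$ in place of Theorem~\ref{thm3.4}. Your coefficient bookkeeping --- the extra application of $\Theta$ coming from $\phi$ turning the exponents $-k$ and $i-k$ into $-k-1$ and $i-k-1$, with the constant term and leading coefficient handled via $\Theta(1)=1$ and $g_0h_0=-\lambda$ --- is exactly the computation the paper leaves implicit.
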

\begin{remark} Suppose there is a  Hermitian self-dual $\Theta$-$\lambda$-constacyclic code. Then, by   Theorem
\ref{thm3.7}, we have $-\lambda=g_0\Theta^{-k-1}(g_0^{-1})$. Since $\lambda$ is fixed by $\Theta$, it follows
that $\lambda=-\Theta^{k+1}(g_0)g_0^{-1}$. As the order of $\Theta$ is $2$,
\begin{align*}
\lambda=\begin{cases} -1~~~~~~~~&\text{ if } k \text{ is odd,}\\
                       -\Theta(g_0)g_0^{-1}&\text{ if } k \text{ is even.}
\end{cases}
\end{align*}
Therefore,  if $k$ is odd and $\lambda\neq -1$, then there are no  Hermitian self-dual
$\Theta$-$\lambda$-constacyclic codes of length $2k$.
\end{remark}



%
%

\section{Skew Constacyclic Codes over
$\mathbb{F}_{p^m}+u\mathbb{F}_{p^m}$}

The class of finite chain rings of the form $\mathbb{F}_{p^m}+u\mathbb{F}_{p^m}$ has  widely been used as
alphabet in certain constacyclic codes (see, for example,  \cite{AmNe2008}, \cite{BoUd1999}, \cite{Di2009},
\cite{Di2010}, \cite{QiZa2006} and \cite{UdBo1999}). In this section, we characterize the structure of all
$\Theta$-$\lambda$-constacyclic codes over this ring  under the conditions where $\lambda $ is a unit in
$\mathbb{F}_{p^m}+u\mathbb{F}_{p^m}$ fixed by a given automorphism $\Theta$  and the length $n$ of codes is a
multiple of the order of $\Theta$. Moreover, the structures of Euclidean and Hermitian dual codes of skew cyclic
and skew negacyclic codes over this ring are determined as well.

 Recall that $\mathbb{F}_{p^m}+u\mathbb{F}_{p^m}$ is a
finite chain ring of nilpotency index $2$ and characteristic~$p$. Its only  maximal ideal is
$u\mathbb{F}_{p^m}$. The residue field $\mathcal{K}$ of $\mathbb{F}_{p^m}+u\mathbb{F}_{p^m}$ will be viewed as
the subfield $\mathbb{F}_{p^m}$ of $\mathbb{F}_{p^m}+u\mathbb{F}_{p^m}$.  Every automorphism
 of
$\mathbb{F}_{p^m}+u\mathbb{F}_{p^m} $ is of the form $\Theta_{\theta,\beta}(a+bu)=\theta(a)+\beta\theta(b) u$,
where $\theta\in \Aut(\mathbb{F}_{p^m})$ and $\beta\in \mathbb{F}_{p^m}^*$ (cf. Corollary \ref{cor-Auto} or
{\cite[Proposition 1]{Al-1991}}). For simplicity, where no confusion arises, the subscripts $ \theta$ and
$\beta$ will be dropped.

As the residue field $\mathcal{K}$ of $\mathbb{F}_{p^m}+u\mathbb{F}_{p^m}$ is viewed as the subfield
$\mathbb{F}_{p^m}$, the ring epimorphism   $ \bar{}: \mathbb{F}_{p^m}+u\mathbb{F}_{p^m} \rightarrow
\mathbb{F}_{p^m}$ can be viewed as the   reduction modulo $u$. For $f(x)\in
(\mathbb{F}_{p^m}+u\mathbb{F}_{p^m})[x;\Theta]$, $\overline{f(x)}$ denotes the isomorphic image in $
 \mathbb{F}_{p^m} [x;\theta]\subsetneq (\mathbb{F}_{p^m}+u\mathbb{F}_{p^m})[x;\Theta]$
of the componentwise reduction modulo $u$ of
 $f(x)$. Since  every skew polynomial in $ (\mathbb{F}_{p^m}+u\mathbb{F}_{p^m})[x;\Theta]$
 is viewed as  $  f_0(x)+uf_1(x)$, where $f_0(x), f_1(x)\in \mathbb{F}_{p^m}
 [x;\theta]$, we have
  $\overline{f_0(x)+uf_1(x)}=f_0(x)\in \mathbb{F}_{p^m}
 [x;\theta]$.


 For    $f(x) $ in  $ (\mathbb{F}_{p^m}+u\mathbb{F}_{p^m})[x;\Theta]$,
the multiplication rule allows the shifting of $u$ and powers of $x$ from the left to the right of $f(x)$ (and
vice versa) by changing the coefficients of $f(x)$. Then, for $\Omega\in\{u,x^i\mid i\in \mathbb{N}\}$, it is
meaningful to give
  the following notations:
\begin{enumerate}[$i)$]
\item ${\overleftarrow{f(x)}}^\Omega$  denotes the    skew polynomial such that
$f(x)\Omega=\Omega {\overleftarrow{f(x)}}^\Omega,$%
\item ${\overrightarrow{f(x)}}^\Omega$   denotes the skew polynomial
 such that   $\Omega f(x)= {\overrightarrow{f(x)}}^\Omega \Omega.$%
\end{enumerate}

\subsection{Classification of Skew Constacyclic Codes over
$\mathbb{F}_{p^m}+u\mathbb{F}_{p^m}$}

In this subsection, the  classification of $\Theta$-$\lambda$-constacyclic codes  is given in terms of
generators of left ideals in $(\mathbb{F}_{p^m}+u\mathbb{F}_{p^m})[x;\Theta]/\langle x^n-\lambda \rangle $.
These generators are uniquely determined under some   conditions. Their properties are also given.

 Let $C$ be a non-zero left ideal in $(\mathbb{F}_{p^m}+u\mathbb{F}_{p^m})[x;\Theta]/\langle  x^n-\lambda\rangle
  $ and let $A$ denote the set of
  all non-zero skew
polynomials of minimal degree in~$C$. Clearly, $A$ is non-empty. We consider three  cases: when there is a monic
skew polynomial in $A$, when there are no monic skew polynomials in $C$, and when there are no monic skew
polynomials in $A$ but  there is a monic skew polynomial in $C$.

\begin{theorem}\label{prop4.1}Let $C$ and $A$ be as above. Then:
\begin{enumerate}[$i)$]
\item If there exists a monic skew polynomial in $A$, then it is unique in $A$. In this case,
 $C=\langle g(x)\rangle$, where $g(x)$ is  the unique such skew polynomial.%
\item If there are no monic skew polynomials in $C$,
 then  there exists a unique skew polynomial $ g(x)=ug_1(x)$ in $A$  with  leading coefficient $u$. In
 this case, $C=\langle  g(x) \rangle$.   %
\item If there are no monic skew polynomials in $A$ but  there exists a monic skew polynomial in $C$, then there
exist a   unique skew polynomial $ g(x)=ug_1(x)$ in $A$ with  leading coefficient $u$ and a unique monic skew
polynomial $f(x)=f_0(x)+uf_1(x)$ of minimal degree in $C$ such that $\deg(f_1(x))<\deg(g_1(x))$. In this
case, $C=\langle g(x), f(x) \rangle$.   %
\end{enumerate}
\end{theorem}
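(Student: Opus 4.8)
The plan is to work in the ring $R:=(\mathbb{F}_{p^m}+u\mathbb{F}_{p^m})[x;\Theta]/\langle x^n-\lambda\rangle$ and exploit the fact that the reduction map $\bar{\ }\colon R\to\mathbb{F}_{p^m}[x;\theta]/\langle x^n-\bar\lambda\rangle$ is a ring homomorphism, where $u\mapsto 0$. The key structural observation is that a skew polynomial in $A$ (the set of nonzero elements of $C$ of minimal degree) either has a unit leading coefficient, in which case after scaling it becomes monic, or it has leading coefficient a nonzero multiple of $u$, i.e.\ lies in $uR$. Since $\mathbb{F}_{p^m}+u\mathbb{F}_{p^m}$ is a chain ring with maximal ideal $u\mathbb{F}_{p^m}$, every element is either a unit or a multiple of $u$; so the leading coefficient of any element of $A$ is, up to a unit, either $1$ or $u$. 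This trichotomy is exactly the case split in the statement, and each case is proved by a combination of uniqueness-via-minimality arguments and the right division algorithm.

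For part $i)$, I would first show uniqueness: if $g(x)$ and $g'(x)$ are two monic elements of $A$, then $g(x)-g'(x)\in C$ has strictly smaller degree, so by minimality it must be zero. To show $C=\langle g(x)\rangle$, I would take any $c(x)\in C$ and apply the Right Division Algorithm (valid since $g(x)$ is monic, hence has unit leading coefficient) to write $c(x)=q(x)g(x)+r(x)$ with $\deg(r(x))<\deg(g(x))$; since $r(x)=c(x)-q(x)g(x)\in C$ has degree below the minimal degree, it is zero, giving $c(x)\in\langle g(x)\rangle$. For part $ii)$, the hypothesis that $C$ has no monic skew polynomial forces every element of $A$ to have leading coefficient in $u\mathbb{F}_{p^m}$; after scaling by a unit in $\mathbb{F}_{p^m}^*$ I obtain $g(x)=ug_1(x)$ with leading coefficient exactly $u$, and uniqueness follows as before by subtracting two such candidates. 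The generation claim $C=\langle g(x)\rangle$ again uses a division argument, but here I must be careful because $u$ is not a unit, so the standard Right Division Algorithm does not directly apply; instead I would divide by $g_1(x)$ (whose leading coefficient is a unit in $\mathbb{F}_{p^m}$) and track the factor of $u$, using that $uc(x)$ has the same minimal-degree behavior.

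For part $iii)$, which is the main obstacle, both a monic polynomial $f(x)$ and a $u$-leading polynomial $g(x)=ug_1(x)$ are needed. The existence of $g(x)\in A$ with leading coefficient $u$ comes from the fact that $A$ contains no monic element, so as in $ii)$ its elements have $u$-multiple leading coefficients; uniqueness is by minimality. The monic $f(x)=f_0(x)+uf_1(x)$ of minimal degree among monic elements of $C$ exists by hypothesis, and I would normalize its $u$-part by reducing $f_1(x)$ modulo $g_1(x)$: since $ug_1(x)\in C$ and $u\cdot uf_1(x)=0$, I can subtract a suitable left multiple $u\,q(x)\,g_1(x)=q(x)\,ug_1(x)\in C$ to force $\deg(f_1(x))<\deg(g_1(x))$ without changing $f_0(x)$ or the monic property, which also yields uniqueness of $f(x)$ under this degree constraint. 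The inclusion $\langle g(x),f(x)\rangle\subseteq C$ is immediate; for the reverse inclusion I would take $c(x)\in C$, use $f(x)$ (monic, so unit leading coefficient) in the Right Division Algorithm to kill the high-degree behavior, and then handle the remainder—which must lie in $uR\cap C$—by dividing against $g(x)$. The delicate point is verifying that the remainder after dividing by $f(x)$ is forced into the left ideal generated by $g(x)$; this rests on the minimality of $\deg(g_1(x))$ and on reducing modulo $u$ to compare with the field case $\overline{C}\subseteq\mathbb{F}_{p^m}[x;\theta]/\langle x^n-\bar\lambda\rangle$, where the classical theory of skew cyclic codes over a field supplies the needed principal generator $\overline{f(x)}=f_0(x)$.
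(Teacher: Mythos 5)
Your part $i)$ and your uniqueness arguments coincide with the paper's proof, and your overall trichotomy is the right one, but in parts $ii)$ and $iii)$ your sketch asserts exactly the points that need proof. The first gap: you pass from ``every element of $A$ has leading coefficient in $u\mathbb{F}_{p^m}$'' to ``$g(x)=ug_1(x)$'' by scaling with a unit. Scaling only normalizes the \emph{leading} coefficient to $u$; it does not show that the lower coefficients are multiples of $u$, which is what $g(x)=ug_1(x)$ with $g_1(x)\in\mathbb{F}_{p^m}[x;\theta]$ means, and which you need later in order to divide by $g_1(x)$ at all. The missing argument (the paper's) is: if $g(x)$ had a unit coefficient below the top, then $ug(x)$ would be a nonzero element of $C$ of strictly smaller degree, contradicting minimality. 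The same issue undermines your generation argument in $ii)$: to divide an arbitrary $c(x)\in C$ against $g_1(x)$ and stay inside $C$, you must first know $c(x)=uc_1(x)$ (note $g_1(x)\notin C$, so $c(x)-q(x)g_1(x)$ need not lie in $C$), and ``$C$ contains no monic polynomial'' only controls leading coefficients; upgrading that to full $u$-divisibility of every element of $C$ needs its own argument, which ``$uc(x)$ has the same minimal-degree behavior'' does not supply. The paper's descent avoids all of this: it takes a minimal-degree counterexample $h(x)$, normalizes only its leading coefficient to $u$, and subtracts $u x^{\ell-\deg(g(x))}g_1(x)={\overrightarrow{x^{\ell-\deg(g(x))}}}^u g(x)$ to drop the degree, never needing any claim about the lower coefficients of $h(x)$. (A related slip: $u\,q(x)\,g_1(x)=q(x)\,ug_1(x)$ is false in general, since $xu=\Theta(u)x=\beta ux$; one must write $u\,q(x)\,g_1(x)={\overrightarrow{q(x)}}^u\,ug_1(x)$, which is still a left multiple of $g(x)$, so the conclusion survives.)

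The second and more serious gap is in your part $iii)$: the crux of your route is that the remainder $r'(x)$ of $c(x)$ upon right division by $f(x)$ lies in $u\cdot(\mathbb{F}_{p^m}+u\mathbb{F}_{p^m})[x;\Theta]$, which you justify by saying classical field theory gives $\overline{C}=\langle f_0(x)\rangle$. Classical theory only says $\overline{C}$ is generated by its monic element of minimal degree; the statement that this element \emph{is} $f_0(x)$ --- equivalently, that no element of $C$ reduces modulo $u$ to a nonzero polynomial of degree less than $\deg(f_0(x))$ --- is precisely the nontrivial point, and assuming it is close to assuming the conclusion, since $\overline{C}=\langle f_0(x)\rangle$ follows immediately from $C=\langle ug_1(x),f(x)\rangle$. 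The claim can be proved non-circularly: if $m(x)\in\overline{C}$ were monic of degree less than $\deg(f_0(x))$, lift it to $m(x)+uc_1(x)\in C$, divide $c_1(x)$ by $g_1(x)$ and subtract ${\overrightarrow{q(x)}}^u ug_1(x)$ to get $m(x)+ur(x)\in C$ with $\deg(r(x))<\deg(g_1(x))$; then either the leading coefficient is a unit, contradicting minimality of $\deg(f_0(x))$ among monic elements of $C$, or the degree is $\deg(r(x))<\deg(g_1(x))$, contradicting minimality of $\deg(g(x))$ in $C$. But this argument is absent from your sketch, and your uniqueness claim for $f(x)$ is likewise asserted rather than proved. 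The paper sidesteps reduction modulo $u$ entirely: it shows by the same descent as in $ii)$ that every element of $C$ of degree less than $\deg(f(x))$ is a left multiple of $ug_1(x)$, then divides $c(x)$ by $f(x)$. So your route for $iii)$ is genuinely different and repairable, but as written the $u$-divisibility claims and the identification $\overline{C}=\langle f_0(x)\rangle$ are gaps, not consequences of what you established.
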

\begin{proof}
To prove $i)$, assume that $g(x)$ and $g^\prime(x)$ are monic skew polynomials in $A$. Then  the degree of
$g(x)-g^\prime(x)$ is less than the degree of $g(x)$. By the minimality of   $\deg(g(x))$, $g(x)-g^\prime(x)=0$.
Hence, $g(x)$ is the unique monic skew polynomial in~$A$.

Let $c(x)\in C$. Then by the Right Division Algorithm, there exist unique skew polynomials $q(x)$ and $r(x)$ in
$(\mathbb{F}_{p^m}+u\mathbb{F}_{p^m})[x;\Theta]$ such that
\[c(x)=q(x)g(x)+r(x),\]
and $r(x)=0$ or $\deg(r(x))<\deg(g(x))$. Then
\[r(x)=c(x)-q(x)g(x)\in C.\] By  the  minimality of $\deg(g(x))$,
$r(x)=0$. Hence $c(x)=q(x)g(x)$, i.e., $C=\langle g(x)\rangle$.

 To prove $ii)$, assume there are no monic skew polynomials in $C$.  Without loss of generality, let $g(x)$
 be   a skew polynomial  in $A$ with   leading coefficient $u$.
 First, we show that $g(x)$ is a right multiple of $u$. Suppose that $g(x)$ has a unit coefficient  $a_i$ for some
$i<\deg(g(x))$. Then $u g(x) \in C$ is a non-zero skew polynomial having degree less than $\deg(g(x))$, which
contradicts the minimality of $\deg(g(x))$. Hence $g(x)$ is a right multiple of $u$, and we write $g(x)=u
g_1(x)$, where $g_1(x)$ is a monic skew polynomial in $\mathbb {F}_{p^m}[x;\theta]$.

For the uniqueness, suppose that $g^\prime(x)$ is
 a skew polynomial  in $A$ with   leading coefficient $u$. Then the degree of
$g(x)-g^\prime(x)$ is less than the degree of $g(x)$. By the minimality of $\deg(g(x))$, $g(x)-g^\prime(x)=0$.
Hence, $g(x)=u g_1(x)$ is the unique skew polynomial  in $A$ with   leading coefficient $u$.

 Now, we show that $C$ is generated by $g(x)=u g_1(x)$. Suppose
 that there exists $h(x)$ in $C$
 of minimal degree $\ell$ which is not a left multiple of
 $g(x)=u g_1(x)$. Moreover, $h(x)$ can be chosen to have   leading coefficient $u$.
   Then
 \begin{align*}
k(x):&=h(x)- u x^{\ell-\deg(g(x))}g_1(x)\\
     &=h(x)-{\overrightarrow{ x^{\ell-\deg(g(x))}}}^u ug_1(x)\\
     &=h(x)-    {\overrightarrow{ x^{\ell-\deg(g(x))}}}^u  {g}(x)\in C.
 \end{align*}
  If
$k(x)=0$, then $h(x)= {\overrightarrow{ x^{\ell-\deg(g(x))}}}^u {g}(x)$ which contradicts the assumption.
Suppose $k(x)\neq 0$. Then the degree of $k(x)$ is less than $\ell$ and $k(x)$ is not a left multiple of $g(x)$,
contradicting the choice of $h(x)$.

Finally, we prove $iii)$. Assume there are no monic skew polynomials in $A$ but there exists a monic skew
polynomial in $C$. It can be shown as in   $ii)$ that there is   a unique skew polynomial $g(x)=u g_1(x)$ in $
A$  with leading coefficient $u$.

Let $F(x)$ be a monic skew polynomial of minimal degree   in $C$. We view $F(x)=F_0(x)+u F_1(x)$, where
$F_0(x),F_1(x)\in \mathbb{F}_{p^m}[x;\theta]$.
 By the Right
Division Algorithm, there exist unique skew  polynomials $q(x)$ and $r(x)$ in $\mathbb {F}_{p^m}[x;\theta]$ such
that
\[F_1(x)=q(x)g_1(x)+r(x),\]
and $r(x)=0$ or $\deg(r(x))<\deg(g_1(x))$. Thus
\begin{align*}
  F(x)&=F_0(x)+u F_1(x)=F_0(x)+u q(x)g_1(x)+u r(x).
\end{align*}
We choose $f(x)=F(x)-u q(x)g_1(x)$, $f_0(x)=F_0(x)$ and $f_1(x)=r(x)$. Then   $f(x)=f_0(x)+uf_1(x)$ is a monic
skew polynomial of minimal degree in $C$ such that $\deg(f_1(x))<\deg(g_1(x))$.

The uniqueness of $ug_1(x)$ can be shown as in the proof of $ii)$. Suppose $t_0(x)+ut_1(x)$ is  a monic skew
polynomial of minimal degree in $C$ such that $\deg(t_1(x))<\deg(g_1(x))$. Then $\langle uf_0(x)\rangle =uC=
\langle ut_0(x)\rangle $. Hence, by the proof of $ii)$, $ f_0(x)=t_0(x)$. Note that
$u(f_1(x)-t_1(x))=(f_0(x)+uf_1(x))-(t_0(x)+ut_1(x))\in C$. Then $u(f_1(x)-t_1(x))$ is the zero  or
$\deg(f_1(x)-t_1(x))\leq\max\{\deg(f_1(x)),\deg(t_1(x))\}$. If the later case occurs, then
$\deg(f_1(x)-t_1(x))<\deg(g_1(x))$, which contradicts the minimality of $\deg(g_1(x))$. Hence $f_1(x)-t_1(x)=0$.

Let   $B$ be the set of all non-zero skew polynomials in $C$ with degree less than $\deg(f(x))$. Then the
leading coefficients of all skew polynomials in $B$ are multiple of $u$. Since $ug_1\in A$, we have
$\deg(ug_1(x))<\deg(f(x))$, and hence $u g_1(x)\in B$. Using arguments similar to the third paragraph in the
proof of $ii)$, $B$ is contained in the left ideal generated by $u g_1(x)$.

To show that $C$ is generated by $\{g(x)=ug_1(x),f(x)=g_0(x)+ug_1(x)\}$, let \mbox{$c(x)\in C$.} Then there
exist unique skew polynomials $q^\prime(x)$ and $r^\prime(x)$ in
$(\mathbb{F}_{p^m}+u\mathbb{F}_{p^m})[x;\Theta]$ such that
\[c(x)=q^\prime(x)f(x)+r^\prime(x),\]
and $r^\prime(x)=0$ or $\deg(r^\prime(x))<\deg(f(x))$.  If $r^\prime(x)=0$, we are done.   Assume that
$\deg(r^\prime(x))<\deg(f(x))$. Then $r^\prime(x)\in B$ and hence $r^\prime(x)=m(x)g(x)$ for some $m(x)\in
(\mathbb{F}_{p^m}+u\mathbb{F}_{p^m})[x;\Theta]$. Hence
\[c(x)=q^\prime(x)f(x)+r^\prime(x)=q^\prime(x)f(x)+m(x)g(x).\]
Therefore, $C$ is   generated by $\{g(x)=u
 g_1(x), f(x)=f_0(x)+u f_1(x) \}$.
\end{proof}

For convenience, we split the left ideals of $(\mathbb{F}_{p^m}+u\mathbb{F}_{p^m})[x;\Theta]/\langle
x^n-\lambda\rangle
  $ into three  types: Type  LI-$1$  refers to the zero ideal or a left ideal satisfying Theorem \ref{prop4.1} $i)$,
     type   LI-$2$ refers to a left ideal  satisfying Theorem \ref{prop4.1} $ii)$,
     and   type   LI-$3$ refers to a left ideal  satisfying Theorem \ref{prop4.1}
     $iii)$.

More properties of left ideals of each type are given
     in the following propositions.

\begin{proposition}
A left ideal of type {\rm LI}-$1$  is   principal and generated by a monic right divisor $g(x)$ of $x^n-\lambda$
   in $(\mathbb{F}_{p^m}+u\mathbb{F}_{p^m})[x;\Theta] $.
   Moreover, if we view $g(x)=g_0(x)+u g_1(x)$, where $ g_0(x),g_1(x) \in
   \mathbb{F}_{p^m}[x;\theta]$, then $\deg (g_1(x))< \deg(g_0(x)) $ and $g_0(x) $ is a monic right divisor of
  $ { x^n-\overline\lambda} $ in $\mathbb {F}_{p^m}[x;\theta] $. %
\end{proposition}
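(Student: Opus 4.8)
The plan is to establish the three assertions in turn, the main tools being the Right Division Algorithm and the reduction epimorphism modulo $u$. By Theorem~\ref{prop4.1}~$i)$, a non-zero type {\rm LI}-$1$ ideal has the form $C=\langle g(x)\rangle$, where $g(x)$ is the unique monic skew polynomial of minimal degree in $C$; I take $g(x)$ to be the canonical representative of degree less than $n$ (the zero ideal is disposed of by taking $g(x)=x^n-\lambda$). First I would show that $g(x)$ is a right divisor of $x^n-\lambda$. Since $g(x)$ is monic, the Right Division Algorithm gives $x^n-\lambda=q(x)g(x)+r(x)$ in $(\mathbb{F}_{p^m}+u\mathbb{F}_{p^m})[x;\Theta]$ with $r(x)=0$ or $\deg(r(x))<\deg(g(x))$. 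Reducing modulo $\langle x^n-\lambda\rangle$ yields $r(x)\equiv -q(x)g(x)$, which lies in $C$ because $g(x)\in C$ and $C$ is a left ideal; as $\deg(r(x))<\deg(g(x))<n$, the polynomial $r(x)$ is its own canonical representative by Corollary~\ref{cor2.4}, hence $r(x)\in C$. Minimality of $\deg(g(x))$ then forces $r(x)=0$, so $x^n-\lambda=q(x)g(x)$ and $g(x)$ is indeed a monic right divisor of $x^n-\lambda$.

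For the degree condition, I would write $g(x)=g_0(x)+ug_1(x)$ with $g_0(x),g_1(x)\in\mathbb{F}_{p^m}[x;\theta]$ and put $d=\deg(g(x))$. Because $g(x)$ is monic, its leading coefficient is $1$, whose $u$-component is zero; comparing the coefficients of $x^d$ shows that $g_0(x)$ is monic of degree $d$ while the coefficient of $x^d$ in $g_1(x)$ vanishes. Therefore $\deg(g_1(x))<d=\deg(g_0(x))$, as claimed.

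Finally, to see that $g_0(x)$ is a right divisor of $x^n-\overline\lambda$, I would apply the reduction epimorphism modulo $u$ from $(\mathbb{F}_{p^m}+u\mathbb{F}_{p^m})[x;\Theta]$ to $\mathbb{F}_{p^m}[x;\theta]$, which respects the skew multiplication, to the identity $x^n-\lambda=q(x)g(x)$ obtained in the first step. This gives $x^n-\overline\lambda=\overline{q(x)}\,g_0(x)$ in $\mathbb{F}_{p^m}[x;\theta]$, and since $g_0(x)=\overline{g(x)}$ is monic, it is a monic right divisor of $x^n-\overline\lambda$. I do not anticipate a genuine obstacle: the only steps demanding care are verifying that the remainder produced in the quotient is truly an element of $C$ so that minimality may be invoked, and using that reduction modulo $u$ is a homomorphism of the two skew polynomial rings, both of which are supplied by the preliminaries and Corollary~\ref{cor2.4}. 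The argument amounts to bookkeeping around the division algorithm and this reduction map.
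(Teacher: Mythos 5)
Your proposal is correct and follows essentially the same route as the paper: handle the zero ideal separately, use the Right Division Algorithm and minimality of $\deg(g(x))$ to show the remainder vanishes so that $g(x)$ right-divides $x^n-\lambda$, read off the degree condition from monicity, and reduce the divisibility identity modulo $u$ to get $g_0(x)$ as a right divisor of $x^n-\overline\lambda$. Your extra care in checking that $r(x)$ is its own canonical representative (via Corollary~\ref{cor2.4}) so that minimality can be invoked is a point the paper passes over silently, but it is the same argument.
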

\begin{proof} Let $C$ be a left ideal of type {\rm LI}-$1$.
If $C=\{0\}$, then $C=\langle 0 \rangle = \langle x^n-\lambda
  \rangle $ has  the desired properties.

  Suppose $C$ is non-zero. We prove that the generator polynomial
  $g(x)$
  in Theorem \ref{prop4.1} $i)$ satisfies these properties. Recall that $ g(x)$
  is the unique monic skew polynomial in $A$,
   the set of all non-zero skew
polynomials of minimal degree in $C$

First, we show that $g(x)$ is a right divisor of $x^n-\lambda$ in
$(\mathbb{F}_{p^m}+u\mathbb{F}_{p^m})[x;\Theta]$. By the Right Division Algorithm, there exist unique skew
polynomials $q (x)$ and $r (x)$ in $(\mathbb{F}_{p^m}+u\mathbb{F}_{p^m})[x;\Theta]$ such that
\[x^n-\lambda=q (x)g(x)+r (x),\]
and $r (x)=0$ or $\deg(r (x))<\deg(g(x))$. Then
\[r (x)=-q (x)g(x)+(x^n-\lambda)\in C.\] By the  minimality of $\deg(g(x))$,
$r (x)=0$. Hence $g(x)$ is a right divisor of $x^n-\lambda$.

Finally, we write $g(x)=g_0(x)+u g_1(x)$, where $ g_0(x),g_1(x) \in \mathbb{F}_{p^m}[x;\theta]$. Since $g(x)$ is
monic, it is clear that $g_0(x)$ is monic and $\deg(g_1(x))<\deg(g(x))=\deg(g_0(x))$. As $g(x)$ is a right
divisor of $x^n-\lambda$ in $(\mathbb{F}_{p^m}+u\mathbb{F}_{p^m})[x;\Theta]$, there exists $p(x)$ in
$(\mathbb{F}_{p^m}+u\mathbb{F}_{p^m})[x;\Theta]$ such that
\[x^n-\lambda=p(x)(g_0(x)+u g_1(x)).\] Computing modulo $u$, we have
$x^n-\overline{\lambda}=\overline{p(x)}g_0(x)$ in $\mathbb {F}_{p^m}[x;\theta]$. This means $g_0(x)$ is a monic
right divisor of $x^n-\overline{\lambda}$ in $\mathbb {F}_{p^m}[x;\theta]$.
\end{proof}

\begin{proposition}\label{prop-li2}
A left ideal of type {\rm LI}-$2$  is   principal and generated by $g(x)=u g_1(x)$,
     where $g_1(x)$ is a monic right divisor
      of $x^n-\overline{\lambda}$ in $ \mathbb {F}_{p^m}  [x;\theta] $ such that $\deg(g_1(x))<n$.
\end{proposition}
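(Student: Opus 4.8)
The plan is to take the generator $g(x)=ug_1(x)$ already furnished by Theorem~\ref{prop4.1}~$ii)$ and show that the monic polynomial $g_1(x)\in\mathbb{F}_{p^m}[x;\theta]$ right divides $x^n-\overline{\lambda}$. Principality and the fact that $g_1(x)$ is monic come directly from Theorem~\ref{prop4.1}~$ii)$ and the construction in its proof, so these require no new argument. The degree bound is immediate as well: since $g(x)$ is the canonical representative of an element of the quotient, Corollary~\ref{cor2.4} (applied to the monic central skew polynomial $x^n-\lambda$) gives $\deg(g(x))<n$, and as $g_1(x)$ is monic we have $\deg(g_1(x))=\deg(ug_1(x))=\deg(g(x))<n$. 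Hence the only genuine content is the right-divisibility of $x^n-\overline{\lambda}$ by $g_1(x)$ over $\mathbb{F}_{p^m}[x;\theta]$.

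The key observation I would use is the identity $u\lambda=u\overline{\lambda}$: writing $\lambda=\lambda_0+u\lambda_1$ with $\lambda_i\in\mathbb{F}_{p^m}$ and $\overline{\lambda}=\lambda_0$, we get $u\lambda=u\lambda_0+u^2\lambda_1=u\lambda_0=u\overline{\lambda}$ because $u^2=0$. Consequently $u(x^n-\overline{\lambda})=u(x^n-\lambda)$, and the right-hand side is $0$ in $(\mathbb{F}_{p^m}+u\mathbb{F}_{p^m})[x;\Theta]/\langle x^n-\lambda\rangle$. Next I would apply the Right Division Algorithm in $\mathbb{F}_{p^m}[x;\theta]$ (legitimate since $g_1(x)$ is monic) to write $x^n-\overline{\lambda}=q(x)g_1(x)+r(x)$ with $r(x)=0$ or $\deg(r(x))<\deg(g_1(x))$. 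Left-multiplying this identity by $u$ and pushing $u$ to the right of $q(x)$ via the rule $uq(x)=\overrightarrow{q(x)}^{u}u$ yields, in the quotient, $0=u(x^n-\overline{\lambda})=\overrightarrow{q(x)}^{u}(ug_1(x))+ur(x)=\overrightarrow{q(x)}^{u}g(x)+ur(x)$, so that $ur(x)=-\overrightarrow{q(x)}^{u}g(x)$ represents an element of $C=\langle g(x)\rangle$.

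Finally I would invoke minimality. Because $\deg(ur(x))=\deg(r(x))<\deg(g_1(x))=\deg(g(x))<n$, the skew polynomial $ur(x)$ is already its own canonical representative, so it represents an element of $C$ of degree strictly below the minimal degree attained in $A$. By the minimality defining $A$ in Theorem~\ref{prop4.1}, this forces $ur(x)=0$, whence $r(x)=0$ (multiplication by $u$ is injective on $\mathbb{F}_{p^m}[x;\theta]$). Therefore $x^n-\overline{\lambda}=q(x)g_1(x)$, i.e.\ $g_1(x)$ is a monic right divisor of $x^n-\overline{\lambda}$ in $\mathbb{F}_{p^m}[x;\theta]$, completing the proof. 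I expect the only delicate points to be the noncommutative passage of $u$ through $q(x)$ and confirming that $ur(x)$ genuinely represents a smaller-degree element of $C$; once the identity $u\lambda=u\overline{\lambda}$ transfers the relation $x^n-\lambda=0$ down to the residue field, the minimality argument closes everything routinely.
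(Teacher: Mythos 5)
Your proposal is correct and follows essentially the same route as the paper's own proof: right division of $x^n-\overline{\lambda}$ by $g_1(x)$ in $\mathbb{F}_{p^m}[x;\theta]$, the identity $u(x^n-\overline{\lambda})=u(x^n-\lambda)$ from $u^2=0$, pushing $u$ through $q(x)$ via ${\overrightarrow{q(x)}}^{u}$ to place $ur(x)$ in $C$, and then minimality of $\deg(g(x))$ to force $ur(x)=0$ and hence $r(x)=0$. The only differences are expository (you justify the degree bound via Corollary~\ref{cor2.4} and spell out the injectivity of multiplication by $u$ on $\mathbb{F}_{p^m}[x;\theta]$, which the paper leaves implicit).
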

\begin{proof} Let $C$ be a left ideal of type {\rm LI}-$2$.
We prove that the generator polynomial
  $g(x)=ug_1(x)$
  in Theorem \ref{prop4.1} $ii)$ satisfies the desired properties.
Recall that $g(x)=u g_1(x)$ is the unique skew polynomial with leading coefficient $u$ in $ A$, the set of all
non-zero skew polynomials of minimal degree in $C$. Clearly, $\deg(g_1(x))<n$.    By the Right Division
Algorithm, there exist unique skew polynomials $q(x)$ and $r(x)$ in $\mathbb {F}_{p^m}[x;\theta]$ such that
\[x^n-\overline{\lambda}=q(x)g_1(x)+r(x),\]
and $r(x)=0$ or $\deg(r(x))<\deg(g_1(x))$. Since $u (x^n-\bar{\lambda})=u (x^n-{\lambda})$, we have
\begin{align*}
u r(x)&=-u q(x)g_1(x)+u (x^n-\bar{\lambda})\\
     &=-{\overrightarrow{q(x)}}^u u g_1(x)+u (x^n-{\lambda})\\
     &=-{\overrightarrow{q(x)}}^u g(x)+u (x^n-{\lambda})\in C .
\end{align*}
By  the  minimality of $\deg(g(x))$, $u r(x)=0$. As $r(x)\in \mathbb {F}_{p^m}[x;\theta]$, $r(x)=0$. Hence
$g_1(x)$ is a right divisor of $x^n-\overline{\lambda}$ in $\mathbb {F}_{p^m}[x;\theta]$.
\end{proof}
\begin{proposition}\label{prop-mx}
A left ideal of type {\rm LI}-$3$  is   generated by $\{g(x)=u g_1(x), f(x)=f_0(x)+u f_1(x) \}$,
   where $f_0(x),$ $f_1(x),$ $g_1(x)\in \mathbb {F}_{p^m}  [x;\theta] $ satisfy the following properties:
  \begin{enumerate}[ $i)$]
    \item $g_1(x),f_0(x)$ are monic,
    \item $\deg(f_1(x))<\deg(g_1(x)) {<\deg(f_0(x))<n}$,
    \item $g_1(x)$ is a right divisor of $f_0(x)  $ in $ \mathbb {F}_{p^m}  [x;\theta]
    $,
    \item $f_0(x) $ is a right divisor of ${ x^n-\overline\lambda}$ in $ \mathbb {F}_{p^m}  [x;\theta]
    $. %
  \end{enumerate}
  Moreover,   if $\lambda\in \mathbb {F}_{p^m}$, then $g_1(x)$ is a right divisor of $  {\overleftarrow{ \left(\displaystyle\frac{ {x^n- \lambda} }{f_0(x)}\right) }}^u  f_1(x)$ in $ \mathbb {F}_{p^m}
  [x;\theta].
    $
\end{proposition}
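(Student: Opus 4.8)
The plan is to take the generators $g(x)=ug_1(x)$ and $f(x)=f_0(x)+uf_1(x)$ already produced by Theorem~\ref{prop4.1}~$iii)$ (so that the generation statement $C=\langle g(x),f(x)\rangle$ is given for free) and to read off the four listed properties, then settle the divisibility in the ``Moreover'' clause. The recurring device throughout is that $g(x)\in A$, so $\deg(g(x))=\deg(g_1(x))$ is the minimal degree of a non-zero element of $C$: whenever I manage to exhibit an element of $C$ of degree strictly below $\deg(g_1(x))$, minimality forces it to be zero. First I would dispatch $i)$ and the inequalities of $ii)$, which are bookkeeping. Here $g_1(x)$ is monic by the construction in the proof of Theorem~\ref{prop4.1}~$ii)$, and $f_0(x)=\overline{f(x)}$ is monic because $f(x)$ is, which also gives $\deg(f_0(x))=\deg(f(x))$. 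Since $A$ contains no monic skew polynomial but $f(x)$ is monic, $f(x)\notin A$, so $\deg(f(x))>\deg(g(x))$ and hence $\deg(g_1(x))<\deg(f_0(x))$; the bound $\deg(f_1(x))<\deg(g_1(x))$ is already part of Theorem~\ref{prop4.1}~$iii)$, and $\deg(f_0(x))=\deg(f(x))<n$ by Corollary~\ref{cor2.4}.

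For $iii)$ and $iv)$ I would apply the Right Division Algorithm and then reduce modulo $u$. For $iv)$, dividing $x^n-\lambda$ by $f(x)$ yields a remainder lying in $C$ of degree $<\deg(f(x))$; this remainder belongs to the set $B$ studied in the proof of Theorem~\ref{prop4.1}~$iii)$, hence is a left multiple of $g(x)=ug_1(x)$ and therefore reduces to $0$ modulo $u$. Reducing the whole division identity modulo $u$ then gives $x^n-\overline{\lambda}=\overline{q(x)}\,f_0(x)$, so $f_0(x)$ right-divides $x^n-\overline{\lambda}$. For $iii)$ I use that $uf(x)=uf_0(x)\in C$, and I divide $f_0(x)$ by $g_1(x)$ in $\mathbb{F}_{p^m}[x;\theta]$ to write $f_0(x)=Q(x)g_1(x)+R(x)$ with $\deg(R(x))<\deg(g_1(x))$. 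Multiplying by $u$ on the left and shifting via $uQ(x)=\overrightarrow{Q(x)}^u u$ shows that $uR(x)=uf_0(x)-\overrightarrow{Q(x)}^u g(x)\in C$ has degree $<\deg(g_1(x))$, so $uR(x)=0$ and hence $R(x)=0$.

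The ``Moreover'' clause is the real content. Assuming $\lambda\in\mathbb{F}_{p^m}$, so that $x^n-\lambda\in\mathbb{F}_{p^m}[x;\theta]$ and $h_0(x):=\frac{x^n-\lambda}{f_0(x)}$ exists in $\mathbb{F}_{p^m}[x;\theta]$ by $iv)$, I would form the left multiple $h_0(x)f(x)\in C$ and expand it. Using $h_0(x)f_0(x)=x^n-\lambda\equiv 0$ in the quotient together with $h_0(x)u=u\,\overleftarrow{h_0(x)}^u$, one gets, modulo $x^n-\lambda$, the identity $h_0(x)f(x)=u\,\overleftarrow{h_0(x)}^u f_1(x)$, whose degree is $(n-\deg(f_0(x)))+\deg(f_1(x))<n$; hence $u\,\overleftarrow{h_0(x)}^u f_1(x)$ is a genuine element of $C$. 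Dividing $\overleftarrow{h_0(x)}^u f_1(x)$ by $g_1(x)$ and repeating the degree argument of $iii)$ forces the remainder to vanish, which is exactly the assertion. The main obstacle is keeping the non-commutative arithmetic straight: correctly transporting $u$ and the powers of $x$ across the coefficients through $\overrightarrow{\cdot}^u$ and $\overleftarrow{\cdot}^u$, and confirming at each step that the element I build actually lies in $C$ with degree below the relevant threshold, so that the minimality of $\deg(g_1(x))$ can be invoked.
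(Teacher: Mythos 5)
Your proof is correct, and its overall skeleton is the paper's: take the generators supplied by Theorem~\ref{prop4.1}~$iii)$, read off $i)$--$ii)$ from monicity and the minimality of $\deg(g(x))$, prove the divisibility statements by the Right Division Algorithm plus the ``element of $C$ of degree below $\deg(g_1(x))$ must vanish'' device, and for the ``Moreover'' clause build exactly the same key element $\frac{x^n-\lambda}{f_0(x)}f(x)\equiv u\,\overleftarrow{\left(\frac{x^n-\lambda}{f_0(x)}\right)}^{u}f_1(x)$. Your $iii)$ is literally the paper's argument (the Proposition~\ref{prop-li2} computation with $x^n-\overline{\lambda}$ replaced by $f_0(x)$). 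Where you genuinely diverge is in $iv)$ and in how the final divisibility is closed. For $iv)$ the paper divides $x^n-\overline{\lambda}$ by $f_0(x)$ and appeals to the assertion that $uf_0(x)$ has minimal degree in $\langle uf_0(x)\rangle$ --- an assertion it does not justify, and which is essentially equivalent to $iv)$ itself (in the commutative case it fails precisely when $f_0(x)$ does not divide $x^n-\overline{\lambda}$); your route --- divide $x^n-\lambda$ by $f(x)$, note the remainder lies in $B$, hence is a left multiple of $ug_1(x)$ and so vanishes mod $u$, then reduce the division identity mod $u$ --- is self-contained and sidesteps that weak point (just note that left multiples of $ug_1(x)$ are taken in the quotient ring, so the remainder's reduction mod $u$ is a left multiple of $x^n-\overline{\lambda}$ of degree $<n$, hence zero). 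For the ``Moreover'' clause, the paper identifies $C\cap u\bigl((\mathbb{F}_{p^m}+u\mathbb{F}_{p^m})[x;\Theta]/\langle x^n-\lambda\rangle\bigr)$ as a type LI-$2$ ideal generated by $ug_1(x)$ via Proposition~\ref{prop-li2} and deduces the divisibility from that classification, whereas you divide $\overleftarrow{\left(\frac{x^n-\lambda}{f_0(x)}\right)}^{u}f_1(x)$ by $g_1(x)$ directly and kill the remainder by minimality inside $C$ itself; both are valid, and yours trades the structural detour for a shorter computation.
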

\begin{proof}
Let $C$ be a left ideal of type {\rm LI}-$3$. We prove that the generator set
  $\{g(x)=u g_1(x), f(x)=f_0(x)+u f_1(x) \}$
  in Theorem \ref{prop4.1} $iii)$ satisfies the desired properties.
Recall that   $g(x)=u g_1(x)$ is the unique skew polynomial with the leading coefficient $u$ in $ A$, the set of
all non-zero skew polynomials of minimal degree in $C$ and $f_0(x)+uf_1(x)$ is the unique monic skew polynomial
of minimal degree in $C$ such that $\deg(f_1(x))<\deg(g_1(x))$.

 Properties $i)$ and $ii)$ are clear.  Property $iii)$ can be proved by a similar
 argument in the case for Proposition \ref{prop-li2} with $x^n-\bar{\lambda}$ replaced by $f_0(x)$.

Note that $u f_0(x)$ is a skew  polynomial of minimal degree in $\langle u f_0(x) \rangle$. Using   arguments
similar  to the proof of Proposition \ref{prop-li2}, $f_0(x)$ is a right divisor of $x^n-\overline{\lambda}$ in
$\mathbb {F}_{p^m} [x;\theta]$. Hence, property $iv)$ is proved.

Finally, it is straightforward  to see that if $\lambda\in \mathbb{F}_{p^m}$, then $\bar{\lambda}=\lambda$. Thus
\begin{align*}
  \frac{ x^n- {\lambda} }{f_0(x)}(f_0(x)+u f_1(x))&=\frac{ x^n-  {\lambda} }{f_0(x)} u f_1(x)\\
                                                    &=u  {\overleftarrow {\left(\frac{ x^n- {\lambda} }{f_0(x)}\right)}}^u  f_1(x) %
\\
                                                    &\in C\cap u ((\mathbb{F}_{p^m}+u\mathbb{F}_{p^m})[x;\Theta]/\langle
                                                    x^n-\lambda\rangle).
\end{align*}
Note that $C\cap u ((\mathbb{F}_{p^m}+u\mathbb{F}_{p^m})[x;\Theta]/\langle x^n-\lambda\rangle)$ is a left ideal
in $ (\mathbb{F}_{p^m}+u\mathbb{F}_{p^m})[x;\Theta]/$ $\langle x^n-\lambda\rangle$   containing $g(x)=u g_1(x)$
as a skew polynomial of minimal degree. Since $C\cap u( (\mathbb{F}_{p^m}+u\mathbb{F}_{p^m})[x;\Theta]/\langle
x^n-\lambda\rangle)$ does not contain any monic element, by Proposition \ref{prop-li2}, it is generated by
$g(x)=u g_1(x)$. Hence $g_1(x)$ is a right divisor of $ {\overleftarrow{ \left(\displaystyle\frac{ x^n-
{\lambda} }{f_0(x)}\right)}}^u f_1(x)$.
\end{proof}

\begin{example}\label{ex4.2}Figures
\ref{lattice-identity} and  \ref{lattice-non-identity} show
  the ideal lattices  of
$(\mathbb{F}_{3}+u\mathbb{F}_{3})[x]/\langle x^2-1\rangle$ and $(\mathbb{F}_{3}+u\mathbb{F}_{3})[x;\Theta_{{\rm
id},2}]/\langle x^2-1\rangle$, where $\Theta_{{\rm id},2}(a+bu)=a+2bu$ for all $a,b\in \mathbb{F}_{3}$.
 The subscripts $1$, $2$ and $3$ indicate
types LI-$1$, LI-$2$ and LI-$3$, respectively.

\begin{figure}[h!bt]
\centering
\begin{small}
\begin{tikzpicture}[description/.style={fill=white,inner sep=0pt}]
    \matrix (m) [matrix of math nodes, row sep=1.5em, column sep=0em]
    {       &       &       & \langle 1  \rangle_{ 1}   &       &       &  \\
            &       & \langle u,x+1 \rangle_{ 3} &       &\langle u,x+2 \rangle_{ 3}  &       &  \\
            &       &       &\langle u \rangle_{ 2}     &       &       &  \\
            &       & \langle x+1 \rangle_{ 1}  &       &\langle x+2 \rangle_{ 1}   &   &  \\
            &       &\langle u(x+1) \rangle_{ 2} &       &\langle u(x+2 ) \rangle_{ 2} &       &  \\
             &      &      &\langle  0    \rangle_{ 1} &       &       &  \\
        };
    \path[-]
        (m-1-4) edge node[auto] {} (m-2-3)
        (m-1-4) edge node[auto] {} (m-2-5)

        (m-2-3) edge node[auto] {} (m-4-3)
        (m-2-3) edge node[auto] {} (m-3-4)
        (m-2-5) edge node[auto] {} (m-4-5)

        (m-2-5) edge node[auto] {} (m-3-4)
        (m-4-5) edge node[auto] {} (m-5-5)

        (m-3-4) edge node[auto] {} (m-5-5)

        (m-4-3) edge node[auto] {} (m-5-3)
        (m-3-4) edge node[auto] {} (m-5-3)
        (m-5-3) edge node[auto] {} (m-6-4)
        (m-5-5) edge node[auto] {} (m-6-4)
        ;
\end{tikzpicture}
\end{small}
  \caption{The ideal lattice of $(\mathbb{F}_{3}+u\mathbb{F}_{3})[x]/\langle x^2-1\rangle$}\label{lattice-identity}
\end{figure}

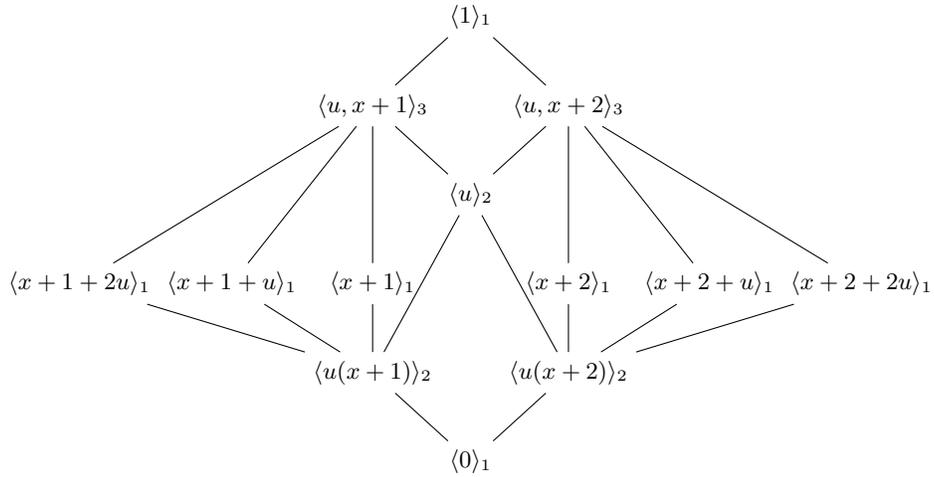
\begin{figure}[h!bt]
\centering
 \begin{small}
\begin{tikzpicture}[description/.style={fill=white,inner sep=0pt}]
    \matrix (m) [matrix of math nodes, row sep=2em, column sep=0em]
    {       &       &       & \langle 1  \rangle_{ 1}   &       &       &  \\
            &       & \langle u,x+1 \rangle_{ 3} &       &\langle u,x+2 \rangle_{ 3}  &       &  \\
            &       &       &\langle u \rangle_{ 2}    &       &       &  \\
     \langle x+1 +2u  \rangle_{ 1}  & \langle x+1+u \rangle_{ 1}    & \langle x+1 \rangle_{ 1}   &       &\langle x+2 \rangle _{ 1}   &\langle x+2+u  \rangle _{ 1}  &\langle x+2+2u\rangle _{ 1}  \\
            &       &\langle u(x+1) \rangle_{ 2}  &       &\langle u(x+2 ) \rangle_{ 2} &       &  \\
             &      &      &\langle  0    \rangle_{ 1}  &       &       &  \\
        };
    \path[-]
        (m-1-4) edge node[auto] {} (m-2-3)
        (m-1-4) edge node[auto] {} (m-2-5)
        (m-2-3) edge node[auto] {} (m-4-1)
        (m-2-3) edge node[auto] {} (m-4-2)
        (m-2-3) edge node[auto] {} (m-4-3)
        (m-2-3) edge node[auto] {} (m-3-4)
        (m-2-5) edge node[auto] {} (m-4-5)
        (m-2-5) edge node[auto] {} (m-4-6)
        (m-2-5) edge node[auto] {} (m-4-7)
        (m-2-5) edge node[auto] {} (m-3-4)
        (m-4-5) edge node[auto] {} (m-5-5)
        (m-4-6) edge node[auto] {} (m-5-5)
        (m-4-7) edge node[auto] {} (m-5-5)
        (m-3-4) edge node[auto] {} (m-5-5)
        (m-4-1) edge node[auto] {} (m-5-3)
        (m-4-2) edge node[auto] {} (m-5-3)
        (m-4-3) edge node[auto] {} (m-5-3)
        (m-3-4) edge node[auto] {} (m-5-3)
        (m-5-3) edge node[auto] {} (m-6-4)
        (m-5-5) edge node[auto] {} (m-6-4)
        ;
\end{tikzpicture}
\end{small}
  \caption{The ideal lattice of $(\mathbb{F}_{3}+u\mathbb{F}_{3})[x;\Theta_{{\rm id},2}]/\langle x^2-1\rangle$}\label{lattice-non-identity}
\end{figure}
Note that Figure \ref{lattice-identity} is embedded in Figure \ref{lattice-non-identity}.
\end{example}

\subsection{Euclidean  Dual Codes of Skew Cyclic and Skew Negacyclic Codes over
$\mathbb{F}_{p^m}+u\mathbb{F}_{p^m}$}

We study  the structures of the Euclidean dual codes of skew cyclic and skew negacyclic codes over
$\mathbb{F}_{p^m}+u\mathbb{F}_{p^m}$.  For this purpose, we assume that $\lambda=\pm 1$. Since
$\bar{\lambda}=\lambda\in \mathbb{F}_{p^m}$  is always fixed by any automorphism,  $\Theta$ can be arbitrary.
However, the length $n$ of codes is assumed to be a multiple of the order of $\Theta$.

As $\lambda^2=1$, by Lemma \ref{lem2.pre9},   the Euclidean dual codes of skew cyclic and skew negacyclic codes
are again skew cyclic and skew negacyclic, respectively.  Their generators are given through the unique
representation of the original codes and the ring anti-monomorphism $\varphi$ defined in Proposition
\ref{prop2.7}, where $
 \varphi(\sum_{ i=0}^t a_ix^{i} )= \sum_{
i=0}^t x^{-i}a_i$.

\begin{theorem}\label{thm4.5}
Let  $\lambda \in \{-1,1\}$. Then the Euclidean dual code of a left ideal in
$(\mathbb{F}_{p^m}+u\mathbb{F}_{p^m})[x;\Theta]/\langle x^n-\lambda\rangle $  is also a left ideal
in$(\mathbb{F}_{p^m}+u\mathbb{F}_{p^m})[x;\Theta]/\langle x^n-\lambda\rangle $ determined as follows:
\begin{enumerate}[\textnormal{LI}-$1^\perp.$]
  \item If  $C=\langle  g_0(x)+ug_1(x)\rangle $, then $C^\perp=\langle  x^{n-\deg(g_0(x))}\varphi\left(
  \displaystyle\frac{x^n-\lambda}{ g_0(x)+ug_1(x)}\right)\rangle
  $.
   \item If $C=\langle  ug_1(x) \rangle$, then   $C^\perp=\langle  u ,x^{n-\deg( g_1(x))}\varphi\left(
     \displaystyle\frac{x^n-\lambda}{ g_ 1(x)} \right)
     \rangle
  .$ %
  \item If $C=\langle  ug_1(x), f_0(x)+uf_1(x) \rangle$, then there exists $m(x)\in \mathbb{F}_{p^m}[x;\theta] $ such that $m(x)g_1(x)= {\overleftarrow{\left(\displaystyle\frac{x^n-\lambda}{ f_0(x) }\right)}}^u f_1(x)$ and $$C^\perp=\langle  x^{n-\deg( f_0(x))}\varphi\left(
   \displaystyle\frac{x^n-\lambda}{ f_0(x)}u\right), x^{n-\deg( g_1(x))}\varphi\left(
     \displaystyle\frac{x^n-\lambda}{ g_ 1(x)}-um(x)\right)\rangle
  .$$

\end{enumerate}
 \end{theorem}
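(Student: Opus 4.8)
The plan is to handle the three types separately, in each case exhibiting explicit generators for $C^\perp$, proving that they annihilate $C$ in the sense of Lemma~\ref{lem2.9}, and then matching cardinalities against $|C^\perp|=|\mathcal{R}|^n/|C|=p^{2mn}/|C|$, where $|C|$ is read off from the structure theorem (Theorem~\ref{prop4.1} and Propositions~\ref{prop-li2},~\ref{prop-mx}). Case \textnormal{LI}-$1^\perp$ is immediate: since $C=\langle g_0(x)+ug_1(x)\rangle$ is generated by a monic right divisor of $x^n-\lambda$ and $\lambda^2=1$, Theorem~\ref{thm3.4} applies verbatim with $h(x)=\frac{x^n-\lambda}{g_0(x)+ug_1(x)}$, whose degree is $n-\deg(g_0(x))$ because $g_0(x)$ is monic of degree $\deg(g_0(x)+ug_1(x))$. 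This yields exactly the stated generator $x^{n-\deg(g_0(x))}\varphi(h(x))$.

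For \textnormal{LI}-$2^\perp$ I would first record that $\langle u\rangle$ is self-orthogonal: any two vectors with all entries in $u\mathbb{F}_{p^m}$ pair to an element of $u^2\mathbb{F}_{p^m}=0$, and since $C=\langle ug_1(x)\rangle\subseteq\langle u\rangle$ this gives $u\in C^\perp$. Next, by Proposition~\ref{prop-li2}, $g_1(x)$ is a monic right divisor of $x^n-\lambda$ over $\mathbb{F}_{p^m}$, so $h_1(x):=\frac{x^n-\lambda}{g_1(x)}\in\mathbb{F}_{p^m}[x;\theta]$ and $ug_1(x)\,h_1(x)=u(x^n-\lambda)=0$ in the quotient. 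Applying Lemma~\ref{lem2.9} with $a(x)=ug_1(x)$, $b(x)=h_1(x)$ (and using, exactly as in the proof of Theorem~\ref{thm3.4}, that $C^\perp$ is a left ideal by Lemma~\ref{lem2.pre9} together with the fact that the constacyclic shift preserves Euclidean orthogonality when $\lambda^2=1$) shows $x^{n-\deg(g_1(x))}\varphi(h_1(x))\in C^\perp$. Hence $\langle u,\,x^{n-\deg(g_1(x))}\varphi(h_1(x))\rangle\subseteq C^\perp$. A count finishes the case: $|\langle u\rangle|=p^{mn}$, while $x^{n-\deg(g_1(x))}\varphi(h_1(x))$ lies in $\mathbb{F}_{p^m}[x;\theta]$ and reduces modulo $u$ to a right divisor of $x^n-\lambda$ of degree $n-\deg(g_1(x))$, contributing a further factor $p^{m\deg(g_1(x))}$; this gives $p^{m(n+\deg(g_1(x)))}=|C^\perp|=p^{2mn}/p^{m(n-\deg(g_1(x)))}$.

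The case \textnormal{LI}-$3^\perp$ is where I expect the main obstacle. Existence of $m(x)$ with $m(x)g_1(x)=\overleftarrow{(\frac{x^n-\lambda}{f_0})}^u f_1(x)$ is the last assertion of Proposition~\ref{prop-mx}, valid since $\lambda=\pm1\in\mathbb{F}_{p^m}$. Writing $H_0=\frac{x^n-\lambda}{f_0}$, $G_1=\frac{x^n-\lambda}{g_1}$ and $w=\frac{f_0}{g_1}$ (so $f_0=wg_1$ and $G_1=H_0w$ by Proposition~\ref{prop2.6}), I would verify that each claimed generator annihilates each generator of $C$. For $X_1=x^{n-\deg(f_0)}\varphi(H_0u)$ one checks $ug_1\cdot H_0u=0$ (a factor $u\cdots u$ collapses) and $(f_0+uf_1)H_0u=(x^n-\lambda)u=0$; for $X_2=x^{n-\deg(g_1)}\varphi(G_1-um)$ one checks $ug_1(G_1-um)=0$, and, the crux, $(f_0+uf_1)(G_1-um)=0$. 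Expanding the last product, using $f_0G_1\equiv0$ and $uf_1um=0$, reduces it to the single polynomial identity $f_1G_1=\overleftarrow{f_0}^u\,m$ in $\mathbb{F}_{p^m}[x;\theta]$. Establishing this identity is the hard step; I expect it to follow from the defining relation $mg_1=\overleftarrow{H_0}^u f_1$, the divisibilities $g_1\mid f_0\mid(x^n-\lambda)$, and the fact that $\overleftarrow{\cdot}^u$ is the ring automorphism sending $x\mapsto\beta x$ (so that $\overleftarrow{f_0}^u=\overleftarrow{w}^u\overleftarrow{g_1}^u$), mirroring the commutative computation $f_0m=w(g_1m)=wH_0f_1=G_1f_1$ but with every $\theta$- and $\beta$-twist tracked carefully. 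Once all four products vanish, Lemma~\ref{lem2.9} gives $\langle X_1,X_2\rangle\subseteq C^\perp$, and the cardinality $p^{m(\deg(f_0)+\deg(g_1))}$ of $\langle X_1,X_2\rangle$ (read from its \textnormal{LI}-$3$ presentation via Theorem~\ref{prop4.1}) matches $|C^\perp|=p^{2mn}/p^{m(2n-\deg(f_0)-\deg(g_1))}$, which closes the proof.
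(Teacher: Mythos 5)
Your proposal is correct, and it takes a genuinely different route from the paper on the hard direction. What coincides: LI-$1^\perp$ via Theorem~\ref{thm3.4}, and placing the claimed generators inside $C^\perp$ via Lemma~\ref{lem2.9} after checking the four annihilation identities. The ``crux'' identity you leave as expected, $f_1(x)\frac{x^n-\lambda}{g_1(x)}={\overleftarrow{f_0(x)}}^{u}m(x)$, is precisely the paper's~(\ref{eq:4.5}), and it does follow in a few lines along the route you sketch: left-multiply the defining relation~(\ref{eq-m}) by $u$ and then by $f_0(x)$ (equivalently, apply the ring automorphism ${\overleftarrow{\,\cdot\,}}^{u}\colon x\mapsto\beta x$ and left-multiply by ${\overleftarrow{f_0(x)}}^{u}$, noting ${\overleftarrow{(x^n-\lambda)}}^{u}=x^n-\lambda$ because $\Theta^n=\mathrm{id}$), then use $f_0(x)\frac{x^n-\lambda}{f_0(x)}=x^n-\lambda$, centrality, and right-cancellation of the monic $g_1(x)$; so this step should be written out, but it is not a real obstacle. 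Where you diverge is the reverse inclusion $C^\perp\subseteq D$: the paper proves it structurally, by casework on whether $C^\perp$ has type LI-$2$ or LI-$3$ and by rewriting its canonical generators as explicit left multiples of the generators of $D$ (the left cancellations and degree bookkeeping there occupy most of the published proof), whereas you replace all of this by the count $|D|=|\mathcal{R}|^{n}/|C|=|C^\perp|$. This is legitimate and considerably shorter: the formula $|C|\,|C^\perp|=|\mathcal{R}|^{n}$ holds over the finite chain (hence Frobenius) ring $\mathbb{F}_{p^m}+u\mathbb{F}_{p^m}$ and is already used implicitly by the paper in proving Theorem~\ref{thm3.4}, and the sizes $|C|=p^{m(n-\deg g_1)}$ (LI-$2$) and $|C|=p^{m(2n-\deg f_0-\deg g_1)}$ (LI-$3$) follow from reduction modulo $u$ together with the fact, established inside the proof of Proposition~\ref{prop-mx}, that $C\cap u\bigl((\mathbb{F}_{p^m}+u\mathbb{F}_{p^m})[x;\Theta]/\langle x^n-\lambda\rangle\bigr)=\langle ug_1(x)\rangle$. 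Two small repairs to the counting, though: these cardinalities are not literally ``read off'' Theorem~\ref{prop4.1} and need that residue/torsion argument; and in LI-$3$ your pair $(X_1,X_2)$ need not be a canonical LI-$3$ presentation (the $u$-part of $X_2$, namely $-x^{n-\deg g_1}\varphi(m(x))u$, can have degree exceeding $\deg X_1$, violating the degree condition of Theorem~\ref{prop4.1}~$iii)$), so instead of quoting that theorem for $|D|$, use the lower bound $|D|\geq p^{m\deg f_0}\,p^{m\deg g_1}$ --- the image of $D$ modulo $u$ contains the ideal generated by $x^{n-\deg g_1}\varphi\bigl(\frac{x^n-\lambda}{g_1(x)}\bigr)$, and the kernel of reduction modulo $u$ restricted to $D$ contains $\langle X_1\rangle$ --- which together with $D\subseteq C^\perp$ forces $D=C^\perp$.
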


 For    LI-$1^\perp$, the
Euclidean dual code of   type LI-$1$ code is determined in Theorem~\ref{thm3.4}  and it is shown to be type
LI-$1$.
  Moreover,
$(C^\perp)^\perp=C$ implies that $C$ is type LI-$1$ if and only if $C^\perp$ is type LI-$1$. However, this does
not need to be true for types LI-$2$ and LI-$3$ (see Example \ref{ex-last}).



In  LI-$2^\perp$ and LI-$3^\perp$, $f_0(x)$, $g_1(x)$ are right divisors of $x^n-\lambda$ in
$\mathbb{F}_{p^m}[x;\theta]$. Since $x^n-\lambda$ is central, it follows from  (\ref{eq:frac-skew}) that
\begin{align}
f_0(x)\displaystyle\frac{x^n-\lambda}{f_0(x)}&=x^n-\lambda=\displaystyle\frac{x^n-\lambda}{f_0(x)}f_0(x)
,\label{eq:4.2-frac-xn-a-f0}\\
g_1(x)\displaystyle\frac{x^n-\lambda}{g_1(x)}&=x^n-\lambda=\displaystyle\frac{x^n-\lambda}{g_1(x)}g_1(x)
.\label{eq:4.2-frac-xn-a-g1}
\end{align}
These two facts  and the centrality of $x^n-\lambda$ will be frequently used in the following proofs.

\noindent{\bf Proof of LI-$2^\perp$.} Let $D:=\langle  u ,x^{n-\deg( g_1(x))}\varphi\left(
     \displaystyle\frac{x^n-\lambda}{ g_ 1(x)} \right)
     \rangle$. Clearly,  $u\in C^\perp$. From (\ref{eq:4.2-frac-xn-a-g1}), it follows  that  $
    (ug_1(x))\displaystyle\frac{x^n-\lambda}{ g_1(x)}=u(x^n-\lambda)=0$ in $ (\mathbb{F}_{p^m}+u\mathbb{F}_{p^m})[x;\Theta]/\langle
    x^n-\lambda\rangle.$ Hence $D\subseteq
C^\perp$ is concluded via    Lemma \ref{lem2.9}.

In the other direction, we note that $C^\perp$ is of either type LI-$2 $ or LI-$3 $. If $C^\perp=\langle us_1(x)
\rangle$ is of type  LI-$2$, then $C^\perp\subseteq \langle u \rangle \subseteq D$. Suppose that
$C^\perp:=\langle us_1(x),t_0(x)+ut_1(x)\rangle$ is of type LI-$3 $.  Clearly, $us_1(x),ut_1(x)\in \langle u
\rangle \subseteq D$.

Since $ ug_1(x)\in C$ and $t_0(x)+ut_1(x)\in C^\perp$, it follows from Lemma~\ref{lem2.9}
\begin{align*}
  0&=(ug_1(x))\varphi^{-1}(x^{-\deg(t_0(x))}(t_0(x)+ut_1(x)))\\
   &=ug_1(x)\varphi^{-1}(x^{- \deg(t_0(x))}t_0(x))
\end{align*}
in $ ( \mathbb{F}_{p^m}+u\mathbb{F}_{p^m}) [x;\Theta]/\langle x^n-\lambda \rangle$.
 Thus $  g_1(x)\varphi^{-1}(x^{-
\deg(t_0(x))}t_0(x)) =0$. Hence, in~$( \mathbb{F}_{p^m}+u\mathbb{F}_{p^m}) [x;\Theta]$,
\begin{align}
g_1(x)\varphi^{-1}(x^{
                             -\deg(t_0(x))}t_0(x)) =l_1(x)(x^n-\lambda)  =(x^n-\lambda) l_1(x),\label{eq:g1-LI2}
\end{align}
{for some } $l_1(x)\in
                             \mathbb{F}_{p^m}[x;\theta].$
Note that
\begin{align}\label{eq:deg-t0-LI2}
\deg(t_0(x))=\deg(l_1(x))+n-\deg(g_1(x)).
\end{align}
With the notation in (\ref{eq:4.2-frac-xn-a-g1}), left cancellation of (\ref{eq:g1-LI2}) by $g_1(x)$ gives
                              \[
\displaystyle\frac{x^n-\lambda}{g_1(x)} l_1(x)
                              = \varphi^{-1}(x^{-
                             \deg(t_0(x))}t_0(x)),
\] and hence, by (\ref{eq:deg-t0-LI2}),
\begin{align*}
t_0(x)  &=x^{ \deg(t_0(x))}\varphi\left(\frac{x^n-\lambda}{g_1(x)} l_1(x)\right)\\
        &=x^{\deg(l_1(x))+n-\deg(g_1(x))}\varphi(
        l_1(x))\varphi\left(\frac{x^n-\lambda}{g_1(x)}\right)\\
        &=x^{\deg(l_1(x))} {\overrightarrow{\varphi(
        l_1(x))}}^{x^{n-\deg(g_1(x))}}   {x^{n-\deg(g_1(x))}}\varphi\left(\frac{x^n-\lambda}{g_1(x)}\right)\in D.
\end{align*}
Consequently, $t_0(x)+ut_1(x)\in D$. As desired, $C^\perp\subseteq D$. \hfill $\square$

\vskip1em \noindent{\bf Proof of LI-$3^\perp$.} Since $\lambda\in \mathbb{F}_{p^m}$, it follows from Proposition
\ref{prop-mx} that $g_1(x)$ is a right divisor of $ {\overleftarrow{ \left(\displaystyle\frac{ x^n- {\lambda}
}{f_0(x)}\right)}}^u f_1(x).$ Then there exists $m(x)\in \mathbb{F}_{p^m}[x;\theta] $ such that
\begin{align}\label{eq-m} m(x)g_1(x)= {\overleftarrow{\left(\displaystyle\frac{x^n-\lambda}{ f_0(x) }\right)}}^u
f_1(x).
\end{align}
   Let $D:=\langle
x^{n-\deg( f_0(x))}\varphi\left(
   \displaystyle\frac{x^n-\lambda}{ f_0(x)}u\right), x^{n-\deg( g_1(x))}\varphi\left(
     \displaystyle\frac{x^n-\lambda}{ g_
     1(x)}-um(x)\right)\rangle.$
It follows from (\ref{eq-m}) that
\begin{align}\label{eq:4.4}u m(x)g_1(x)=u
{\overleftarrow{\left(\displaystyle\frac{x^n-\lambda}{ f_0(x) }\right)}}^u
f_1(x)=\displaystyle\frac{x^n-\lambda}{ f_0(x)
     }uf_1(x).
\end{align}
Multiplying  on the left    of (\ref{eq:4.4}) by $f_0(x)$, we have
\begin{align*}
f_0(x)u m(x)g_1(x)  &= f_0(x)\displaystyle\frac{x^n-\lambda}{ f_0(x)
     }uf_1(x) \\
                    &= {(x^n-\lambda)} uf_1(x)   ~~~~~~~~(\text{using (\ref{eq:4.2-frac-xn-a-f0})})\\
                    &=uf_1(x){(x^n-\lambda)}\\
                    &=uf_1(x)\displaystyle\frac{x^n-\lambda}{g_1(x)}g_1(x)  ~~~(\text{using
                    (\ref{eq:4.2-frac-xn-a-g1})}).
\end{align*}
     Hence,
\begin{align}\label{eq:4.5}
 f_0(x)u m(x) =uf_1(x)\displaystyle\frac{x^n-\lambda}{g_1(x)},
\end{align}
and
\begin{align}\label{eq:deg-m}
 \deg(m(x))=n+\deg(f_1(x))-\deg(f_0(x))-\deg(g_1(x)).
\end{align}

Now, we observe the following:
\begin{enumerate}[$a)$]
\item Since $u^2=0$, we have
    \begin{align}\label{eq:ob1}
    ug_1(x)\frac{x^n-\lambda}{ f_0(x)}u=0.
    \end{align}%
\item Using $u^2=0$ and (\ref{eq:4.2-frac-xn-a-g1}), we conclude that
    \begin{align}\label{eq:ob2}
    ug_1(x)\left(\frac{x^n-\lambda}{ g_1(x)}-um(x)\right)&=ug_1(x)\frac{x^n-\lambda}{ g_1(x)}
    =u(x^n-\lambda).
    \end{align}%
\item It follows from  $u^2=0$ and (\ref{eq:4.2-frac-xn-a-f0}) that
    \begin{align} \label{eq:ob3}
        (f_0(x)+uf_1(x) )(\frac{x^n-\lambda}{ f_0(x)}u)&=f_0(x)\frac{x^n-\lambda}{ f_0(x)}u
        =(x^n-\lambda)u=u(x^n-\lambda).
    \end{align}
\item Since   $g_1(x)$ is a right divisor of $f_0(x)$, by (\ref{eq:frac-skewR}) and (\ref{eq:4.2-frac-xn-a-g1}),
we have
 \begin{align}\label{eq:ob-pre4}
    f_0(x)\frac{x^n-\lambda}{
    g_1(x)}=\left(\frac{f_0(x)}{g_1(x)}g_1(x)\right) \frac{x^n-\lambda}{
    g_1(x)}&=\frac{f_0(x)}{g_1(x)}\left(g_1(x)\frac{x^n-\lambda}{
    g_1(x)}\right)\notag\\
    &=\frac{f_0(x)}{g_1(x)}(x^n-\lambda).
    \end{align}%
The next equation follows from $u^2=0$,  (\ref{eq:4.5}) and  (\ref{eq:ob-pre4})   %
    \begin{align}\label{eq:ob4}
    (f_0(x)+uf_1(x) )\left(\frac{x^n-\lambda}{
    g_1(x)}-um(x)\right)
    &=f_0(x)\frac{x^n-\lambda}{
    g_1(x)}+uf_1(x)\frac{x^n-\lambda}{
    g_1(x)}\notag\\
    &~~~~~~~~~~~~~~~~~~~~\,-f_0(x)um(x)\notag\\
    &=\frac{f_0(x)}{g_1(x)}(x^n-\lambda).
    \end{align}%
\end{enumerate}
Equations (\ref{eq:ob1})-(\ref{eq:ob3}) and   (\ref{eq:ob4}) equal $0$ in
$(\mathbb{F}_{p^m}+u\mathbb{F}_{p^m})[x;\Theta]/\langle x^n-\lambda\rangle.$ Thus, by Lemma \ref{lem2.9},
$D\subseteq C^\perp$.


For the reverse inclusion, we note that $C^\perp$ is of type LI-$2 $ or LI-$3 $.  First, suppose that
$C^\perp:=\langle us_1(x) \rangle$ is of type LI-$2 $. Since $f_0(x)+uf_1(x)\in C$ and $us_1(x)\in C^\perp$, the
Euclidean orthogonality and Lemma~\ref{lem2.9} imply  that
\[ (f_0(x)+uf_1(x))\varphi^{-1}(x^{-\deg(s_1)}us_1(x))=0\]
 in $(\mathbb{F}_{p^m}+u\mathbb{F}_{p^m})[x;\Theta]/\langle
x^n-\lambda\rangle$. Hence, in $(\mathbb{F}_{p^m}+u\mathbb{F}_{p^m})[x;\Theta]$,
\begin{align}\label{eq:4.12}
f_0(x)\varphi^{-1}(x^{-\deg(s_1(x))}us_1(x))=ul(x)(x^n-\lambda)=(x^n-\lambda)ul(x),
\end{align} for some $l(x)\in \mathbb{F}_{p^m} [x;\theta]$.
 Moreover, $
\deg(s_1(x))=n+\deg(l(x))-\deg(f_0(x)).$ It follows from (\ref{eq:4.2-frac-xn-a-f0}) and (\ref{eq:4.12}) that
\begin{align*}
\varphi^{-1}(x^{-({n+\deg(l(x))-\deg(f_0(x))})} us_1(x))&=\varphi^{-1}(x^{-\deg(s_1(x))} us_1(x)) =
\displaystyle\frac{x^n-\lambda}{f_0(x)}ul(x).
\end{align*}
Since $\varphi$ is a ring anti-monomorphism, we conclude that
\begin{align*}
x^{-({n+\deg(l(x))-\deg(f_0(x))})} us_1(x)=\varphi\left(\frac{x^n-\lambda}{f_0(x)}ul(x)\right)
=\varphi(l(x))\varphi\left(\frac{x^n-\lambda}{f_0(x)}u\right) .
\end{align*}
Consequently,
\begin{align*}  us_1(x)&=
x^{n+\deg(l(x))-\deg(f_0(x))}
\varphi(l(x))\varphi\left(\frac{x^n-\lambda}{f_0(x)}u\right)\notag\\
&=x^{ \deg(l(x)) }{\overrightarrow{\varphi(l(x))}}^{x^{n -\deg(f_0(x))}}x^{n
-\deg(f_0(x))}\varphi\left(\frac{x^n-\lambda}{f_0(x)}u\right) \in D.
\end{align*}

Next, suppose that  $C^\perp:=\langle us_1(x),t_0(x)+ut_1(x)\rangle$  is of type LI-$3$.  Using arguments
similar  to those  above, $f_0(x)+uf_1(x)\in C$ and $us_1(x)\in C^\perp$ imply $ us_1(x) \in D.$

Since $ ug_1(x)\in C$ and $t_0(x)+ut_1(x)\in C^\perp$, it follows from Lemma~\ref{lem2.9} that
\begin{align*}
  0&=ug_1(x)\varphi^{-1}(x^{-\deg(t_0(x))}(t_0(x)+ut_1(x))) =ug_1(x)\varphi^{-1}(x^{- \deg(t_0(x))}t_0(x)),
\end{align*}
in $ ( \mathbb{F}_{p^m}+u\mathbb{F}_{p^m}) [x;\Theta]/\langle x^n-\lambda \rangle$. Thus $
g_1(x)\varphi^{-1}(x^{- \deg(t_0(x))}t_0(x)) =0,$  and hence, in $( \mathbb{F}_{p^m}+u\mathbb{F}_{p^m})
[x;\Theta]$,
\begin{align}
g_1(x)\varphi^{-1}(x^{
                             -\deg(t_0(x))}t_0(x))&=l_1(x)(x^n-\lambda)  =(x^n-\lambda) l_1(x), \label{eq-deg}
\end{align}
{for some } $l_1(x)\in
                             \mathbb{F}_{p^m}[x;\theta].$ Note
                             that
\begin{align}\label{eq:deq-t0}\deg( t_0(x))=n+\deg(l_1(x))-\deg(g_1(x)).\end{align}
In the notation of (\ref{eq:4.2-frac-xn-a-g1}),  the left cancellation of (\ref{eq-deg}) by  $g_1(x)$ implies
\begin{align}\label{eq:4.16}
\varphi^{-1}(x^{-
                             \deg(t_0(x))}t_0(x))=\displaystyle\frac{x^n-\lambda}{g_1(x)} l_1(x) ,
\end{align}
 and hence
\begin{align}
t_0(x)  &=x^{ \deg(t_0(x))}\varphi\left(\frac{x^n-\lambda}{g_1(x)} l_1(x)\right)=x^{ \deg(t_0(x))}\varphi(
        l_1(x))\varphi\left(\frac{x^n-\lambda}{g_1(x)}\right).\label{eq:t0}
\end{align}
By Lemma~\ref{lem2.9}, in $ ( \mathbb{F}_{p^m}+u\mathbb{F}_{p^m}) [x;\Theta]/\langle x^n-\lambda \rangle$,
\begin{align*}
  0&=(f_0(x)+uf_1(x))\varphi^{-1}(x^{-\deg(t_0(x))}(t_0(x)+ut_1(x)))\\
   &=f_0(x)\varphi^{-1}(x^{-\deg(t_0(x))}t_0(x))+f_0(x)\varphi^{-1}(x^{-\deg(t_0(x))}ut_1(x))\\
   &~~~~~~~~~~~~~~~~~~~~~~~~~~~~~~~~~~~~~~+uf_1(x)\varphi^{-1}(x^{-\deg(t_0(x))}t_0(x))\\
   &=f_0(x)\frac{x^n-\lambda}{g_1(x)} l_1(x)+f_0(x)\varphi^{-1}(x^{-\deg(t_0(x))}ut_1(x))+uf_1(x)\frac{x^n-\lambda}{g_1(x)} l_1(x)\\
   &~~~~~~~~~~~~~~~~~~~~~~~~~~~~~~~~~~~~~~(\text{using  (\ref{eq:4.16})})\\
   &=\frac{f_0(x)}{g_1(x)}(x^n-\lambda) l_1(x)+f_0(x)\varphi^{-1}(x^{-\deg(t_0(x))}ut_1(x))+ f_0(x)u
   m(x)l_1(x)\\
   &~~~~~~~~~~~~~~~~~~~~~~~~~~~~~~~~~~~~~~(\text{using (\ref{eq:frac-skewR}),  (\ref{eq:4.2-frac-xn-a-g1}) and (\ref{eq:4.5})})\\
   &=\frac{f_0(x)}{g_1(x)} l_1(x)(x^n-\lambda)+f_0(x)\left(\varphi^{-1}(x^{-\deg(t_0(x))}ut_1(x))+ um(x)l_1(x)\right)  \\
&=f_0(x)(\varphi^{-1}(x^{-\deg(t_0(x))}ut_1(x))+ um(x)l_1(x)).\\
\end{align*}
Then there exists $l_2(x)\in \mathbb{F}_{p^m}[x;\theta]$ such that, in $( \mathbb{F}_{p^m}+u\mathbb{F}_{p^m})
  [x;\Theta],$
\begin{align}
  f_0(x)(\varphi^{-1}(x^{-\deg(t_0(x))}ut_1(x))+
  um(x)l_1(x))&= ul_2(x)(x^n-\lambda)\notag \\&
                      =(x^n-\lambda)ul_2(x).\label{eq-deg2}
\end{align}
Using (\ref{eq:deg-m}),  (\ref{eq:deq-t0}) and the fact that $\deg(f_0(x))>\deg(f_1(x))$, we conclude that
\begin{align}\label{eq:4.19}\deg(m(x)l_1(x))\leq
\deg(m(x))+\deg(l_1(x))<\deg(t_0(x)) .
\end{align}
Hence, from (\ref{eq-deg2}) and (\ref{eq:4.19}),
\begin{align}\label{eq:deg-t0-2}\deg(t_0(x))=n+ \deg(l_2(x))-\deg(f_0(x)).\end{align}
The left cancellation of (\ref{eq-deg2}) by  $f_0(x)$ implies
\begin{align*}
   \varphi^{-1}(x^{-\deg(t_0(x))}ut_1(x))+
  um(x)l_1(x)= \frac{x^n-\lambda}{f_0(x)}ul_2(x) .
\end{align*}
Hence $
  \varphi^{-1}(x^{-\deg(t_0(x))}ut_1(x)) = \displaystyle\frac{x^n-\lambda}{f_0(x)}ul_2(x) -
  um(x)l_1(x),
$ i.e.,
\begin{align}\label{eq:ut1}
ut_1(x)=x^{ \deg(t_0(x))}\varphi\left(\frac{x^n-\lambda}{f_0(x)}ul_2(x) -
  um(x)l_1(x)\right).
\end{align}
Therefore,
\begin{align*}
t_0(x)+ut_1(x)&=x^{ \deg(t_0(x))}\varphi(
l_1(x))\varphi\left(\frac{x^n-\lambda}{g_1(x)}\right) \\
&~~~ +x^{ \deg(t_0(x))}\varphi\left(\frac{x^n-\lambda}{f_0(x)}ul_2(x)
-um(x)l_1(x)\right)~(\text{using (\ref{eq:t0}) and (\ref{eq:ut1})})\\
   &=x^{ \deg(t_0(x))}\varphi(
l_1(x))\varphi\left(\frac{x^n-\lambda}{g_1(x)}\right) -x^{ \deg(t_0(x))}\varphi(l_1(x))\varphi(
  um(x)) \\
&~~~+x^{
\deg(t_0(x))}\varphi\left(\frac{x^n-\lambda}{f_0(x)}ul_2(x)\right)\\
  &=x^{ n+\deg(l_1(x))-\deg(g_1(x))}\varphi(
l_1(x)) \varphi\left(\frac{x^n-\lambda}{g_1(x)}  -
  um(x) \right) \\
&~~~+x^{n+ \deg(l_2(x))-\deg(f_0(x))}\varphi(l_2(x))\varphi(\frac{x^n-\lambda}{f_0(x)}u)~~~ (\text{using (\ref{eq:deq-t0}) and (\ref{eq:deg-t0-2})})\\
 &=x^{\deg(l_1(x))} {\overrightarrow{\varphi(
l_1(x))}}^{x^{ n -\deg(g_1(x))}}  x^{ n -\deg(g_1(x))}\varphi\left( \frac{x^n-\lambda}{g_1(x)} -
  um(x) \right) \\
&~~~+x^{\deg(l_2(x))} {\overrightarrow{\varphi(l_2(x))}}^{x^{n -\deg(f_0(x))
  }} x^{n
-\deg(f_0(x))
  }\varphi(\frac{x^n-\lambda}{f_0(x)}u)\in D.
\end{align*}
As desired, $C^\perp\subseteq D$. \hfill $\square$


\subsection{Hermitian Dual  Codes of Skew Cyclic and Skew Negacyclic Codes over
$\mathbb{F}_{p^m}+u\mathbb{F}_{p^m}$}

We assume that the order of $\Theta$ is $2$ and determine the structure of the Hermitian dual codes of skew
cyclic and skew negacyclic codes in terms of their  unique representative generators,
  the ring anti-monomorphism
$\varphi$ defined in Proposition~\ref{prop2.7}  and the ring automorphism $\phi$  defined in (\ref{eq-lem-phi}).
Using Lemma \ref{lem2.10} and arguments similar to those in the previous subsection, the next theorem follows.

\begin{theorem}\label{thm4.6}
Let  $\lambda\in \{1,-1\}$ and let $\Theta$ be an automorphism of order   $2$. Then the Hermitian dual code of a
left ideal  in $(\mathbb{F}_{p^m}+u\mathbb{F}_{p^m})[x;\Theta]/\langle x^n-\lambda\rangle $ is again a left
ideal in $(\mathbb{F}_{p^m}+u\mathbb{F}_{p^m})[x;\Theta]/\langle x^n-\lambda\rangle $ determined as   follows:
\begin{enumerate}[\textnormal{LI}-$1^{\perp_H}.$]
  \item If  $C=\langle  g_0(x)+ug_1(x)\rangle $,  then $C^{\perp_H}=\langle  \phi(x^{n-\deg(g_0(x))}\varphi\left(
  \displaystyle\frac{x^n-\lambda}{ g_0(x)+ug_1(x)}\right))\rangle
  $.
   \item If $C=\langle  ug_1(x) \rangle$,  then   $C^{\perp_H}=\langle u, \phi (x^{n-\deg( g_1(x))} \varphi\left(
     \displaystyle\frac{x^n-\lambda}{ g_ 1(x)} \right)) \rangle
  .$ %
  \item If $C=\langle  ug_1(x), f_0(x)+uf_1(x) \rangle$, then there exists $m(x)\in \mathbb{F}_{p^m}[x;\theta] $ such that $m(x)g_1(x)= {\overleftarrow{\left(\displaystyle\frac{x^n-\lambda}{ f_0(x) }\right)}}^u f_1(x)$ and  $$C^{\perp_H}=\langle \phi ( x^{n-\deg( f_0(x))}\varphi\left(
   \displaystyle\frac{x^n-\lambda}{ f_0(x)}u\right)), \phi(x^{n-\deg( g_1(x))} \varphi\left(
     \displaystyle\frac{x^n-\lambda}{ g_ 1(x)}-um(x)\right))\rangle
  .$$

\end{enumerate}
 \end{theorem}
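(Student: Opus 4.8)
The plan is to avoid redoing the long case analysis by reducing the Hermitian statement to the Euclidean one of Theorem~\ref{thm4.5} through the automorphism $\phi$. The key observation is that, identifying a vector $v\in(\mathbb{F}_{p^m}+u\mathbb{F}_{p^m})^n$ with its skew polynomial representative of degree less than $n$, the map $\phi$ acts on $v$ coordinatewise by $\Theta$, so for any $u,v$ one has $\langle u,v\rangle_H=\sum_i u_i\Theta(v_i)=\langle u,\phi(v)\rangle$. Hence a vector $v$ lies in $C^{\perp_H}$ if and only if $\langle u,\phi(v)\rangle=0$ for all $u\in C$, i.e. if and only if $\phi(v)\in C^\perp$. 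Thus $C^{\perp_H}=\phi^{-1}(C^\perp)$, and since the order of $\Theta$ is $2$ we have $\phi^2=\mathrm{id}$, so in fact
\begin{align}\label{eq:phi-dual}
C^{\perp_H}=\phi(C^\perp).
\end{align}

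Before exploiting \eqref{eq:phi-dual} I would verify that $\phi$ descends to a ring automorphism of $(\mathbb{F}_{p^m}+u\mathbb{F}_{p^m})[x;\Theta]/\langle x^n-\lambda\rangle$. Because $\lambda\in\{-1,1\}\subseteq\mathbb{F}_{p^m}$ is fixed by $\Theta$, we have $\phi(x^n-\lambda)=x^n-\Theta(\lambda)=x^n-\lambda$, so $\phi$ preserves $\langle x^n-\lambda\rangle$ and induces a ring automorphism on the quotient. In particular $\phi$ carries left ideals to left ideals and a generating set $\{a(x),b(x)\}$ to $\{\phi(a(x)),\phi(b(x))\}$; together with Lemma~\ref{lem2.pre10} this confirms that $C^{\perp_H}$ is again a skew $\lambda$-constacyclic code, and \eqref{eq:phi-dual} lets us transport Theorem~\ref{thm4.5} through $\phi$ term by term.

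Applying $\phi$ to each case of Theorem~\ref{thm4.5} then yields the three stated formulas. The generators of $C^\perp$ all have the shape $x^{n-\deg(\cdot)}\varphi(\cdot)$, and $\phi$ reproduces exactly the generators $\phi(x^{n-\deg(\cdot)}\varphi(\cdot))$ claimed in LI-$1^{\perp_H}$, LI-$2^{\perp_H}$ and LI-$3^{\perp_H}$; the polynomial $m(x)$ in the LI-$3$ case is the same one furnished by Proposition~\ref{prop-mx}, hence unchanged. The only point needing care is the generator $u$ appearing in the LI-$2$ and LI-$3$ duals: here $\phi(u)=\Theta(u)=\beta u$ with $\beta\in\mathbb{F}_{p^m}^*$ a unit, so $\langle\beta u\rangle=\langle u\rangle$ and the generator $u$ is recovered unchanged.

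An alternative, presumably the route hinted at just before the statement, is to repeat the ring-theoretic computations from the proofs of LI-$2^\perp$ and LI-$3^\perp$ verbatim, since equations such as \eqref{eq:ob1}--\eqref{eq:ob4} are identities in the ring and do not depend on the choice of inner product, and then invoke Lemma~\ref{lem2.10} in place of Lemma~\ref{lem2.9}. As item $i)$ of Lemma~\ref{lem2.10} differs from that of Lemma~\ref{lem2.9} only by an inserted $\phi$, this automatically decorates every generator with $\phi$ and gives the same answer. I expect the main obstacle in either route to be purely bookkeeping: checking that the exponents $n-\deg(\cdot)$ and the relative placement of $\phi$ and $\varphi$ line up exactly as written. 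The identity \eqref{eq:phi-dual} makes this transparent, whereas the direct computational route makes it somewhat laborious.
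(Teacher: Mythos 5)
Your proposal is correct, and your main route is genuinely different from the paper's. The paper disposes of this theorem in a single sentence: it invokes Lemma~\ref{lem2.10} in place of Lemma~\ref{lem2.9} and asserts that the computations in the proofs of LI-$2^\perp$ and LI-$3^\perp$ go through unchanged --- exactly your ``alternative route.'' Your primary argument instead derives the theorem formally from Theorem~\ref{thm4.5} via the identity $C^{\perp_H}=\phi(C^\perp)$, using that $\phi$ is a ring automorphism of $(\mathbb{F}_{p^m}+u\mathbb{F}_{p^m})[x;\Theta]$ fixing $x^n-\lambda$ (hence descending to the quotient and carrying left ideals to left ideals with $\phi$ of the generators as generators), that $\phi^2=\mathrm{id}$ when $\Theta$ has order $2$, and that $\phi(u)=\beta u$ generates the same left ideal as $u$. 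Two small points deserve a remark in a write-up. First, the paper defines $C^{\perp_H}$ by the condition $\langle v,c\rangle_H=0$ (dual vector in the first slot), whereas your equivalence is phrased with $C$ in the first slot; these agree because $\Theta\bigl(\sum_i v_i\Theta(c_i)\bigr)=\sum_i c_i\Theta(v_i)$ in a commutative ring with $\Theta^2=\mathrm{id}$, so Hermitian orthogonality is symmetric --- worth one line. Second, the polynomial $m(x)$ is defined purely from the data of $C$ (Proposition~\ref{prop-mx}), so it is indeed untouched by $\phi$, as you say. What your route buys is substantial: it replaces the page-long LI-$2^\perp$/LI-$3^\perp$ computations (which the paper asks the reader to redo mentally with Lemma~\ref{lem2.10}) by a transport argument that also explains \emph{why} every Hermitian generator is exactly the Euclidean one decorated with $\phi$; in effect you supply a complete proof where the paper only sketches one. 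The paper's route, by contrast, is self-contained at the level of the orthogonality lemmas and does not require observing that $\phi$ intertwines the two inner products, but it is pure bookkeeping.
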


\begin{example}\label{ex-last}
Table \ref{tab:dual} shows the Euclidean and Hermitian dual codes of the left  ideals in
$(\mathbb{F}_{3}+u\mathbb{F}_{3})[x;\Theta_{{\rm id},2}]/\langle x^2-1\rangle$ classified in Example
\ref{ex4.2}.
  The dual codes are obtained via Theorems
\ref{thm4.5} and \ref{thm4.6} and   rewritten to satisfy  the representation in Proposition \ref{prop4.1}. The
subscripts $1$, $2$ and $3$ indicate types LI-$1$, LI-$2$ and LI-$3$, respectively.
\begin{table}[hbt]
  \caption{The left ideals in $(\mathbb{F}_3+u\mathbb{F}_3)[x;\Theta_{{\rm id},2}]/\langle x^2-1\rangle$ and their Euclidean and Hermitian dual codes}\label{tab:dual}
  \centering
  \begin{tabular}{lll}
    \hline
    \hline
    $C$ ~~~~~~~~~~~~~~~~~~~  & $C^\perp$ ~~~~~~~~~~~~~~~~~~~ & $C^{\perp_H}$ ~~~~~~~~~~~~  \\
    \hline
    $\langle 0\rangle_1$        &   $\langle 1\rangle_1$    &   $\langle 1\rangle_1$  \\
    $\langle u(x+1)\rangle_2$   &$\langle u, x+2  \rangle_3$&$\langle u, x+2  \rangle_3$\\
    $\langle u(x+2)\rangle_2$   &$\langle u, x+1  \rangle_3$&$\langle u, x+1  \rangle_3$\\
    $\langle u \rangle_2$       &$\langle u \rangle_2$&$\langle u \rangle_2$\\
    $\langle x+1+2u \rangle_1$  &$\langle x+2+2u  \rangle_1$&$\langle x+2+u  \rangle_1$\\
    $\langle x+1+u \rangle_1$   &$\langle x+2+u  \rangle_1$&$\langle x+2+2u  \rangle_1$\\
    $\langle x+1  \rangle_1$    &$\langle x+2  \rangle_1$&$\langle x+2  \rangle_1$\\
    $\langle x+2  \rangle_1$    & $\langle x+1  \rangle_1$ & $\langle x+1  \rangle_1$ \\
    $\langle x+2+u  \rangle_1$  &$\langle x+1+u \rangle_1$&$\langle x+1+2u \rangle_1$\\
    $\langle x+2+2u  \rangle_1$ &$\langle x+1+2u \rangle_1$&$\langle x+1+u \rangle_1$\\
    $\langle u, x+1  \rangle_3$ &$\langle u(x+2)\rangle_2$ &$\langle u(x+2)\rangle_2$ \\
    $\langle u, x+2  \rangle_3$ &$\langle u(x+1)\rangle_2$&$\langle u(x+1)\rangle_2$\\
    $\langle 1\rangle_1$        &$\langle 0\rangle_1$ &$\langle 0\rangle_1$ \\
    \hline
  \end{tabular}
\end{table}
\end{example}

\vspace{-1cm}

\section{Conclusion}
The concept of coding with skew polynomial rings over finite fields \cite{BoGe2007} and \cite{BoUl2009} and over
Galois rings \cite{BoSoUm2008}   is extended to the case over finite chain rings. Given an automorphism $\Theta$
of a finite chain ring $\mathcal{R}$ and a unit $\lambda$ in $\mathcal{R}$, $\Theta$-$\lambda$-constacyclic
codes are introduced. Under the assumptions that $\lambda$ is a unit fixed by $\Theta$ and the length $n$ of
codes is a multiple of the order of $\Theta$, $\Theta$-$\lambda$-constacyclic codes can be viewed as left ideals
in the quotient ring $\mathcal{R}[x;\Theta]/\langle x^n-\lambda\rangle$. In particular, when the code is
generated by a monic right divisor $g(x)$ of $x^n-\lambda$, its properties are exhibited. When $\lambda^2=1$,
the generators of its Euclidean and Hermitian dual codes are given in terms of
$h(x):=\displaystyle\frac{x^n-\lambda}{g(x)}$. Moreover, necessary and sufficient conditions for a
$\Theta$-$\lambda$-constacyclic code to be Euclidean and Hermitian self-dual are provided.

A typical example of a finite chain ring is $\mathbb{F}_{p^m}+u\mathbb{F}_{p^m}+\dots+ u^{e-1}\mathbb{F}_{p^m}$.
In the case $e=2$, a complete classification of $\Theta$-$\lambda$-constacyclic codes over the ring
$\mathbb{F}_{p^m}+u\mathbb{F}_{p^m}$ is given. For the special case when $\lambda=\pm 1$, the classification
provides generators of the Euclidean and Hermitian dual codes of skew cyclic and skew negacyclic codes  based on
generators of the original codes. Moreover, an illustration of all skew cyclic codes of length~$2$ over
$\mathbb{F}_{3}+u\mathbb{F}_{3}$ and their Euclidean and Hermitian dual codes is also provided.

For   further work, using the idea in \cite{BoUl2009-module}, constructions and classification of skew
constacyclic codes over finite chain rings could be considered   as modules over the skew polynomial ring
$\mathcal{R}[x;\Theta]$. This may lead to classification of codes of arbitrary lengths and constructions of more
codes with good parameters.


\medskip
Received xxxx 20xx; revised xxxx 20xx.
\medskip

\end{document}